\theoremstyle{plain} 
    \newtheorem{theorem}{Theorem}
    \newtheorem{lemma}[theorem]{Lemma}
    \newtheorem*{lemma*}{Lemma}
    \newtheorem{proposition}[theorem]{Proposition}
\theoremstyle{definition}
    \newtheorem{definition}[theorem]{Definition}
\theoremstyle{remark}
    \newtheorem{remark}[theorem]{Remark}
\renewcommand{\mathbf}{\boldsymbol}
\renewcommand{\varepsilon}{\epsilon}
\renewcommand{\ge}{\geqslant}
\renewcommand{\geq}{\geqslant}
\renewcommand{\le}{\leqslant}
\renewcommand{\leq}{\leqslant}
\renewcommand{\tilde}{\widetilde}
\newcommand*{\set}[1]{\mathcal{#1}} 
\newcommand*{\fnc}[1]{\mathrm{#1}} 
\newcommand*{\rv}[1]{\mathsf{#1}} 
\newcommand*{\rvs}[1]{\mathbf{\mathsf{#1}}} 
\newcommand*{\sys}[1]{\mathsf{#1}} 
\newcommand*{\syss}[1]{\mathbf{\mathsf{#1}}} 
\newcommand\diff{\mathop{}\!\mathrm{d}} 
\newcommand*{\tensor}{\otimes}
\newcommand{\hilbert}{\mathcal{H}}
\newcommand{\uniform}{\mathcal{U}} 
\newcommand{\bid}{\mathbbm{1}}
\newcommand{\id}{\fnc{id}}
\newcommand{\reals}{\mathbb{R}}
\newcommand{\integers}{\mathbb{Z}}
\newcommand{\defeq}{\coloneqq}
\newcommand{\defas}{\eqqcolon}
\newcommand{\mle}{\preccurlyeq}
\DeclarePairedDelimiterX{\ket}[1]{\lvert}{\rangle}{#1}
\DeclarePairedDelimiterX{\bra}[1]{\langle}{\rvert}{#1}
\DeclarePairedDelimiterX{\spr}[2]{\langle}{\rangle}{#1\delimsize\vert#2}
\DeclarePairedDelimiterX{\proj}[1]{\lvert}{\rvert}{#1\delimsize\rangle\!\delimsize\langle#1}
\DeclarePairedDelimiterX{\floor}[1]{\lfloor}{\rfloor}{#1}
\DeclarePairedDelimiterX{\ceil}[1]{\lceil}{\rceil}{#1}
\DeclarePairedDelimiterX{\abs}[1]{\lvert}{\rvert}{#1}
\DeclarePairedDelimiterX{\norm}[1]{\lVert}{\rVert}{#1}
\DeclarePairedDelimiterX{\size}[1]{\lvert}{\rvert}{#1}
\DeclarePairedDelimiterX{\infdiv}[2]{(}{)}{#1\delimsize\Vert#2}
\DeclarePairedDelimiterX{\inner}[2]{\langle}{\rangle}{#1,#2}
\NewDocumentCommand{\multiadjustlimits}{m}
 {
  \group_begin:
  \multiadjustlimits_measure:n { #1 }
  \multiadjustlimits_print:n { #1 }
  \group_end:
 }
\newcommand\ie{\textit{i.e.}}
\newcommand\eg{\textit{e.g.}}
\newcommand\wrt{w.r.t.~}
\newcommand\aka{a.k.a.~}
\begin{document}
\title{Exponents for Shared Randomness-Assisted Channel Simulation}

\author[1]{Aadil Oufkir}
\author[1]{Michael X. Cao}
\author[2,3,4]{Hao-Chung Cheng}
\author[1]{Mario Berta}
\affil[1]{\small{Institute for Quantum Information, RWTH Aachen University, Aachen, Germany}}
\affil[2]{\small{Department of Electrical Engineering, National Taiwan University, Taipei, Taiwan}}
\affil[3]{\small{National Center for Theoretical Sciences, Taipei, Taiwan}}
\affil[4]{\small{Hon Hai (Foxconn) Quantum Computing Center, New Taipei City, Taiwan}}

\maketitle
\begin{abstract}
We determine the exact error and strong converse exponents of shared randomness-assisted channel simulation in worst case total-variation distance.
Namely, we find that these exponents can be written as simple optimizations over the Rényi channel mutual information.
Strikingly, and in stark contrast to channel coding, there are no critical rates, allowing a tight characterization for arbitrary rates below and above the simulation capacity.
We derive our results by asymptotically expanding the meta-converse for channel simulation [Cao {\it et al.}, IEEE Trans.~Inf.~Theory (2024)], which corresponds to non-signaling assisted codes. We prove this to be asymptotically tight by employing the approximation algorithms from [Berta {\it et al.}, Proc.~IEEE ISIT (2024)], which show how to round any non-signaling assisted strategy to a strategy that only uses shared randomness. Notably, this implies that any additional quantum entanglement-assistance does not change the error or the strong converse exponents.
\end{abstract}
\section{Introduction}
Channel simulation is a fundamental task in information theory,  where the goal is to replicate the behavior of a noisy communication channel using alternative resources such as noiseless communication and shared randomness.
This task is pivotal in understanding the limits of communication systems and is one of the two extreme cases in the broader problem of channel interconversion (see, e.g.,\cite{sudan2019communication}), with the other being Shannon’s channel coding problem \cite{shannon1948mathematical}.
While the reverse Shannon theorem \cite{bennett2002entanglement, bennett2014quantum} characterized the asymptotic rate of communication cost required to simulate a noisy channel, recent research has shifted toward analyses in the finite blocklength regime, focusing on various deviation regimes~\cite{fang2019quantum, ramakrishnan2023moderate}.
In particular, asymptotic communication rates in the small- and moderate-deviation regimes have been investigated in~\cite{cao2024channel}.

In this paper, we extend the study of randomness-assisted channel simulation to the large deviation regime, where the communication rate $r$ deviates from the channel capacity by some constant.
When the rate $r$ exceeds the channel capacity, the distortion in the total variation distance (TVD) between the simulated and the target channels diminishes exponentially fast as the number of channels grows. 
Conversely, when $r$ is below capacity, the distortion tends to $1$ exponentially. 
In both cases, the rate of convergence can be characterized asymptotically by the Rényi information of the channel.
Specifically, we derive the error exponent and the strong converse exponent as follows (see Theorems~\ref{thm:NS:EE},~\ref{thm:NS:SCE},~\ref{thm:SR:EE} and~\ref{thm:SR:SCE}):
\begin{align}
    \nonumber
    \lim_{n\to \infty} -\frac{1}{n} \log\epsilon^\fnc{SR}(\floor{e^{nr}},W_{\rv{Y}|\rv{X}}^{\tensor n})
    &= \lim_{n\to \infty} -\frac{1}{n} \log\epsilon^\fnc{NS}(\floor{e^{nr}},W_{\rv{Y}|\rv{X}}^{\tensor n}) \\
    \label{eq:SR:EE}
    &= \sup_{\alpha\geq 0} \; \alpha\cdot\left( r - I_{\alpha+1}(W_{\rv{Y}|\rv{X}}) \right)\\
    \nonumber
    \lim_{n\to \infty} -\frac{1}{n} \log\left(1-\epsilon^\fnc{SR}(\floor{e^{nr}},W_{\rv{Y}|\rv{X}}^{\tensor n})\right)
    &= \lim_{n\to \infty} -\frac{1}{n} \log\left(1-\epsilon^\fnc{NS}(\floor{e^{nr}},W_{\rv{Y}|\rv{X}}^{\tensor n})\right) \\
    \label{eq:SR:SCE}
    &= \sup_{\alpha\in[0,1]} \; (1-\alpha)\cdot\left( -r + I_\alpha(W_{\rv{Y}|\rv{X}}) \right),
\end{align}
where $\epsilon^\fnc{SR}(\floor{e^{nr}},W_{\rv{Y}|\rv{X}}^{\tensor n})$ and $\epsilon^\fnc{NS}(\floor{e^{nr}},W_{\rv{Y}|\rv{X}}^{\tensor n})$ denote the minimal achievable distortions in TVD for simulating $W_{\rv{Y}|\rv{X}}^{\tensor n}$ using a noiseless channel with alphabet size at most $\floor{e^{nr}}$ with shared-randomness assistance and non-signaling assistance, respectively; and $I_\alpha(W_{\rv{Y}|\rv{X}})$ is the Rényi channel mutual information of order $\alpha$, \ie,
\begin{equation}
    I_\alpha(W_{\rv{Y}|\rv{X}}) \defeq \adjustlimits\sup_{p_\rv{X}\in\set{P}(\set{X})} \inf_{q_\rv{Y}\in\set{P}(\set{Y})} D_{\alpha}\infdiv*{p_\rv{X}\cdot W_{\rv{Y}|\rv{X}}}{p_\rv{X}\times q_\rv{Y}}
\end{equation}
for all $\alpha\in [0,\infty)$.
Here, $D_\alpha$ is the Rényi divergence of order $\alpha$, defined as
\begin{equation}\label{eq:def:renyi}
    D_\alpha\infdiv*{p}{q} \defeq \frac{1}{\alpha-1} \log\left(\sum_{x\in\set{X}} p(x)^{\alpha}q(x)^{1-\alpha}\right)
    \qquad \forall \alpha \ge 0.
\end{equation}
Our results for randomness-assisted channel simulation are derived indirectly by first analyzing the exponents for channel simulation assisted by a stronger resource of the non-signaling correlation, then connecting the non-signaling and randomness-assisted scenarios via rounding techniques~\cite{berta2024optimality}.
We commence the analysis by establishing an elegant one-shot characterization for the non-signaling simulation error (Proposition~\ref{prop:epsilon:NS:exact}).
Then, the lower bound to the non-signaling error exponent is derived via Markov's inequality, while the upper bound relies on the de Finetti reduction and the method of types.
The lower bound to the non-signaling strong converse exponent is based on the Chernoff bound, while the upper bound relies on the method of types and certain continuity arguments.
We note that in contrast to channel coding, the proofs of the converse of the error exponent and the achievability of the strong converse exponent for channel simulation are significantly more involved than the other directions.

Interestingly, the error exponent and the strong converse exponent remain unchanged whether the shared resources between the sender and receiver consist of classical randomness or a potent non-signaling resource.
Consequently, this invariance applies to any shared resources that exist ``between'' SR and NS, namely any shared mechanisms capable of producing shared randomness and that can be generated by non-signaling resources.
A specific example of this is unlimited shared entanglement in quantum theory.

The remainder of this paper is organized as follows. 
In Section~\ref{sec:related}, we discuss related works to channel simulation.
In Section~\ref{sec:exact}, we introduce a rigorous formulation for channel simulation and derive an exact expression for the minimal distortion in TVD for non-signaling channel simulation with a fixed message size constraint based on previous work~\cite{cao2024channel}.
In Section~\ref{sec:EE}, we study the error exponent for non-signaling channel simulation when the rate is above the capacity.
In Section~\ref{sec:SCE}, we study the strong converse exponent for non-signaling channel simulation when the rate is below the capacity.
In Section~\ref{sec:SR}, we explore the connection between the exponents of channel simulation under the non-signaling scenario and the shared randomness-assisted scenario.
Section~\ref{sec:conclusion} concludes the paper.


\subsection{Related Works}\label{sec:related}
Channel simulation is related to, and sometimes confused with, a range of tasks often collectively known as the ``distributed generation of correlations'', with prominent examples including Wyner's common information~\cite{wyner1975common}, strong coordination~\cite{cuff2010coordination,cuff2013distributed}, and channel synthesis~\cite{winter2002compression,harsha2010communication}.
These tasks focus on simulating a discrete memoryless channel with a \emph{fixed} independent and identically distributed (i.i.d.) input distribution, which then amounts to generating a specific input-output joint distribution.
Unlike the previous works, the present paper considers channel simulation agnostic to \emph{arbitrary} (possibly non-i.i.d.) input distributions and aims to recreate the conditional probability (\ie, the channel itself).

The study of the error exponent of some of the aforementioned tasks is available in the literature.
Particular examples include the study of Wyner's common information and channel synthesis with a fixed input, which boil down to approximating a specific channel output as a target.
A key tool to that end is via employing a random or deterministic codebook of limited size at channel input and mixing the corresponding channel outputs.
Those procedures are conventionally termed \emph{soft covering lemma}~\cite{wyner1975common} and \emph{channel resolvability}~\cite{han1993approximation, hayashi2006general}, respectively.
The exact error exponents for soft covering under TVD and relative entropy were derived in~\cite[Theorems 1]{yagli2019exact} and~\cite[Theorem 4]{parizi2016exact}, respectively.
We remark that those quantities are expressed in terms of the R\'enyi divergence of orders between $1$ and $2$.
However, the established error exponent for randomness-assisted channel simulation, \ie,~\eqref{eq:SR:EE}, is for all orders greater than $1$.
Hence, the error exponent for soft covering alone might not be sufficient to imply our result in~\eqref{eq:SR:EE}.

Recently, Ref.~\cite{li2021reliable} proved the error exponent for simulating quantum channels with free entanglement under the so-called purified distance for the low rate regime (\ie, rates above channel capacity and below the critical rate).
In the special case where the underlying quantum channel is classical, the above result does not imply \eqref{eq:SR:EE} because purified distance is not equal to TVD even in the commuting setting. Using Fuchs–van de Graaf inequality \cite{fuchs1999cryptographic}, we can deduce the converse error exponent of Theorem \ref{thm:SR:EE} from the converse error exponent in purified distance \cite{li2021reliable}. However, our converse error exponent is non-asymptotic with polynomial prefactor (see Proposition \ref{prop:EE-converse}) whereas the converse error exponent of \cite{li2021reliable} is asymptotic. 
Ref.~\cite{Yao24} improves upon Ref.~\cite{li2021reliable} and establishes the error exponent and strong converse exponent for simulating classical-quantum channels under the purified distance for all rates. The strong converse exponent of \cite{Yao24} is of different nature, \ie,  it involves an optimization over $\alpha\in [\frac{1}{2},1]$ and a prefactor $\frac{1-\alpha}{\alpha}$ instead of $\alpha\in [0,1]$ and $(1-\alpha)$, respectively. 

\subsection{Notations}

We use the following conventions and notation throughout this paper.
\begin{itemize}
    \item The logarithm (denoted $\log$) in this paper is nature-based and information is measured in nats.
    \item The set of integers between $1$ and an $M$ is denoted $[M]$, \ie, $[M]\defeq\{1,\ldots, M\}$. 
    \item Sets are denoted by calligraphic fonts, \eg, $\set{X}$ reads ``set $\set{X}$''.
    \item Random variables are denoted in sans serif fonts, \eg, $\rv{X}$ reads ``random variable $\rv{X}$''.
    \item Vectors are denoted in boldface letters \eg, $\mathbf{x}$, and $\rvs{X}$.
    In particular, we use the subscript and superscript to denote the starting and ending indexes of a vector.
    Namely, $\mathbf{x}_1^n\equiv (x_1, \ldots, x_n)$, and $\rvs{X}_1^n\equiv (\rv{X}_1,\ldots, \rv{X}_n)$.
    \item Given a discrete random variable $\rv{X}$, we denote its probability mass function (pmf) by $p_\rv{X}$.
    \item Given a finite set $\set{X}$, $\set{P}(\set{X})$ denotes the set of all pmfs on $\set{X}$.
    Given another finite set $\set{Y}$, $\set{P}(\set{Y}|\set{X})$ denotes the set of all conditional pmfs on $\set{Y}$ conditioned on $\set{X}$.
    \item For any joint pmf $p_{\rvs{A}_1^n}\in\set{P}(\set{A}^n)$ of $n$ random variables, each with the same alphabet, $p_{\rvs{A}_1^n}$ is said to be permutation-invariant if and only if 
    \begin{equation}
        p_{\rvs{A}_1^n}(a_1,\ldots, a_n) = p_{\rvs{A}_1^n}(a_{\pi_1}, \ldots, a_{\pi_n})
    \end{equation}
    for \emph{all} permutations $\mathbf{\pi}$ of $\{1,\ldots, n\}$.
\end{itemize}


\section{Channel Simulation and the Exact Distortion in TVD}\label{sec:exact}
In this section, we formally define the tasks of channel simulation and derive a one-shot expression of the minimal distortion in TVD for non-signaling channel simulation with constrained message size.
Given a channel $W_{\rv{Y}|\rv{X}}\in\set{P}(\set{Y}|\set{X})$, where $\set{X}$ and $\set{Y}$ are some finite sets, the goal of channel simulation is to construct another channel $\tilde{W}_{\rv{Y}|\rv{X}}\in\set{P}(\set{Y}|\set{X})$ \emph{approximating} $W_{\rv{Y}|\rv{X}}$ using a noiseless channel (\ie, an identity channel) with a \emph{small} alphabet size.
Naturally, one prefers a higher quality of the approximation.
More specifically, in this paper, we measure the distortion of the simulation in TVD and aim to minimize the following quantity as much as possible:
\begin{align}
\epsilon = \norm*{W_{\rv{Y}|\rv{X}}-\tilde{W}_{\rv{Y}|\rv{X}}}_\fnc{TVD} &\defeq \max_{x\in\set{X}}\; \norm*{W_{\rv{Y}|\rv{X}}(\cdot|x)-\tilde{W}_{\rv{Y}|\rv{X}}(\cdot|x)}_\fnc{TVD}
\end{align}
where the TVD between two pmf $p$ and $q$ on $\set{X}$ is defined as
\begin{align}
\norm*{p-q}_\fnc{TVD} &\defeq \frac{1}{2}\sum_{x\in\set{X}} \abs*{p(x)-q(x)}.
\end{align}
On the other hand, one would also like to use as little resources as possible for the above task.
Specifically, we would like to minimize or restrict the size of the alphabet $M$ of the identity channel $\id_M(\cdot|\cdot)\defeq \delta_{\cdot|\cdot}\in\set{P}([M]|[M])$ used in the construction of $\tilde{W}_{\rv{Y}|\rv{X}}$. 
Depending on the additional resources (other than the aforementioned $\id_M$), $\tilde{W}_{\rv{Y}|\rv{X}}$ is constructed differently as follows (see Figure~\ref{fig:simulation:task}).
\begin{itemize}
    \item In the non-assisted scenario, we have the simple construction as $\tilde{W}_{\rv{Y}|\rv{X}} \defeq \mathcal{D}\circ\id_M\circ\mathcal{E}$ where $\mathcal{E}\in\set{P}([M]|\set{X})$, $\mathcal{D}\in\set{P}(\set{Y}|[M])$ are some uncorrelated encoder and decoder, respectively.
    \item In the shared randomness-assisted scenario, we allow the encoder and the decoder to be coordinated by some shared random variable, \ie, 
    \begin{equation}
        \tilde{W}_{\rv{Y}|\rv{X}}(y|x) \defeq \sum_{s\in\set{S}} p_\rv{S}(s) \cdot \sum_{i,j\in[M]} \mathcal{D}(y|j,s) \cdot \id_M(j|i) \cdot \mathcal{E}(i|x,s)
    \end{equation}
    where $\mathcal{E}\in\set{P}([M]|\set{X}\times\set{S})$, $\mathcal{D}\in\set{P}(\set{Y}|[M]\times\set{S})$.
    Despite the fact that the shared random variable $\rv{S}$ is also a type of communication resource, we do not constrain $\rv{S}$ (\ie, $p_\rv{S}$ nor $\set{S}$) in this paper.
    Hence, such $\rv{S}$ is often referred to as unconstrained shared randomness.
    \item In the non-signaling-assisted scenario, we further allow the encoder and the decoder to take a form as a joint encoding-decoding map $\mathcal{N}_{\rv{IY}|\rv{XJ}}\in\set{P}([M]\times\set{Y}|\set{X}\times[M])$, and construct $\tilde{W}_{\rv{Y}|\rv{X}}$ as
    \begin{equation}
    \tilde{W}_{\rv{Y}|\rv{X}}(y|x) = \sum_{i,j\in[M]} \mathcal{N}_{\rv{IY}|\rv{XJ}}(i,y|x,j) \cdot \id_M(j|i),
    \end{equation}
    where we only restrict $N$ as a non-signaling map (see, \eg,~\cite{cubitt2011zero, matthews2012linear, fang2019quantum}), \ie
    \begin{align}
    \label{eq:ns:requirement:1}
    \sum_{y\in\set{Y}} N_{\rv{IY}|\rv{XJ}}(i,y|x,j) &= N_{\rv{I}|\rv{X}}(i|x) \qquad\forall j\in [M], \forall i \in [M], \forall x\in \set{X},\\
    \label{eq:ns:requirement:2}
    \sum_{i\in[M]} N_{\rv{IY}|\rv{XJ}}(i,y|x,j) &= N_{\rv{Y}|\rv{J}}(y|j) \qquad\forall x\in\set{X}, \forall y\in \set{Y}, \forall j\in [M].
    \end{align}
\end{itemize}
\begin{figure} 
\centering
  \subfloat[\label{fig:simulation:task:nonassisted}]{%
       \includegraphics[]{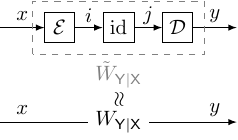}}\hfill
  \subfloat[\label{fig:simulation:task:random}]{%
        \includegraphics[]{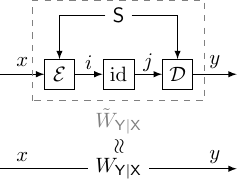}}\hfill
  \subfloat[\label{fig:simulation:task:NS}]{%
        \includegraphics[]{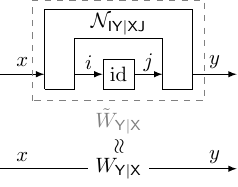}}
  \caption{Various channel simulation schemes: (a) Non-assisted Channel Simulation; (b) Randomness-assisted Channel Simulation; (c) Non-signaling-assisted Channel Simulation.
  (This figure is copied from~\cite{cao2024channel}.)}
  \label{fig:simulation:task} 
\end{figure}
Apparently, not all pairs $(\epsilon,M)\in[0,1]\times\mathbb{N}$ are feasible for the aforementioned tasks; and the study of channel simulation focuses on the trade-off between the accuracy of the simulation $\epsilon$ and its cost $M$.
In this paper, we focus on the minimal distortion $\epsilon$ for a given message size $M$.
For a given channel $W_{\rv{Y}|\rv{X}}\in\set{P}(\set{Y}|\set{X})$ and message size $M\in\mathbb{N}$, we denote the minimum attainable distortions in TVD in the above three scenarios by $\epsilon(M, W_{\rv{Y}|\rv{X}})$, $\epsilon^\fnc{SR}(M, W_{\rv{Y}|\rv{X}})$, and $\epsilon^\fnc{NS}(M, W_{\rv{Y}|\rv{X}})$, respectively.

In the remainder of this section, we focus on an explicit expression of $\epsilon^\fnc{NS}(M, W_{\rv{Y}|\rv{X}})$ as follows.
\begin{proposition}[Minimal distortion in TVD for non-signaling channel simulation] \label{prop:epsilon:NS:exact}
Given a channel $W_{\rv{Y}|\rv{X}}\in\set{P}(\set{Y}|\set{X})$ and a positive integer $M$, the minimal attainable distortion (measured in TVD) of non-signaling simulation codes for $W_{\rv{Y}|\rv{X}}$ with alphabet size at most $M$
 can be expressed as
\begin{align}
    \epsilon^\fnc{NS}(M,W_{\rv{Y}|\rv{X}})
    & = \adjustlimits \inf_{q_{\rv{Y}}\in\set{P}(\set{Y})} \max_{x\in\set{X}}\ 
    \sum_{y\in\set{Y}} \Big(W_{\rv{Y}|\rv{X}} (y|x) - M \cdot q_\rv{Y}(y)\Big)_+ \label{eq:epsilon:NS:exact:1} \\
    & = \adjustlimits \inf_{q_\rv{Y}\in\set{P}(\set{Y})} \sup_{p_\rv{X}\in\set{P}(\set{X})}\ 
    \sum_{x\in\set{X}, y\in\set{Y}}
    \Big( p_\rv{X}(x)\cdot W_{\rv{Y}|\rv{X}}(y|x) - M \cdot p_\rv{X}(x)\cdot q_\rv{Y}(y)\Big)_+ \label{eq:epsilon:NS:exact:2}
\end{align}
where $(x)_+\defeq \max\{x,0\}$ for $x\in \mathbb{R}$. 
\end{proposition}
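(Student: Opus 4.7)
The plan is to build on the implicit characterization of non-signaling-realizable channels in the meta-converse of~\cite{cao2024channel}: a channel $\tilde W_{\rv{Y}|\rv{X}}$ arises from some non-signaling code with message-alphabet size at most $M$ if and only if there exists a probability distribution $q_\rv{Y}\in\set{P}(\set{Y})$ such that $\tilde W_{\rv{Y}|\rv{X}}(y|x) \le M\cdot q_\rv{Y}(y)$ for all $x\in\set{X}$ and $y\in\set{Y}$. The ``only if'' direction is obtained by symmetrizing the non-signaling map $\mathcal{N}_{\rv{IY}|\rv{XJ}}$ over permutations of $[M]$ (which preserves both the non-signaling condition and the simulated channel $\tilde W_{\rv{Y}|\rv{X}}$), so that the decoder marginal becomes independent of $j$ and equal to some $q_\rv{Y}\in\set{P}(\set{Y})$; then $\tilde W_{\rv{Y}|\rv{X}}(y|x) = \sum_i \mathcal{N}_{\rv{IY}|\rv{XJ}}(i,y|x,i) \le \sum_{i,j} \mathcal{N}_{\rv{IY}|\rv{XJ}}(i,y|x,j) = M\cdot q_\rv{Y}(y)$ by non-negativity of $\mathcal{N}_{\rv{IY}|\rv{XJ}}$. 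The ``if'' direction, which requires an explicit construction of such an $\mathcal{N}_{\rv{IY}|\rv{XJ}}$ from $\tilde W_{\rv{Y}|\rv{X}}$ and $q_\rv{Y}$, I would import directly from~\cite{cao2024channel}.

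Granted this characterization, the first equality~\eqref{eq:epsilon:NS:exact:1} reduces, at each input $x$ and for each fixed $q_\rv{Y}$, to minimizing $\sum_y (W_{\rv{Y}|\rv{X}}(y|x) - \tilde W_{\rv{Y}|\rv{X}}(y|x))_+$ over probability distributions $\tilde W_{\rv{Y}|\rv{X}}(\cdot|x)$ satisfying $\tilde W_{\rv{Y}|\rv{X}}(y|x) \le M q_\rv{Y}(y)$. A short accounting argument shows that the minimum is attained by setting $\tilde W_{\rv{Y}|\rv{X}}(y|x) = \min\{W_{\rv{Y}|\rv{X}}(y|x), M q_\rv{Y}(y)\}$ and filling the residual mass $\sum_y(W_{\rv{Y}|\rv{X}}(y|x)-Mq_\rv{Y}(y))_+$ into the slack region $\{y: M q_\rv{Y}(y) > W_{\rv{Y}|\rv{X}}(y|x)\}$, whose total slack $M-1+\sum_y(W_{\rv{Y}|\rv{X}}(y|x)-Mq_\rv{Y}(y))_+$ comfortably accommodates it whenever $M\ge 1$. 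This yields distortion exactly $\sum_y(W_{\rv{Y}|\rv{X}}(y|x) - M q_\rv{Y}(y))_+$ at input $x$; taking the maximum over $x$ and the infimum over $q_\rv{Y}$ produces~\eqref{eq:epsilon:NS:exact:1}.

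The second equality~\eqref{eq:epsilon:NS:exact:2} is essentially cosmetic: since $p_\rv{X}(x)\ge 0$, one has $(p_\rv{X}(x)\cdot a)_+ = p_\rv{X}(x)\cdot(a)_+$, and the elementary identity $\max_{x\in\set{X}} f(x) = \sup_{p_\rv{X}\in\set{P}(\set{X})}\sum_{x}p_\rv{X}(x)f(x)$, valid for any real-valued $f$ on a finite set, converts the maximum over $x$ in~\eqref{eq:epsilon:NS:exact:1} into the supremum over $p_\rv{X}$ in~\eqref{eq:epsilon:NS:exact:2}.

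The main obstacle I expect is the ``if'' direction of the realizability characterization: given any $\tilde W_{\rv{Y}|\rv{X}}$ with $\tilde W_{\rv{Y}|\rv{X}}\le Mq_\rv{Y}$, one must exhibit a \emph{bona fide} non-signaling $\mathcal{N}_{\rv{IY}|\rv{XJ}}$ realizing $\tilde W_{\rv{Y}|\rv{X}}$ and having the prescribed decoder marginal $q_\rv{Y}$. A natural attempt places $\frac{1}{M}\tilde W_{\rv{Y}|\rv{X}}(y|x)$ on the ``diagonal'' entries ($i=j$) and chooses off-diagonal entries so that both marginals are simultaneously matched; verifying non-negativity and the non-signaling equalities~\eqref{eq:ns:requirement:1}--\eqref{eq:ns:requirement:2} is the nontrivial step, which I would invoke from~\cite{cao2024channel} rather than reprove.
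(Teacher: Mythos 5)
Your proposal is correct and follows essentially the same route as the paper: both reduce $\epsilon^\fnc{NS}(M,W_{\rv{Y}|\rv{X}})$ to $\inf_{q_\rv{Y}}\inf_{\tilde W_{\rv{Y}|\rv{X}}\le M q_\rv{Y}}\norm{\tilde W_{\rv{Y}|\rv{X}}-W_{\rv{Y}|\rv{X}}}_\fnc{TVD}$ by invoking the non-signaling realizability characterization from~\cite{cao2024channel} (which the paper imports wholesale via its Theorem~20, while you re-derive the ``only if'' direction by a permutation-symmetrization argument and import only the ``if'' direction), then decouple the constrained minimization across inputs and solve each one-dimensional problem by the same filling argument that the paper packages as Lemma~\ref{lem:mim:tvd}. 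Your derivation of~\eqref{eq:epsilon:NS:exact:2} via $(p_\rv{X}(x)\cdot a)_+ = p_\rv{X}(x)(a)_+$ and $\max_x f(x)=\sup_{p_\rv{X}}\sum_x p_\rv{X}(x)f(x)$ matches the paper's linearity-at-extreme-points argument.
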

We shall need the following lemma for the proof of the above proposition.
\begin{lemma}\label{lem:mim:tvd}
For any function $f:\set{A}\to\reals_{\geq 0}$ s.t. $\sum_{a\in\set{A}} f(a) \geq 1$ where $\set{A}$ is some finite set, it holds that
\begin{equation}
    \inf_{\tilde{p}\in\set{P}(\set{A}):\; \tilde{p} \leq f} \norm{\tilde{p} - p}_\fnc{TVD} = \sum_{a\in\set{A}} \Big( p(a) - f(a) \Big)_+
\end{equation}
for any pmf $p$ on $\set{A}$.
\end{lemma}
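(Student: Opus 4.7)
The plan is to prove both inequalities of the identity separately, using the standard fact that for any two pmfs $p,\tilde{p}$ on $\set{A}$, the TVD can be written as $\norm{p-\tilde{p}}_\fnc{TVD} = \sum_{a\in\set{A}} (p(a)-\tilde{p}(a))_+$, since the positive and negative parts of $p-\tilde{p}$ must balance.

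For the lower bound ($\geq$), I would argue pointwise: if $\tilde{p}\in\set{P}(\set{A})$ satisfies $\tilde{p}\leq f$, then for every $a$ with $p(a)>f(a)$ we have $p(a)-\tilde{p}(a)\geq p(a)-f(a)>0$, and for every $a$ with $p(a)\leq f(a)$ we have $(p(a)-f(a))_+ = 0 \leq (p(a)-\tilde{p}(a))_+$. Summing over $a$ yields $\norm{p-\tilde{p}}_\fnc{TVD} \geq \sum_{a} (p(a)-f(a))_+$, which holds for all feasible $\tilde{p}$ and hence for the infimum.

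For the upper bound ($\leq$), I would exhibit a feasible $\tilde{p}$ achieving equality. The natural candidate starts from $g(a)\defeq\min\{p(a),f(a)\}$, which satisfies $g\leq f$ but has total mass $S\defeq\sum_a g(a) = 1 - \sum_a (p(a)-f(a))_+ \leq 1$. The deficit $1-S$ must be distributed on the set $\set{A}_+\defeq\{a : f(a)>p(a)\}$ without exceeding $f$. The available headroom is $\sum_a (f(a)-p(a))_+ = \sum_a f(a) - S \geq 1 - S$ by the assumption $\sum_a f(a)\geq 1$, so such a distribution is possible; any choice works (e.g., scale proportionally to $(f(a)-p(a))_+$). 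The resulting $\tilde{p}$ satisfies $\tilde{p}(a)\geq p(a)$ on $\set{A}_+$ and $\tilde{p}(a)=f(a)<p(a)$ off $\set{A}_+$ (among points where $f(a)<p(a)$; at points where $f(a)=p(a)$ we have $\tilde{p}(a)=p(a)$). Hence $(p(a)-\tilde{p}(a))_+$ equals $p(a)-f(a)$ on $\{f(a)<p(a)\}$ and zero elsewhere, giving $\norm{p-\tilde{p}}_\fnc{TVD} = \sum_a (p(a)-f(a))_+$.

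There is no real obstacle here; the only point requiring care is verifying that the budget condition $\sum_a f(a)\geq 1$ is exactly what is needed to make the construction of $\tilde{p}$ feasible (without it, one cannot even find a pmf dominated by $f$). The argument is essentially a "water-filling" construction and the lemma follows by combining the two inequalities.
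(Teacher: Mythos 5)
Your proof is correct and follows essentially the same route as the paper: the lower bound via pointwise domination $(p(a)-\tilde p(a))_+\ge(p(a)-f(a))_+$ for any feasible $\tilde p\le f$, and the upper bound by the same water-filling construction that sets $\tilde p=\min\{p,f\}$ where $f\le p$ and fills the deficit proportionally to $(f-p)_+$ elsewhere (which is precisely the $\tilde p$ the paper writes down). The only cosmetic difference is that you justify feasibility via the headroom inequality $\sum_a f(a)-S\ge 1-S$, whereas the paper leaves that verification to the reader.
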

\begin{proof}
Firstly, LHS is not smaller, as LHS$=\inf_{\tilde{p}\in\set{P}(\set{A}):\; \tilde{p} \leq f} \sum_{a} (p(a)-\tilde{p}(a))_+\ge  \sum_{a} (p(a)-f(a))_+=$RHS.
To show that the LHS is not larger, we construct the following $\tilde{p}$:
\begin{equation}
    \tilde{p}(a) \defeq \begin{cases}
    f(a) & \text{for } a\in\set{A} \text{ s.t. } f(a)\leq p(a)\\
    p(a) + (f(a)-p(a))\cdot \frac{\sum_{a\in\set{A}}\big(p(a)-f(a)\big)_+}{\sum_{a\in\set{A}}\big(f(a)-p(a)\big)_+} &\text{otherwise}.
    \end{cases}
\end{equation}
It is straightforward to check that $\tilde{p}$ is a pmf on $\set{A}$,  $\tilde{p}\leq f$, and $\norm{\tilde{p}-p}_{\fnc{TVD}}=$RHS.
\end{proof}
\begin{proof}[Proof of Proposition~\ref{prop:epsilon:NS:exact}]
For any $\epsilon\in(0,1)$, let $\mathcal{M}^{\fnc{NS},\star}_{\epsilon}(W_{\rv{Y}|\rv{X}})$ denote the minimal attainable message size of non-signaling $\epsilon$-simulation (in TVD) codes for $W_{\rv{Y}|\rv{X}}$.
By definition, we can write
\begin{equation}
    \epsilon^\fnc{NS}(M,W_{\rv{Y}|\rv{X}}) = \inf\left\{\epsilon \geq 0 \middle\vert \mathcal{M}^{\fnc{NS},\star}_{\epsilon}(W_{\rv{Y}|\rv{X}}) \leq M \right\}.
\end{equation}
We cite Theorem~20 in~\cite{cao2024channel}, \ie,
\begin{equation}
\log{\mathcal{M}^{\fnc{NS},\star}_{\epsilon}(W_{\rv{Y}|\rv{X}})} = \ceil{I_{\max}^\epsilon(W_{\rv{Y}|\rv{X}})}_{\log{\mathbb{Z}_{>0}}},
\end{equation}
where 
\begin{align}
    I_{\max}^\epsilon(W_{\rv{Y}|\rv{X}})
    & \defeq \multiadjustlimits{
    \inf_{\tilde{W}_{\rv{Y}|\rv{X}}\in\set{P}(\set{Y}|\set{X}):\: \norm{\tilde{W}_{\rv{Y}|\rv{X}} - W_{\rv{Y}|\rv{X}}}_\fnc{TVD} \leq\epsilon},
    \sup_{p_\rv{X}\in\set{P}(\set{X})},
    \inf_{q_\rv{Y}\in\set{P}(\set{Y})},
    \max_{x\in\set{X}},
    \max_{y\in\set{Y}}}\ 
    \log{\frac{p_\rv{X}(x)\cdot\tilde{W}_{\rv{Y}|\rv{X}}(y|x)}{p_\rv{X}(x)\times q_\rv{Y}(y)}} \\
    & = \multiadjustlimits{
    \inf_{q_\rv{Y}\in\set{P}(\set{Y})}, 
    \inf_{\tilde{W}_{\rv{Y}|\rv{X}}\in\set{P}(\set{Y}|\set{X}):\: \norm{\tilde{W}_{\rv{Y}|\rv{X}} - W_{\rv{Y}|\rv{X}}}_\fnc{TVD} \leq\epsilon},
    \max_{x\in\set{X}},
    \max_{y\in\set{Y}}}\ 
    \log{\frac{\tilde{W}_{\rv{Y}|\rv{X}}(y|x)}{q_\rv{Y}(y)}}.
\end{align}
This enables us to write
\begin{equation}
    \epsilon^\fnc{NS}(M,W_{\rv{Y}|\rv{X}}) = \adjustlimits \inf_{q_\rv{Y}\in\set{P}(\set{Y})} \inf_{\tilde{W}_{\rv{Y}|\rv{X}}\in\set{P}(\set{Y}|\set{X}): \tilde{W}_{\rv{Y}|\rv{X}} \leq M \cdot q_{\rv{Y}}} \norm{\tilde{W}_{\rv{Y}|\rv{X}} - W_{\rv{Y}|\rv{X}}}_{\fnc{TVD}}
\end{equation}
for each positive integer $M$.
Using the definition that $\norm{\tilde{W}_{\rv{Y}|\rv{X}} - W_{\rv{Y}|\rv{X}}}_{\fnc{TVD}} = \max_{x\in\set{X}} \norm{\tilde{W}_{\rv{Y}|\rv{X}}(\cdot|x) - W_{\rv{Y}|\rv{X}}(\cdot|x)}_\fnc{TVD}$, and the fact that 
\begin{equation}
\min_{a_1\in\set{A}_1, a_2\in\set{A}_2, \ldots, a_k\in\set{A}_k}
\max_{j=1,\ldots, k} f_j(a_j) = \max_{j=1,\ldots,k} \min_{a\in\set{A}_j} f_j(a)
\end{equation}
for any positive integer $k$ and real-valued functions $f_1,\ldots, f_k$ that has a minimal point on each of the domains $\set{A}_1, \ldots, \set{A}_k$, respectively, we have
\begin{align}
    \epsilon^\fnc{NS}(M,W_{\rv{Y}|\rv{X}}) &=\multiadjustlimits{
    \inf_{q_\rv{Y}\in\set{P}(\set{Y})},
    \inf_{\tilde{W}_{\rv{Y}|\rv{X}}\in\set{P}(\set{Y}|\set{X}):\: \tilde{W}_{\rv{Y}|\rv{X}}\leq M \cdot q_{\rv{Y}}},
    \max_{x\in\set{X}}}\ 
    \norm{\tilde{W}_{\rv{Y}|\rv{X}}(\cdot|x) - W_{\rv{Y}|\rv{X}}(\cdot|x)}_{\fnc{TVD}}\\ 
    \label{eq:sawp:inf:min}
    &=\multiadjustlimits{
    \inf_{q_\rv{Y}\in\set{P}(\set{Y})},
    \min _{\tilde{W}_{\rv{Y}|\rv{X}}\in\set{P}(\set{Y}|\set{X}):\: \tilde{W}_{\rv{Y}|\rv{X}}\leq M \cdot q_{\rv{Y}}},
    \max_{x\in\set{X}}}\ 
    \norm{\tilde{W}_{\rv{Y}|\rv{X}}(\cdot|x) - W_{\rv{Y}|\rv{X}}(\cdot|x)}_{\fnc{TVD}}\\ 
    &= \multiadjustlimits{
    \inf_{q_\rv{Y}\in\set{P}(\set{Y})},
    \max_{x\in\set{X}},
    \min_{\tilde{W}_{\rv{Y}|\rv{X}}(\cdot|x)\in\set{P}(\set{Y}):\: \tilde{W}_{\rv{Y}|\rv{X}}(\cdot|x) \leq M \cdot q_{\rv{Y}}} }
    \norm{\tilde{W}_{\rv{Y}|\rv{X}}(\cdot|x) - W_{\rv{Y}|\rv{X}}(\cdot|x)}_{\fnc{TVD}}.
\end{align}
where in~\eqref{eq:sawp:inf:min} we replace an infimum by minimum as the inner function is continuous and the domain is compact.
Eq.~\eqref{eq:epsilon:NS:exact:1} can be obtained from the above expression using Lemma~\ref{lem:mim:tvd}.
To obtain~\eqref{eq:epsilon:NS:exact:2}, we relax the domain of the maximization and write 
\begin{equation}
    \epsilon^\fnc{NS}(M,W_{\rv{Y}|\rv{X}})
    \leq \adjustlimits \inf_{q_{\rv{Y}}\in\set{P}(\set{Y})} \sup_{p_{\rv{X}}\in\set{P}(\set{X})}\ 
    \underbrace{\sum_{x\in\set{X}, y\in\set{Y}}
    \Big(p_\rv{X}(x)\cdot W_{\rv{Y}|\rv{X}}(y|x) - M \cdot p_\rv{X}(x)\cdot q_\rv{Y}(y)\Big)_+}_{\text{linear in $p_\rv{X}$}}.
\end{equation}
Note that the target function is linear in $p_\rv{X}$, and thus the maximization over $p_\rv{X}$ is attained at extreme points, \ie,~\eqref{eq:epsilon:NS:exact:2} holds.
\end{proof}
\section{Error Exponent for Non-Signaling Channel Simulation}\label{sec:EE}
In the following, we are interested in the behavior of $\epsilon^\fnc{NS}$ for simulating asymptotically many copies of a channel given a fixed communication \emph{rate} $r>0$.
More precisely, we would like to study the exponent $-\frac{1}{n}\log{\epsilon^\fnc{NS}(\floor{e^{nr}},W_{\rv{Y}|\rv{X}}^{\tensor n})}$ as $n$ tends to infinity.
Note that we will omit the second argument for $\epsilon^\fnc{NS}$, \ie, $W_{\rv{Y}|\rv{X}}^{\tensor n}$ when it is clear from the context. 
The first result of this section, Proposition~\ref{prop:EE-Ach}, shows that $\epsilon^\fnc{NS}$ exponentially decays as $r > I_1(W_{\rv{Y}|\rv{X}})$ for any finite blocklength $n$.
Then, we will prove in Proposition~\ref{prop:EE-converse} that the achievable error exponent is asymptotically tight.

\subsection{Achievability for the Error Exponent}\label{sec:EE-Ach}
\begin{proposition}[Achievability for the Error Exponents]\label{prop:EE-Ach}
Let $W_{\rv{Y}|\rv{X}}\in\set{P}(\set{Y}|\set{X})$ be a channel.
For all $r> 0$, it holds that 
\begin{equation}\label{eq:EE-Ach-finite}
     \frac{1}{n} \log\epsilon^\fnc{NS}(\floor{e^{nr}},W_{\rv{Y}|\rv{X}}^{\tensor n}) \leq
    - \sup_{\alpha\geq 0} \; \alpha\cdot\left( r_n - \adjustlimits\sup_{p_\rv{X}\in\set{P}(\set{X})} \inf_{q_\rv{Y}\in\set{P}(\set{Y})} D_{\alpha+1}\infdiv*{p_\rv{X}\cdot W_{\rv{Y}|\rv{X}}}{p_\rv{X}\times q_\rv{Y}} \right)
\end{equation}
for all positive integer $n$, where $r_n\defeq \frac{1}{n}\log{\floor{e^{nr}}}$.
Moreover, as $n$ tends to infinity we have that
\begin{equation}\label{eq:EE-Ach}
    \limsup_{n\to \infty} \frac{1}{n} \log\epsilon^\fnc{NS}(\floor{e^{nr}},W_{\rv{Y}|\rv{X}}^{\tensor n}) \leq
    - \sup_{\alpha\geq 0} \; \alpha\cdot\left( r - \adjustlimits\sup_{p_\rv{X}\in\set{P}(\set{X})} \inf_{q_\rv{Y}\in\set{P}(\set{Y})} D_{\alpha+1}\infdiv*{p_\rv{X}\cdot W_{\rv{Y}|\rv{X}}}{p_\rv{X}\times q_\rv{Y}} \right).
\end{equation}
\end{proposition}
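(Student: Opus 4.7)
The plan is to combine the exact one-shot characterization of Proposition~\ref{prop:epsilon:NS:exact} with a Markov/Chernoff-type bound on the positive part, and then invoke a minimax identity to express the result in terms of the R\'enyi mutual information.

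The first step is to restrict the infimum over output distributions in~\eqref{eq:epsilon:NS:exact:1} to product distributions of the form $q_\rv{Y}^{\tensor n}$ with $q_\rv{Y}\in\set{P}(\set{Y})$, giving an immediate upper bound on $\epsilon^\fnc{NS}(\floor{e^{nr}},W_{\rv{Y}|\rv{X}}^{\tensor n})$. To estimate the resulting sum of positive parts, I would invoke the elementary inequality $(a-b)_+ \leq a^{1+\alpha}\,b^{-\alpha}$, valid for all $a,b>0$ and $\alpha\geq 0$ (trivial from $(a/b)^\alpha\geq 1$ on $\{a\geq b\}$). Summing over $\mathbf{y}\in\set{Y}^n$ and exploiting the tensor-product structure turns $\sum_\mathbf{y} W_{\rv{Y}|\rv{X}}^{\tensor n}(\mathbf{y}|\mathbf{x})^{1+\alpha}\, q_\rv{Y}^{\tensor n}(\mathbf{y})^{-\alpha}$ into an exponentiated, additive R\'enyi divergence, so the bound reduces to $\floor{e^{nr}}^{-\alpha}\exp\!\bigl(n\alpha\, \max_{x\in\set{X}} D_{\alpha+1}\infdiv*{W_{\rv{Y}|\rv{X}}(\cdot|x)}{q_\rv{Y}}\bigr)$.

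The next key step is the identity $\max_{x\in\set{X}} D_{\alpha+1}\infdiv*{W_{\rv{Y}|\rv{X}}(\cdot|x)}{q_\rv{Y}} = \max_{p_\rv{X}\in\set{P}(\set{X})} D_{\alpha+1}\infdiv*{p_\rv{X}\cdot W_{\rv{Y}|\rv{X}}}{p_\rv{X}\times q_\rv{Y}}$, which follows because the exponentiated quantity is linear in $p_\rv{X}$ and therefore maximized at a vertex of the simplex. Minimizing over $q_\rv{Y}$ then calls for a minimax swap: the map $(p_\rv{X},q_\rv{Y})\mapsto D_{\alpha+1}\infdiv*{p_\rv{X}\cdot W_{\rv{Y}|\rv{X}}}{p_\rv{X}\times q_\rv{Y}}$ is concave in $p_\rv{X}$ (logarithm of a positive linear function) and convex in $q_\rv{Y}$ (since $q^{-\alpha}$ is log-convex on $(0,\infty)$ for $\alpha\geq 0$, and sums of log-convex functions are log-convex), so Sion's theorem interchanges $\inf_{q_\rv{Y}}$ and $\max_{p_\rv{X}}$ to yield $\sup_{p_\rv{X}}\inf_{q_\rv{Y}} D_{\alpha+1}\infdiv*{p_\rv{X}\cdot W_{\rv{Y}|\rv{X}}}{p_\rv{X}\times q_\rv{Y}} = I_{\alpha+1}(W_{\rv{Y}|\rv{X}})$. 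Combining, taking $\frac{1}{n}\log$, and taking the supremum over $\alpha\geq 0$ yields~\eqref{eq:EE-Ach-finite}. For~\eqref{eq:EE-Ach}, I would apply the finite bound for each fixed $\alpha\geq 0$ and pass to $\limsup_{n\to\infty}$ (using $r_n\to r$), obtaining $\limsup \frac{1}{n}\log \epsilon^\fnc{NS} \leq -\alpha(r - I_{\alpha+1}(W_{\rv{Y}|\rv{X}}))$; since the left-hand side is $\alpha$-independent, infimizing on the right over $\alpha$ delivers the claim.

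I expect the main obstacle to be the rigorous justification of the minimax swap: concavity in $p_\rv{X}$ is immediate, but convexity in $q_\rv{Y}$ requires the short argument via log-convexity of $q\mapsto q^{-\alpha}$, and boundary cases (where $q_\rv{Y}(y)=0$ for some $y$) must be handled by restricting to the relative interior of the simplex and then passing to a limit. All remaining steps---the Chernoff-type bound on the positive part, additivity of R\'enyi divergence under tensor products, and the vertex-of-simplex argument---are essentially routine.
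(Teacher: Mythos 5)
Your proposal is correct and follows essentially the same route as the paper: restrict the infimum in~\eqref{eq:epsilon:NS:exact:1} to i.i.d.\ $q_\rv{Y}^{\tensor n}$, bound the positive part via $(a-b)_+\leq a^{1+\alpha}b^{-\alpha}$, tensorize, pass from $\max_{x}$ to $\sup_{p_\rv{X}}$ by linearity at the vertices, and optimize over $\alpha$. The one point worth noting is that you make explicit (via Sion, with the correct concavity/convexity observations and the boundary caveat) the minimax identity $\inf_{q_\rv{Y}}\sup_{p_\rv{X}}D_{\alpha+1}=\sup_{p_\rv{X}}\inf_{q_\rv{Y}}D_{\alpha+1}$, which the paper's proof uses silently when it passes from its displayed $\inf_q\sup_p$ form to the $\sup_p\inf_q$ form in~\eqref{eq:EE-Ach-finite}; this is a welcome addition rather than a deviation.
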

\begin{remark} \label{remark:EE}
The established achievable error exponent $\sup_{\alpha\geq 0} \alpha \left( r - I_{\alpha+1}(W_{\rv{Y}|\rv{X}}) \right)$ is expressed in terms of the R\'enyi information of channel, $I_\alpha(W_{\rv{Y}|\rv{X}})$.
Moreover, since $I_\alpha(W_{\rv{Y}|\rv{X}})$ is continuous in $\alpha \in [0,\infty]$ \cite[Lemma 16.(g)]{nakibouglu2018renyi}, the achievable error exponent is positive if and only if the communication rate is strictly greater than the mutual information of channel (which also coincides with Shannon's channel capacity), \ie,
$r > I_1(W_{\rv{Y}|\rv{X}})$.
Proposition~\ref{prop:EE-Ach} then indicates that the non-signaling simulation error $\epsilon^{\text{NS}}$ decays exponentially for any $r > I_1(W_{\rv{Y}|\rv{X}})$.
\end{remark} 
\begin{proof} 
By Proposition~\ref{prop:epsilon:NS:exact}, we have
\begin{equation}
    \epsilon^\fnc{NS}(\floor{e^{nr}},W_{\rv{Y}|\rv{X}}^{\tensor n}) = \adjustlimits \inf_{q_{\rvs{Y}_1^n}\in\set{P}(\set{Y}^n)} \max_{\mathbf{x}_1^n\in\set{X}^n}\ 
    \sum_{\mathbf{y}_1^n\in\set{Y}^n} \Big(W_{\rv{Y}|\rv{X}}^{\tensor n} (\mathbf{y}_1^n|\mathbf{x}_1^n) - \floor{e^{nr}} \cdot q_{\rvs{Y}_1^n}(\mathbf{y}_1^n)\Big)_+ .
\end{equation}
Restricting the first infimum to i.i.d. $q_{\rvs{Y}_1^n}$'s, we have
\begin{equation}
    \epsilon^\fnc{NS}(\floor{e^{nr}},W_{\rv{Y}|\rv{X}}^{\tensor n})\leq \adjustlimits \inf_{q_\rv{Y}\in\set{P}(\set{Y})} \max_{\mathbf{x}_1^n\in\set{X}^n}\ 
    \sum_{\mathbf{y}_1^n\in\set{Y}^n} \Big(W_{\rv{Y}|\rv{X}}^{\tensor n} (\mathbf{y}_1^n|\mathbf{x}_1^n) - \floor{e^{nr}} \cdot q_\rv{Y}^{\tensor n}(\mathbf{y}_1^n)\Big)_+ 
\end{equation}
We further have the following chain of equalities/inequalities for $\alpha\ge 0$
\begin{align}
    \text{Above} &= \adjustlimits \inf_{q_\rv{Y}\in\set{P}(\set{Y})} \max_{\mathbf{x}_1^n\in\set{X}^n}\ 
    \sum_{\mathbf{y}_1^n\in\set{Y}^n: W_{\rv{Y}|\rv{X}}^{\tensor n} (\mathbf{y}_1^n|\mathbf{x}_1^n) \geq \floor{e^{nr}} \cdot q_\rv{Y}^{\tensor n}(\mathbf{y}_1^n)} W_{\rv{Y}|\rv{X}}^{\tensor n} (\mathbf{y}_1^n|\mathbf{x}_1^n) - \floor{e^{nr}} \cdot q_\rv{Y}^{\tensor n}(\mathbf{y}_1^n)  \\
    &\leq \adjustlimits \inf_{q_\rv{Y}\in\set{P}(\set{Y})} \max_{\mathbf{x}_1^n\in\set{X}^n}\ 
    \sum_{\mathbf{y}_1^n\in\set{Y}^n: W_{\rv{Y}|\rv{X}}^{\tensor n} (\mathbf{y}_1^n|\mathbf{x}_1^n) \geq \floor{e^{nr}} \cdot q_\rv{Y}^{\tensor n}(\mathbf{y}_1^n)} W_{\rv{Y}|\rv{X}}^{\tensor n} (\mathbf{y}_1^n|\mathbf{x}_1^n) \\
    &= \adjustlimits \inf_{q_\rv{Y}\in\set{P}(\set{Y})} \max_{\mathbf{x}_1^n\in\set{X}^n}\ 
    \sum_{\mathbf{y}_1^n\in\set{Y}^n: W_{\rv{Y}|\rv{X}}^{\tensor n} (\mathbf{y}_1^n|\mathbf{x}_1^n) \geq \floor{e^{nr}} \cdot q_\rv{Y}^{\tensor n}(\mathbf{y}_1^n)} \left(W_{\rv{Y}|\rv{X}}^{\tensor n} (\mathbf{y}_1^n|\mathbf{x}_1^n)\right)^{\alpha+1} \!\cdot \left(W_{\rv{Y}|\rv{X}}^{\tensor n} (\mathbf{y}_1^n|\mathbf{x}_1^n)\right)^{-\alpha}\!\!\\
    &\leq \adjustlimits \inf_{q_\rv{Y}\in\set{P}(\set{Y})} \max_{\mathbf{x}_1^n\in\set{X}^n}\ 
    \sum_{\mathbf{y}_1^n\in\set{Y}^n: W_{\rv{Y}|\rv{X}}^{\tensor n} (\mathbf{y}_1^n|\mathbf{x}_1^n) \geq \floor{e^{nr}} \cdot q_\rv{Y}^{\tensor n}(\mathbf{y}_1^n)} \left(W_{\rv{Y}|\rv{X}}^{\tensor n} (\mathbf{y}_1^n|\mathbf{x}_1^n)\right)^{\alpha+1} \!\cdot \Big(\floor{e^{nr}} \!\cdot\! q_\rv{Y}^{\tensor n}(\mathbf{y}_1^n)\Big)^{-\alpha}\!\!\!\!\! \\
    &\leq \adjustlimits \inf_{q_\rv{Y}\in\set{P}(\set{Y})} \max_{\mathbf{x}_1^n\in\set{X}^n}\ 
    \sum_{\mathbf{y}_1^n\in\set{Y}^n} \left(W_{\rv{Y}|\rv{X}}^{\tensor n} (\mathbf{y}_1^n|\mathbf{x}_1^n)\right)^{\alpha+1} \cdot \Big(\floor{e^{nr}} \cdot q_\rv{Y}^{\tensor n}(\mathbf{y}_1^n)\Big)^{-\alpha}.
\end{align}
Taking logarithm and dividing by $n$ on both sides, we have for $\alpha\ge 0$
\begin{align}
    \label{eq:EE-Ach:C:1}
    \frac{1}{n} \log\epsilon^\fnc{NS}(\floor{e^{nr}},W_{\rv{Y}|\rv{X}}^{\tensor n}) 
    &\leq -\alpha\cdot\frac{\log{\floor{e^{nr}}}}{n}
    + \adjustlimits \inf_{q_\rv{Y}\in\set{P}(\set{Y})} \max_{\mathbf{x}_1^n\in\set{X}^n}\
    \frac{1}{n} \log{\prod_{i=1}^n \left(\sum_{y\in\set{Y}} \Big(W_{\rv{Y}|\rv{X}} (y|x_i)\Big)^{\alpha+1} \!\!\!\cdot \Big(q_\rv{Y}(y)\Big)^{-\alpha}\!\right)} \\ 
    \label{eq:EE-Ach:C:2}
    &= -\alpha\cdot\frac{\log{\floor{e^{nr}}}}{n}
    + \adjustlimits \inf_{q_\rv{Y}\in\set{P}(\set{Y})} \max_{\mathbf{x}_1^n\in\set{X}^n}\
    \frac{1}{n}\sum_{i=1}^n \alpha\cdot D_{\alpha+1}\infdiv*{W_{\rv{Y}|\rv{X}}(\cdot|x_i)}{q_\rv{Y}} \\ 
    \label{eq:EE-Ach:C:3}
    &= -\alpha\cdot\frac{\log{\floor{e^{nr}}}}{n}
    + \alpha\cdot \adjustlimits \inf_{q_\rv{Y}\in\set{P}(\set{Y})} \max_{x\in\set{X}}\
    D_{\alpha+1}\infdiv*{W_{\rv{Y}|\rv{X}}(\cdot|x)}{q_\rv{Y}} \\
    \label{eq:EE-Ach:C:4}
    &= -\alpha\cdot\frac{\log{\floor{e^{nr}}}}{n}
    + \alpha\cdot \adjustlimits \inf_{q_\rv{Y}\in\set{P}(\set{Y})} \sup_{p_\rv{X}\in\set{P}(\set{X})}\
    D_{\alpha+1}\infdiv*{p_\rv{X}\cdot W_{\rv{Y}|\rv{X}}}{p_\rv{X}\times q_\rv{Y}} 
\end{align}
where we distribute $y_1,\ldots, y_n$ into the products in~\eqref{eq:EE-Ach:C:1}; shuffle the maximization inside the summation in~\eqref{eq:EE-Ach:C:3}; and utilize the fact that the logarithm function is concave in~\eqref{eq:EE-Ach:C:4}\footnote{More precisely, $\max_{x\in\set{X}} \log{f(x)} = \sup_{p_\rv{X}\in\set{P}(\set{X})} \log{\sum_{x\in\set{X}}p_\rv{X}(x)\cdot f(x)}$ for any non-negative function $f$ on $\set{X}$.}. We deduce \eqref{eq:EE-Ach-finite} by taking the infimum over $\alpha\ge 0$ in \eqref{eq:EE-Ach:C:4}.

Taking $n\to\infty$ on both sides, we have for $\alpha\ge 0$
\begin{equation}
    \limsup_{n\to \infty} \frac{1}{n} \log\epsilon^\fnc{NS}(\floor{e^{nr}},W_{\rv{Y}|\rv{X}}^{\tensor n}) \leq
    -\alpha\cdot\left( r - \adjustlimits\sup_{p_\rv{X}\in\set{P}(\set{X})} \inf_{q_\rv{Y}\in\set{P}(\set{Y})} D_{\alpha+1}\infdiv*{p_\rv{X}\cdot W_{\rv{Y}|\rv{X}}}{p_\rv{X}\times q_\rv{Y}} \right).
\end{equation}
Finally,~\eqref{eq:EE-Ach} holds since the previous inequality  is true for all $\alpha\geq 0$.
\end{proof}
\subsection{Converse for the  Error Exponent}\label{sec:EE-converse}
As the other direction, we prove the following converse statement.
\begin{proposition}\label{prop:EE-converse}
Let $W_{\rv{Y}|\rv{X}}\in\set{P}(\set{Y}|\set{X})$ be a channel.
For all $r>0$, it holds that 
\begin{equation}
    \frac{1}{n}\log{\epsilon^\fnc{NS}(\floor{e^{nr}})} \geq
    -\sup_{\alpha\geq 0}\; \alpha\cdot \left( r+ g_n - 
    \adjustlimits\inf_{q_\rv{Y}\in\set{P}(\set{Y})} \sup_{p_\rv{X}\in\set{P}(\set{X})}
    D_{1+\alpha}\infdiv*{p_\rv{X}\cdot W_{\rv{Y}|\rv{X}}}{p_\rv{X}\times q_\rv{Y}}\right)
    - f_n
\end{equation}
for all integer $n\geq 3$, some non-negative sequences $\{f_n\}_n$ and $\{g_n\}_n$ such that $f_n=O\left(\frac{\log{n}}{n}\right)$ and $g_n=O\left(\frac{\log{n}}{n}\right)$.
In particular, as $n$ tends to infinity, we have
\begin{align}
    \liminf_{n\rightarrow \infty } \frac{1}{n}\log\epsilon^\fnc{NS}(\floor{e^{nr}})&\ge -\sup_{\alpha\geq 0}\; \alpha\cdot\left( r -
    \adjustlimits\inf_{q_\rv{Y}\in\set{P}(\set{Y})} \sup_{p_\rv{X}\in\set{P}(\set{X})}
    D_{1+\alpha}\infdiv*{p_\rv{X}\cdot W_{\rv{Y}|\rv{X}}}{p_\rv{X}\times q_\rv{Y}} \right).
\end{align}
\end{proposition}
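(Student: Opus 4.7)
The plan is to combine the one-shot formula from Proposition~\ref{prop:epsilon:NS:exact} with a de Finetti--type reduction on the infimum over $q$, followed by a method-of-types analysis, and then to specialise the input types at a saddle point of the R\'enyi channel mutual information.

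The first step is to symmetrise and apply a de Finetti estimate. Because $W_{\rv{Y}|\rv{X}}^{\tensor n}$ is invariant under coordinate permutations and $(\cdot)_+$ is convex, any candidate $q$ can be replaced by its permutation-symmetrisation without increasing $\max_{\mathbf{x}}\sum_{\mathbf{y}}(W_{\rv{Y}|\rv{X}}^{\tensor n}(\mathbf{y}|\mathbf{x})-\floor{e^{nr}}q(\mathbf{y}))_+$; so in Proposition~\ref{prop:epsilon:NS:exact} the infimum is attained over permutation-invariant $q$. For such $q$, the type-class lower bound $|T_{\hat p_\mathbf{y}}|\geq (n+1)^{-|\set{Y}|}e^{nH(\hat p_\mathbf{y})}$ yields the pointwise estimate $q(\mathbf{y})\leq (n+1)^{|\set{Y}|}\hat p_\mathbf{y}^{\otimes n}(\mathbf{y})$, with $\hat p_\mathbf{y}$ the empirical type of $\mathbf{y}$. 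Substituting,
\begin{equation*}
\epsilon^\fnc{NS}(\floor{e^{nr}},W_{\rv{Y}|\rv{X}}^{\tensor n})\;\geq\;\max_{\mathbf{x}}\sum_{\mathbf{y}}\bigl(W_{\rv{Y}|\rv{X}}^{\tensor n}(\mathbf{y}|\mathbf{x})-K\,\hat p_\mathbf{y}^{\otimes n}(\mathbf{y})\bigr)_+,
\end{equation*}
with $K=\floor{e^{nr}}(n+1)^{|\set{Y}|}=e^{n(r+g_n)}$ for some $g_n=O(\log n/n)$.

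Next, fix $\alpha\geq 0$. Invoking the Sion-minimax identity
\begin{equation*}
I_{1+\alpha}(W_{\rv{Y}|\rv{X}}) = \adjustlimits\sup_{p_\rv{X}}\inf_{q_\rv{Y}} D_{1+\alpha}\infdiv*{p_\rv{X}\cdot W_{\rv{Y}|\rv{X}}}{p_\rv{X}\times q_\rv{Y}} = \adjustlimits\inf_{q_\rv{Y}}\sup_{p_\rv{X}} D_{1+\alpha}\infdiv*{p_\rv{X}\cdot W_{\rv{Y}|\rv{X}}}{p_\rv{X}\times q_\rv{Y}}
\end{equation*}
---valid since $D_{1+\alpha}(p_\rv{X}\cdot W_{\rv{Y}|\rv{X}}\|p_\rv{X}\times q_\rv{Y})$ is concave in $p_\rv{X}$ (it equals $\tfrac{1}{\alpha}\log$ of an affine function of $p_\rv{X}$) and convex in $q_\rv{Y}$---let $(p^*,q^*)$ denote a saddle point. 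Pick types $p_\rv{X}\in\set{P}_n(\set{X})$ and $q_\rv{Y}\in\set{P}_n(\set{Y})$ approximating $(p^*,q^*)$ within $O(1/n)$, and any $\mathbf{x}\in T_{p_\rv{X}}$. Restricting the outer sum to $\mathbf{y}\in T_{q_\rv{Y}}$---on which $\hat p_\mathbf{y}^{\otimes n}$ and $q_\rv{Y}^{\otimes n}$ coincide pointwise---and invoking $(a-b)_+\geq\tfrac{1}{2}a\,\bid[a\geq 2b]$ reduces the inner sum to $\tfrac{1}{2}\Pr_{\rvs{Y}\sim W_{\rv{Y}|\rv{X}}^{\tensor n}(\cdot|\mathbf{x})}[\hat p_{\rvs{Y}}=q_\rv{Y},\;\log(W_{\rv{Y}|\rv{X}}^{\tensor n}(\rvs{Y}|\mathbf{x})/q_\rv{Y}^{\otimes n}(\rvs{Y}))\geq\log(2K)]$. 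A joint-type counting lower bound and Legendre--Fenchel duality on the constrained infimum of $D(V_{Y|X}\|W_{\rv{Y}|\rv{X}}|p_\rv{X})$ (subject to $V_X=p_\rv{X}$, $V_Y=q_\rv{Y}$, and a threshold on the empirical log-likelihood ratio) convert this probability into the R\'enyi form $\operatorname{poly}(n)^{-1}\exp\!\bigl(-n\alpha(r+g_n-D_{1+\alpha}\infdiv*{p_\rv{X}\cdot W_{\rv{Y}|\rv{X}}}{p_\rv{X}\times q_\rv{Y}})\bigr)$.

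Finally, continuity of the R\'enyi divergence in its arguments absorbs the $O(1/n)$ error from approximating $(p^*,q^*)$ by types into $f_n=O(\log n/n)$, yielding $-\tfrac{1}{n}\log\epsilon^\fnc{NS}(\floor{e^{nr}},W_{\rv{Y}|\rv{X}}^{\tensor n})\leq \alpha(r+g_n-I_{1+\alpha}(W_{\rv{Y}|\rv{X}}))+f_n$ for each $\alpha\geq 0$; bounding the right-hand side above by its supremum over $\alpha\geq 0$ delivers the claim. The main technical obstacle I anticipate is carrying out the Cram\'er-style large-deviation lower bound fully non-asymptotically with explicit polynomial prefactors so that $f_n$ really is $O(\log n/n)$; in particular, the Lagrangian-dual calculation for the constrained infimum of $D(V_{Y|X}\|W_{\rv{Y}|\rv{X}}|p_\rv{X})$ must be handled with care so that the two marginal constraints $V_X=p_\rv{X}$ and $V_Y=q_\rv{Y}$ lead cleanly to the R\'enyi divergence at the target pair $(p_\rv{X},q_\rv{Y})$ rather than at some auxiliary distribution produced by a free output marginal.
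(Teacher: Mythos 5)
Your architecture is close to the paper's: you start from the one-shot formula in Proposition~\ref{prop:epsilon:NS:exact}, symmetrise $q$, deploy a de Finetti--type bound, and pass to the method of types followed by Lagrangian duality. Your pointwise estimate $q^\star(\mathbf{y})\leq (n+1)^{\size{\set{Y}}}\hat p_{\mathbf{y}}^{\otimes n}(\mathbf{y})$ for permutation-invariant $q^\star$ (via the type-class cardinality bound) is correct and is a nice elementary substitute for the integral form of Lemma~\ref{lem:perm-inv-de-finetti}; for sequences $\mathbf{y}$ in a single type class it carries exactly the same information as the paper's $\mu_{\rvs{Y}_1^n}$ after the $\max_{q_\rv{Y}}$ step.

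However, there is a genuine gap exactly at the point you flag as a ``technical obstacle,'' and I do not think it is merely technical. By fixing $q_\rv{Y}$ (close to the Sibson saddle-point output $q^*$) and then restricting $\mathbf{y}$ to the type class $T_{q_\rv{Y}}$, you force the conditional type $V_{\rv{Y}|\rv{X}}$ to satisfy the marginal constraint $(p_\rv{X}\cdot V_{\rv{Y}|\rv{X}})_\rv{Y}=q_\rv{Y}$. With that extra linear constraint the inner Legendre--Fenchel dual no longer collapses to $-\alpha\, D_{1+\alpha}\infdiv*{p_\rv{X}\cdot W_{\rv{Y}|\rv{X}}}{p_\rv{X}\times q_\rv{Y}}$: the unconstrained minimiser of $(1+\alpha)D\infdiv*{p_\rv{X}V}{p_\rv{X}W}-\alpha D\infdiv*{p_\rv{X}V}{p_\rv{X}\times q_\rv{Y}}$ is $V^*(y|x)\propto W(y|x)^{1+\alpha}q_\rv{Y}(y)^{-\alpha}$, whose $\rv{Y}$-marginal is the \emph{Augustin}-type fixed point, not the Sibson minimiser $q^*\propto(\sum_x p^*(x) W(\cdot|x)^{1+\alpha})^{1/(1+\alpha)}$. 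These two coincide only when $\sum_y W(y|x)^{1+\alpha}q^*(y)^{-\alpha}$ is independent of $x$, which fails in general. Consequently your constrained infimum is strictly larger than the unconstrained Rényi value, and since this is a lower bound on the large-deviation exponent of $\epsilon^{\mathrm{NS}}$, you would obtain a converse that is strictly weaker than Theorem~\ref{thm:NS:EE} (and would not match the achievability of Proposition~\ref{prop:EE-Ach}). The paper sidesteps this precisely by \emph{not} pinning $q_\rv{Y}$ in advance: after the type restriction, it writes $D\infdiv*{p_\rv{X}V}{p_\rv{X}\times p_\rv{Y}}=\inf_{q_\rv{Y}}D\infdiv*{p_\rv{X}V}{p_\rv{X}\times q_\rv{Y}}$ with $p_\rv{Y}$ the $V$-induced output, then uses Sion's minimax theorem twice and Csisz\'ar's minimax theorem for the R\'enyi mutual information to pull $\sup_{q_\rv{Y}}$ outside. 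To salvage your route you would need either to let the $\rv{Y}$-type float with $V$ (which brings you back to the paper's structure) or to carry a second (vector) Lagrange multiplier for the marginal constraint and prove that strong duality still yields the Sibson/R\'enyi form; neither step is carried out in your sketch.
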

We shall need the following post-selection lemma (\aka de Finetti reduction) to prove the above statement.
\begin{lemma}[Post-Selection Lemma / de Finetti reduction]\label{lem:perm-inv-de-finetti}
Let $\set{A}$ be a finite set.
There exists a probability measure $\nu$ on the set of all pmfs on $\set{A}$, \ie, $\set{P}(\set{A})$, such that for all positive integers $n$  
\begin{equation}
    p_{\rvs{A}_1^n}\leq \binom{n+\size*{\set{A}}-1}{n} \cdot \int \diff{\nu}(q_\rv{A}) q_\rv{A}^{\tensor n}
\end{equation}
for every permutation invariant pmf $p_{\rvs{A}_1^n}\in\set{P}(\set{A}^n)$.
\end{lemma}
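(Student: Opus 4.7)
The plan is to take $\nu$ to be the uniform (Dirichlet) distribution on the probability simplex $\set{P}(\set{A})$ and to exploit permutation invariance via the method of types. The overall structure is: use permutation invariance to replace $p_{\rvs{A}_1^n}$ by a quantity depending only on the empirical type of the argument, bound that quantity trivially by $1/\size{T_t}$, and then verify that the symmetric Dirichlet prior produces exactly this value up to the claimed combinatorial factor.

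First, I would observe that for any permutation-invariant pmf $p_{\rvs{A}_1^n}$, the value $p_{\rvs{A}_1^n}(\mathbf{a}_1^n)$ depends only on the empirical type of $\mathbf{a}_1^n$. Concretely, for each type $t\in\set{P}(\set{A})$ realizable at length $n$, let $T_t\subseteq \set{A}^n$ denote its type class and set $P(T_t)\defeq \sum_{\mathbf{b}_1^n\in T_t} p_{\rvs{A}_1^n}(\mathbf{b}_1^n)\leq 1$. Permutation invariance forces $p_{\rvs{A}_1^n}$ to be constant on $T_t$, so that $p_{\rvs{A}_1^n}(\mathbf{a}_1^n)=P(T_t)/\size{T_t}\leq 1/\size{T_t}$ whenever $\mathbf{a}_1^n\in T_t$. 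This crude bound turns out to be all that is needed.

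Next, I would take $\nu$ to be the Dirichlet$(1,\ldots,1)$ distribution on $\set{P}(\set{A})$, equivalently the normalized Lebesgue measure on the $(\size{\set{A}}-1)$-simplex. For $\mathbf{a}_1^n$ of type $t$ with counts $n_a \defeq n\, t(a)$, the standard Dirichlet--Beta integral gives
\begin{equation}
    \int \diff\nu(q_\rv{A})\, q_\rv{A}^{\tensor n}(\mathbf{a}_1^n)
    = \int \diff\nu(q) \prod_{a\in\set{A}} q(a)^{n_a}
    = \frac{(\size{\set{A}}-1)!\,\prod_{a\in\set{A}} n_a!}{(n+\size{\set{A}}-1)!}.
\end{equation}
Multiplying by $\binom{n+\size{\set{A}}-1}{n}=\frac{(n+\size{\set{A}}-1)!}{n!\,(\size{\set{A}}-1)!}$ and recalling the multinomial identity $\size{T_t}=n!/\prod_{a\in\set{A}} n_a!$, the right-hand side collapses to exactly $1/\size{T_t}$.

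Combining the two steps yields, pointwise for every $\mathbf{a}_1^n\in\set{A}^n$,
\begin{equation}
    p_{\rvs{A}_1^n}(\mathbf{a}_1^n) \leq \frac{1}{\size{T_t}}
    = \binom{n+\size{\set{A}}-1}{n}\int \diff\nu(q_\rv{A})\, q_\rv{A}^{\tensor n}(\mathbf{a}_1^n),
\end{equation}
which is the claimed majorization of pmfs. The only step requiring any real care is the Dirichlet integral evaluation and the clerical verification that the combinatorial factor $\binom{n+\size{\set{A}}-1}{n}$, the Dirichlet normalizer $(\size{\set{A}}-1)!/(n+\size{\set{A}}-1)!$, and the multinomial coefficient $n!/\prod_a n_a!$ line up; once they do, symmetry and the choice of a symmetric prior make the inequality essentially automatic, and no sharper bound on $P(T_t)$ is ever needed.
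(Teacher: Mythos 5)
Your proof is correct, and it takes a genuinely different (and more elementary) route than the paper. The paper proves the lemma by lifting the problem to quantum information theory: it embeds the permutation-invariant pmf as a density operator supported on the symmetric subspace $\operatorname{Sym}^n(\mathcal{H})$, invokes the quantum post-selection operator inequality of Christandl et al., and then dephases in the computational basis to recover the classical statement; the measure $\nu$ is obtained as the pushforward of the Haar measure on pure states under $\ket{\phi}\mapsto q_{\ket{\phi}}$. You instead work entirely classically, choosing $\nu$ to be the symmetric Dirichlet (uniform) measure on the simplex and evaluating the Dirichlet--Beta integral directly; combined with the observation that permutation invariance forces $p_{\rvs{A}_1^n}(\mathbf{a}_1^n)\le 1/\size{\set{T}_n(t)}$ on each type class, the constants line up exactly. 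Your computation is consistent with the paper's own Remark in the appendix, which uses the Weingarten calculus to show that the quantum construction yields precisely $\frac{1}{\size{\set{P}_n(\set{A})}}\cdot\frac{1}{\size{\set{T}_n(t)}}$ on sequences of type $t$ --- i.e., the two measures induce the same universal distribution, and indeed the pushforward of the Haar measure is exactly the $\mathrm{Dirichlet}(1,\ldots,1)$ law you chose. What your approach buys is self-containedness and elementarity (no operator inequalities or CPTP maps needed); what the paper's approach buys is a direct connection to the quantum post-selection literature it cites. Both arguments are tight in the same sense: the inequality is saturated on each type class by the uniform distribution over that class.
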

We defer the proof of the above lemma to Appendix~\ref{app:proof:lem:perm-inv-de-finetti}.
Interested readers may want to refer to~\cite[Lemma 14]{tomamichel2017operational} for a similar result with a so-called ``universal probability distribution'' on the RHS.
The above integral form  can be seen as a classical counterpart of~\cite{christandl2009postselection, hayashi2010universal}.
As shown in Remark~\ref{app:rmk:lem:perm-inv-de-finetti}, the universal probability distribution as in~\cite{tomamichel2017operational} actually admits such an integral form as well.

We also utilize the method of types in proving the above proposition.
We shall need the following notation and the lemmas thereafter.
\begin{definition}[Types]\label{def:type}
    Given a finite set $\set{X}$ and a positive integer $n$, the \emph{set of types with denominator $n$ of the alphabet $\set{X}$} is the following subset of pmfs on $\set{X}$:
    \begin{equation}
        \set{P}_n(\set{X}) \defeq \left\{p\in\set{P}(\set{X})\;\middle\vert\; n\cdot p(x) \in\integers\ \forall x\in\set{X}\right\}.
    \end{equation}
    Furthermore, the type of a sequence $\mathbf{x}_1^n\in\set{X}^n$, denoted by $p_\rv{X}^{(\mathbf{x}_1^n)}$, is defined as the empirical distribution induced by the sequence, \ie,
    \begin{equation}
        p_\rv{X}^{(\mathbf{x}_1^n)} \defeq \frac{1}{n} \size*{\left\{i\in[n] \;\middle\vert\; x_i = x\right\}}.
    \end{equation}
    Finally, for a type $p\in\set{P}_n(\set{X})$, we denote the set of sequences of length $n$ and type $p$ by $\set{T}_n(p)$, \ie, 
    \begin{equation}
    \set{T}_n(p)\defeq\left\{ \mathbf{x}_1^n\in\set{X}^n \;\middle\vert\; p_\rv{X}^{(\mathbf{x}_1^n)} = p \right\}.
    \end{equation}
    The set $\set{T}_n(p)$ is also referred to as the type class of $p$.
    (Note that $p_\rv{X}^{(\mathbf{x}_1^n)}\in\set{P}(\set{X})$, and should be distinguished from $p_{\rvs{X}_1^n}$ where the latter is a distribution on $\set{X}^n$.)
\end{definition}

As a first observation, the set of types $\set{P}_n(\set{X})$ servers as an approximation of $\set{P}(\set{X})$ as stated in the following lemma.
\begin{lemma}[\cite{shannon1956zero}]\label{lem:type:approx}
Let $\set{X}$ be a finite set, and let $n$ be a positive integer.
For every $p_\rv{X}\in\set{P}(\set{X})$ there exists some $p_\rv{X}^{(n)}\in\set{P}_n(\set{X})$ such that $\abs*{p_\rv{X}(x)-p_\rv{X}^{(n)}(x)}\leq\frac{1}{n}$ for all $x\in\set{X}$.
\end{lemma}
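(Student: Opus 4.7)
The plan is to construct $p_\rv{X}^{(n)}$ explicitly by integer rounding followed by a small correction to make the total mass equal one. First, I would set $k_x \defeq \floor{n\cdot p_\rv{X}(x)}$ for each $x\in\set{X}$, which by the definition of the floor satisfies $0 \leq n\cdot p_\rv{X}(x) - k_x < 1$. Summing this over $x\in\set{X}$ and using $\sum_{x\in\set{X}} p_\rv{X}(x) = 1$ yields $n - \size*{\set{X}} < \sum_{x\in\set{X}} k_x \leq n$, so the shortfall $m \defeq n - \sum_{x\in\set{X}} k_x$ is a nonnegative integer strictly less than $\size*{\set{X}}$. At this stage $k_x/n$ already meets the bound pointwise but does not yet sum to one.

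Second, I would absorb the shortfall by picking any $m$ distinct elements of $\set{X}$ (e.g., those with the largest fractional remainders $n\cdot p_\rv{X}(x) - k_x$) and incrementing each of the corresponding $k_x$ by one. Since $m\leq \size*{\set{X}}-1$, this is always feasible, and the adjusted integers are nonnegative and sum to exactly $n$; dividing by $n$ gives a candidate $p_\rv{X}^{(n)}\in\set{P}_n(\set{X})$. The pointwise bound $\abs*{p_\rv{X}(x) - p_\rv{X}^{(n)}(x)}\leq 1/n$ then follows by a two-case inspection: on coordinates left alone, the original floor estimate gives an error strictly less than $1/n$; on incremented coordinates, the sandwich $k_x \leq n\cdot p_\rv{X}(x) < k_x+1$ immediately yields $\abs*{p_\rv{X}(x) - (k_x+1)/n}\leq 1/n$.

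I do not anticipate any real obstacle here: the entire proof is elementary integer rounding, and the only observation with content is the counting $n - \size*{\set{X}} < \sum_{x\in\set{X}} \floor{n\cdot p_\rv{X}(x)} \leq n$, which is exactly what allows the correction step to fit within the $1/n$ tolerance. The same argument also shows, as a by-product, that one may in fact choose each $k_x$ to be either $\floor{n\cdot p_\rv{X}(x)}$ or $\ceil{n\cdot p_\rv{X}(x)}$, should a sharper form of the lemma be needed later.
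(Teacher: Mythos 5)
Your proof is correct and is essentially the same argument the paper sketches: round each $n\cdot p_\rv{X}(x)$ down to $\floor{n\cdot p_\rv{X}(x)}$ and then push a shortfall of at most $\size*{\set{X}}-1$ units up to the ceiling on selected coordinates, which is exactly the sandwich $\floor{n\cdot p_\rv{X}} \leq n\cdot p_\rv{X}^{(n)} \leq \ceil{n\cdot p_\rv{X}}$ the paper invokes. Your write-up just makes the counting of the shortfall and the two-case error check explicit, which the paper leaves to the reader.
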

\begin{proof}
This has been pointed out in~\cite{shannon1956zero}.
The idea is that, for any $p_\rv{X}\in\set{P}(\set{X})$,  there always exists a sequence $\mathbf{x}_1^n\in\set{X}_1^n$ such that
\begin{equation}
\floor{n\cdot p_\rv{X}} \leq n\cdot p_\rv{X}^{(\mathbf{x}_1^n)} \leq \ceil{n\cdot p_\rv{X}},
\end{equation}
and the type $p_\rv{X}^{(\mathbf{x}_1^n)}$ satisfies the lemma.
\end{proof}
\begin{definition}[Conditional Types]\label{def:conditonal:type}
    Given finite sets $\set{X}$, $\set{Y}$ and positive integer $n$, the \emph{set of conditional types with denominator $n$ of the alphabet $\set{Y}$ over $\set{X}$} is the following subset of conditional pmfs on $\set{Y}$ over $\set{Y}$:
    \begin{equation}
        \set{P}_n(\set{Y}|\set{X}) \defeq\left\{p_{\rv{Y}|\rv{X}}\in\set{P}(\set{Y}|\set{X}) \;\middle\vert\; p_\rv{X}\cdot p_{\rv{Y}|\rv{X}} \in\set{P}_n(\set{Y}\times\set{X})\ \exists p_\rv{X}\in\set{P}_n(\set{X})\right\}.
    \end{equation}
    Furthermore, the conditional type of sequence $\{(x_i,y_i)\}_{i=1}^n$ (more often specified as $(\mathbf{x}_1^n,\mathbf{y}_1^n)$), denoted by $p_{\rv{Y}|\rv{X}}^{(\mathbf{x}_1^n,\mathbf{y}_1^n)}$, is defined as the empirical conditional distribution induced by the sequences, \ie,
    \begin{equation}
        p_{\rv{Y}|\rv{X}}^{(\mathbf{x}_1^n,\mathbf{y}_1^n)}(y|x) \defeq \frac{p_\rv{XY}^{(\mathbf{x}_1^n,\mathbf{y}_1^n)}}{p_\rv{X}^{(\mathbf{x}_1^n)}}.
    \end{equation}
    Finally, for a conditional type $V_{\rv{Y}|\rv{X}}\in\set{P}_n(\set{Y}|\set{X})$ and a sequence $\mathbf{x}_1^n\in\set{X}^n$, we denote the set of sequences $\mathbf{y}_1^n$ such that $(\mathbf{x}_1^n,\mathbf{y}_1^n)$ is of conditional type $V_{\rv{Y}|\rv{X}}$ by $\set{T}_n(V_{\rv{Y}|\rv{X}},\mathbf{x}_1^n)$, \ie, 
    \begin{equation}
        \set{T}_n(V_{\rv{Y}|\rv{X}},\mathbf{x}_1^n)\defeq\left\{ \mathbf{y}_1^n\in\set{Y}^n \;\middle\vert\; p_{\rv{Y}|\rv{X}}^{(\mathbf{x}_1^n,\mathbf{y}_1^n)}(y|x) = V_{\rv{Y}|\rv{X}} \right\}.
    \end{equation}
    The set $\set{T}_n(V_{\rv{Y}|\rv{X}},\mathbf{x}_1^n)$ is also referred to as the $V$-shell of $\mathbf{x}_1^n$ \cite{csiszar2011information}.
\end{definition}
\begin{lemma}[{\cite[Lemma~2.5]{csiszar2011information}}]\label{lem:conditional:type:size}
For every $\mathbf{x}_1^n\in\set{X}^n$ and $V_{\rv{Y}|\rv{X}}\in\set{P}_n(\set{Y}|\set{X})$, it holds that
\begin{equation}
    n\cdot H_{}(\rv{Y}|\rv{X}) -\size{\set{X}}\cdot\size{\set{Y}}\cdot\log{(n+1)} \leq
    \log{\size*{\set{T}_n(V_{\rv{Y}|\rv{X}},\mathbf{x}_1^n)}} \leq 
    n\cdot H_{}(\rv{Y}|\rv{X})
\end{equation}
where 
$(\rv{X},\rv{Y})$ is distributed according to $ p_\rv{X}^{(\mathbf{x}_1^n)}\cdot V_{\rv{Y}|\rv{X}}$, and $H_{}(\rv{Y}|\rv{X})$ is the conditional entropy.
\end{lemma}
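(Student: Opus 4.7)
\textbf{Proof proposal for Lemma~\ref{lem:conditional:type:size}.}
The plan is to reduce the conditional-type count to a product of ordinary (unconditional) type-class counts, and then handle each factor via the classical one-line probability argument. Write $P\defeq p_\rv{X}^{(\mathbf{x}_1^n)}$ and $n_x\defeq n\cdot P(x)$ for the multiplicity of each $x\in\set{X}$ in $\mathbf{x}_1^n$. The defining property of $\set{T}_n(V_{\rv{Y}|\rv{X}},\mathbf{x}_1^n)$ is that, for every $x$, the $\mathbf{y}$-symbols in the $n_x$ positions where $x_i=x$ form a length-$n_x$ sequence whose (unconditional) type is exactly $V_{\rv{Y}|\rv{X}}(\cdot|x)$. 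Since the positions for distinct $x$'s are disjoint and can be filled independently, this gives the factorization
\begin{equation}
    \size*{\set{T}_n(V_{\rv{Y}|\rv{X}},\mathbf{x}_1^n)} = \prod_{x\in\set{X}:\, n_x>0} \size*{\set{T}_{n_x}\bigl(V_{\rv{Y}|\rv{X}}(\cdot|x)\bigr)}.
\end{equation}

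Next I would establish the standard single-sequence type-class bounds: for any finite alphabet $\set{A}$, any $m\in\mathbb{Z}_{>0}$, and any $Q\in\set{P}_m(\set{A})$,
\begin{equation}\label{eq:single-type-bounds}
    \frac{1}{(m+1)^{\size{\set{A}}}}\, e^{m H(Q)} \;\leq\; \size*{\set{T}_m(Q)} \;\leq\; e^{m H(Q)}.
\end{equation}
The upper bound follows because every $\mathbf{a}\in\set{T}_m(Q)$ has $Q^{\tensor m}(\mathbf{a}) = \prod_a Q(a)^{m Q(a)} = e^{-m H(Q)}$, so $1 \geq Q^{\tensor m}\bigl(\set{T}_m(Q)\bigr) = \size{\set{T}_m(Q)}\cdot e^{-m H(Q)}$. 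The lower bound uses the fact that among all $P\in\set{P}_m(\set{A})$, the quantity $Q^{\tensor m}(\set{T}_m(P))$ is maximized at $P=Q$ (a short calculation with the $Q^{\tensor m}(\mathbf{a})=e^{-m(H(P)+D(P\Vert Q))}$ identity for $\mathbf{a}\in\set{T}_m(P)$), together with $\size{\set{P}_m(\set{A})}\leq (m+1)^{\size{\set{A}}}$, so that
\begin{equation}
    1 = \sum_{P\in\set{P}_m(\set{A})} Q^{\tensor m}\bigl(\set{T}_m(P)\bigr) \leq (m+1)^{\size{\set{A}}} \cdot Q^{\tensor m}\bigl(\set{T}_m(Q)\bigr) = (m+1)^{\size{\set{A}}}\size*{\set{T}_m(Q)}\, e^{-mH(Q)}.
\end{equation}

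Combining the factorization with~\eqref{eq:single-type-bounds} applied with $\set{A}=\set{Y}$, $m=n_x$, $Q=V_{\rv{Y}|\rv{X}}(\cdot|x)$, and using $n_x+1\leq n+1$, the upper estimate gives
\begin{equation}
    \size*{\set{T}_n(V_{\rv{Y}|\rv{X}},\mathbf{x}_1^n)} \leq \prod_{x} e^{n_x H(V_{\rv{Y}|\rv{X}}(\cdot|x))} = \exp\!\left(n\sum_{x} P(x)\, H\bigl(V_{\rv{Y}|\rv{X}}(\cdot|x)\bigr)\right) = e^{n H(\rv{Y}|\rv{X})},
\end{equation}
since $H(\rv{Y}|\rv{X}) = \sum_x P(x)\, H(V_{\rv{Y}|\rv{X}}(\cdot|x))$ by definition of the conditional entropy under the joint law $P\cdot V_{\rv{Y}|\rv{X}}$. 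The lower estimate similarly produces a prefactor of $\prod_x (n_x+1)^{-\size{\set{Y}}} \geq (n+1)^{-\size{\set{X}}\cdot\size{\set{Y}}}$, which upon taking logarithms yields the claimed $-\size{\set{X}}\cdot\size{\set{Y}}\cdot\log(n+1)$ correction.

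There is no genuine obstacle here; the only point requiring care is the lower-bound step, where one must justify that $\set{T}_m(Q)$ is the most probable type class under $Q^{\tensor m}$ and that the slack $(n_x+1)^{\size{\set{Y}}} \leq (n+1)^{\size{\set{Y}}}$ for every $x$ — including $x$ with $n_x = 0$, where the corresponding factor in the product is simply absent and contributes $H\bigl(V_{\rv{Y}|\rv{X}}(\cdot|x)\bigr)\cdot P(x) = 0$ to the exponent — combines cleanly to the global prefactor $(n+1)^{\size{\set{X}}\size{\set{Y}}}$.
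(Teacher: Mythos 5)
Your argument is correct and is precisely the standard Csisz\'ar--K\"orner proof of this result, which the paper simply cites as \cite[Lemma~2.5]{csiszar2011information} without reproving: factor the $V$-shell into independent unconditional type classes over the positions where $x_i=x$, then apply the one-line counting bounds $\frac{1}{(m+1)^{\size{\set{A}}}}e^{mH(Q)}\leq\size{\set{T}_m(Q)}\leq e^{mH(Q)}$ to each factor. The only caveat worth recording is that the lower bound presupposes $\set{T}_n(V_{\rv{Y}|\rv{X}},\mathbf{x}_1^n)\neq\emptyset$, i.e.\ that $V_{\rv{Y}|\rv{X}}(\cdot|x)\in\set{P}_{n_x}(\set{Y})$ for every $x$ appearing in $\mathbf{x}_1^n$ — a compatibility condition the lemma statement (and the original reference) implicitly assumes and which your factorization already uses.
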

\begin{lemma}\label{lem:conditional:type:approx}
Let $\set{X}$ and $\set{Y}$ be two finite sets, and let $n$ be a positive integer.
Let $V_{\rv{Y}|\rv{X}}\in\set{P}(\set{Y}|\set{X})$ be a conditional pmf.
For each $n$-denominator type on $\set{X}$ $p_\rv{X}\in\set{P}_n(\set{X})$, there exists an approximation $V_{\rv{Y}|\rv{X}}^{(p_\rv{X})}\in\set{P}_n(\set{Y}|\set{X})$ of $V_{\rv{Y}|\rv{X}}$ such that
\begin{equation}\label{eq:conditional:type:approx}
    \abs*{p_\rv{X}(x)\cdot V_{\rv{Y}|\rv{X}}^{(p_\rv{X})}(y|x) - p_\rv{X}(x)\cdot V_{\rv{Y}|\rv{X}}(y|x)} \leq \frac{1}{n}
\end{equation}
for all $(x,y)\in\set{X}\times\set{Y}$.
\end{lemma}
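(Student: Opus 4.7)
The plan is to reduce Lemma~\ref{lem:conditional:type:approx} to a pointwise application of Lemma~\ref{lem:type:approx}. The key observation is that a conditional type with denominator $n$ whose marginal is a fixed $p_\rv{X}\in\set{P}_n(\set{X})$ is equivalent (via Definition~\ref{def:conditonal:type}) to specifying, for each $x$ in the support of $p_\rv{X}$, a pmf on $\set{Y}$ with denominator $m_x\defeq n\cdot p_\rv{X}(x)\in\integers_{\ge 0}$. Thus the problem decouples across $x$.

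Concretely, I would construct $V_{\rv{Y}|\rv{X}}^{(p_\rv{X})}$ as follows. For every $x$ with $m_x>0$, invoke Lemma~\ref{lem:type:approx} on the pmf $V_{\rv{Y}|\rv{X}}(\cdot|x)\in\set{P}(\set{Y})$ with denominator $m_x$ to obtain some $V_{\rv{Y}|\rv{X}}^{(p_\rv{X})}(\cdot|x)\in\set{P}_{m_x}(\set{Y})$ satisfying $\abs{V_{\rv{Y}|\rv{X}}^{(p_\rv{X})}(y|x)-V_{\rv{Y}|\rv{X}}(y|x)}\le 1/m_x$ for all $y$. For every $x$ with $m_x=0$, set $V_{\rv{Y}|\rv{X}}^{(p_\rv{X})}(\cdot|x)$ to be an arbitrary pmf on $\set{Y}$ whose values are integer multiples of $1/n$ (for instance, any element of $\set{P}_n(\set{Y})$); since $p_\rv{X}(x)=0$, this choice is immaterial for the bound.

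To verify the lemma, I would then check the two required properties. First, for the approximation bound \eqref{eq:conditional:type:approx}: when $m_x>0$, multiplying the pointwise estimate by $p_\rv{X}(x)=m_x/n$ yields $\abs{p_\rv{X}(x)\cdot V_{\rv{Y}|\rv{X}}^{(p_\rv{X})}(y|x)-p_\rv{X}(x)\cdot V_{\rv{Y}|\rv{X}}(y|x)}\le (m_x/n)\cdot(1/m_x)=1/n$, and when $m_x=0$ both terms vanish. Second, for membership $V_{\rv{Y}|\rv{X}}^{(p_\rv{X})}\in\set{P}_n(\set{Y}|\set{X})$: one computes $n\cdot p_\rv{X}(x)\cdot V_{\rv{Y}|\rv{X}}^{(p_\rv{X})}(y|x)=m_x\cdot V_{\rv{Y}|\rv{X}}^{(p_\rv{X})}(y|x)$, which is an integer in the case $m_x>0$ by construction of $V_{\rv{Y}|\rv{X}}^{(p_\rv{X})}(\cdot|x)\in\set{P}_{m_x}(\set{Y})$ and vanishes when $m_x=0$; hence $p_\rv{X}\cdot V_{\rv{Y}|\rv{X}}^{(p_\rv{X})}\in\set{P}_n(\set{X}\times\set{Y})$ with $p_\rv{X}$ serving as the witnessing $n$-denominator marginal in Definition~\ref{def:conditonal:type}.

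There is no substantial obstacle; the only mildly delicate point is the cosmetic one of handling $x\notin\mathrm{supp}(p_\rv{X})$, where the conditional pmf is unconstrained by the inequality but still must lie in a suitable type class to keep $V_{\rv{Y}|\rv{X}}^{(p_\rv{X})}$ a valid element of $\set{P}_n(\set{Y}|\set{X})$. The argument is otherwise a direct pointwise rescaling of Lemma~\ref{lem:type:approx}.
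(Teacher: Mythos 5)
Your proof is correct and in spirit matches the paper's: the paper also fixes a representative $\mathbf{x}_1^n\in\set{T}_n(p_\rv{X})$ and rounds the joint distribution $p_\rv{X}\cdot V_{\rv{Y}|\rv{X}}$ to a joint type via a companion sequence $\mathbf{y}_1^n$, which is implicitly the same per-$x$ rounding (with denominator $m_x=n\,p_\rv{X}(x)$) that you carry out explicitly. Your version has the small advantage of making transparent why the rounded joint type has $\set{X}$-marginal exactly $p_\rv{X}$ — a point the paper's one-line justification leaves implicit — and of handling $x\notin\mathrm{supp}(p_\rv{X})$ cleanly; otherwise the two arguments are the same.
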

\begin{proof}
See Appendix~\ref{app:continuity:proofs}.
\end{proof}
We shall also need the following continuity lemma of the Kullback-Leibler  divergence for proving Proposition~\ref{prop:EE-converse}.
\begin{lemma}\label{lem:D:continuity}
Let $\set{A}$ be a finite set, and $\xi\in(0,\frac{1}{e})$
For any two pmfs $p,p'\in\set{P}(\set{A})$ such that $\abs*{p(a)-p'(a)}\leq\xi$ { for all $a\in\set{A}$}, it holds for all $q\in\set{P}(\set{A})$, with $p\ll q$ and $p'\ll q$, that
\begin{equation}
\abs*{D\infdiv*{p}{q} - D\infdiv*{p'}{q}} \leq \xi\cdot \size*{\set{A}} \cdot \log{\frac{1}{q_{\min}}} + \size*{\set{A}}\cdot\xi\cdot\log{\frac{1}{\xi}}
\end{equation}
where $q_{\min}\defeq \min_{a\in\set{A}:\, q(a)>0} q(a)$.
\end{lemma}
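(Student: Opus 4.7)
The plan is to decompose $D\infdiv*{p}{q} - D\infdiv*{p'}{q}$ into a part that depends on $q$ only linearly in $p-p'$, and a part that is just an entropy difference of $p$ and $p'$, and then bound each separately. Concretely, writing $D\infdiv*{p}{q} = -H(p) - \sum_a p(a) \log q(a)$ (and similarly for $p'$), one gets
\begin{equation}
    D\infdiv*{p}{q} - D\infdiv*{p'}{q} = \bigl(H(p') - H(p)\bigr) + \sum_{a\in\set{A}} \bigl(p(a) - p'(a)\bigr)\log \frac{1}{q(a)}.
\end{equation}
The point of this split is that the troublesome $\log(1/q)$ factor, which blows up when $q$ is small, no longer appears inside the entropy-like piece.

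For the second term, note that whenever $q(a) = 0$ the absolute continuity assumptions force $p(a) = p'(a) = 0$, so the sum effectively runs over the support of $q$. On that support $\log(1/q(a)) \leq \log(1/q_{\min})$, so by the hypothesis $\abs{p(a) - p'(a)} \leq \xi$ we obtain an upper bound of $\xi \cdot \size{\set{A}} \cdot \log(1/q_{\min})$, matching the first term on the right-hand side of the claim.

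For the first term I would invoke the standard entropic-continuity estimate: the function $f(t) = -t\log t$ is concave on $[0,1]$ with $f(0)=0$, and a short calculus argument (using that $a \mapsto f(a+\xi) - f(a)$ is decreasing on $[0, 1-\xi]$ and that $\xi \leq 1/e$ lies in the increasing branch of $f$) gives $\abs{f(a) - f(b)} \leq -\xi\log\xi$ whenever $\abs{a-b} \leq \xi$ and $a,b \in [0,1]$. Summing over $a\in \set{A}$ yields $\abs{H(p) - H(p')} \leq \size{\set{A}} \cdot \xi \log(1/\xi)$, which is the second term on the right-hand side of the claim. Combining the two bounds via the triangle inequality finishes the proof.

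The only mildly delicate step is the entropy-continuity estimate $\abs{f(a)-f(b)} \leq -\xi \log \xi$, since $f$ has unbounded derivative near $0$; the restriction $\xi < 1/e$ is precisely what is needed to tame this. Everything else is a direct splitting argument plus the trivial bound $\log(1/q(a)) \leq \log(1/q_{\min})$ on the support of $q$.
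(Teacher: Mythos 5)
Your proposal is correct and follows essentially the same approach as the paper: both split $D(p\|q)-D(p'\|q)$ into a cross term $\sum_a\bigl(p(a)-p'(a)\bigr)\log\tfrac{1}{q(a)}$ bounded by $\xi\,\size{\set{A}}\log\tfrac{1}{q_{\min}}$ and an entropy-type term bounded via the modulus of continuity of $t\mapsto -t\log t$, yielding $\size{\set{A}}\,\xi\log\tfrac{1}{\xi}$. The only cosmetic difference is that you name the entropy $H(p)$ explicitly before applying the triangle inequality, whereas the paper applies the triangle inequality directly inside the sum; the resulting bounds are identical.
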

\begin{proof}
See Appendix~\ref{app:continuity:proofs}.
\end{proof}
\begin{proof}[Proof of Proposition~\ref{prop:EE-converse}]
Again, we start with Proposition~\ref{prop:epsilon:NS:exact}.
By~\eqref{eq:epsilon:NS:exact:2}, we have
\begin{align}
    \epsilon^\fnc{NS}(\floor{e^{nr}})
    &= \adjustlimits \inf_{q_{\rvs{Y}_1^n}\in\set{P}(\set{Y}^n)} \sup_{p_{\rvs{X}_1^n}\in\set{P}(\set{X}^n)}\ 
    \sum_{\mathbf{x}_1^n\in\set{X}^n, \mathbf{y}_1^n\in\set{Y}^n}
    \left( p_{\rvs{X}_1^n}(\mathbf{x}_1^n)\cdot W_{\rv{Y}|\rv{X}}^{\tensor n}(\mathbf{y}_1^n|\mathbf{x}_1^n) - \floor{e^{nr}} \cdot p_{\rvs{X}_1^n}(\mathbf{x}_1^n)\cdot q_{\rvs{Y}_1^n}(\mathbf{y}_1^n)\right)_+ \\
    &\geq \adjustlimits \inf_{q_{\rvs{Y}_1^n}\in\set{P}(\set{Y}^n)} \sup_{p_{\rvs{X}_1^n}\in\set{P}(\set{X}^n)}\ 
    \sum_{\mathbf{x}_1^n\in\set{X}^n, \mathbf{y}_1^n\in\set{Y}^n}
    \left( p_{\rvs{X}_1^n}(\mathbf{x}_1^n)\cdot W_{\rv{Y}|\rv{X}}^{\tensor n}(\mathbf{y}_1^n|\mathbf{x}_1^n) - e^{nr} \cdot p_{\rvs{X}_1^n}(\mathbf{x}_1^n)\cdot q_{\rvs{Y}_1^n}(\mathbf{y}_1^n)\right)_+ .
\end{align}
Restricting the supremum to uniform distribution over sequences of a same type (see Definition~\ref{def:type}), we have
\begin{equation}
    \epsilon^\fnc{NS}(\floor{e^{nr}}) \geq \inf_{q_{\rvs{Y}_1^n}\in\set{P}(\set{Y}^n)} 
    \underbrace{\sup_{p_{\rv{X}}\in\set{P}_n(\set{X})}\ 
    \sum_{\mathbf{x}_1^n\in\set{T}_n(p_\rv{X})} \frac{1}{\size*{\set{T}_n(p_\rv{X})}}
    \sum_{\mathbf{y}_1^n\in\set{Y}^n}
    \left( W_{\rv{Y}|\rv{X}}^{\tensor n}(\mathbf{y}_1^n|\mathbf{x}_1^n) - e^{nr} \cdot q_{\rvs{Y}_1^n}(\mathbf{y}_1^n)\right)_+}_{\text{convex and permutation invariant \wrt } q_{\rvs{Y}_1^n}} .
\end{equation}
Since the argument after infimum is a supremum of convex functions of $q_{\rvs{Y}_1^n}$, and thus remains convex.
Observe that it is also permutation invariant.
Hence, there  exists\footnote{This holds when the infimum can be achieved, which we know to be the case since the target function is continuous and $\set{P}(\set{Y}^n)$ is a compact set.} a permutation invariant $q^\star_{\rvs{Y}_1^n}$ that achieves the infimum. 
By Lemma~\ref{lem:perm-inv-de-finetti}, there exists some distribution $\nu$ over $\set{P}(\set{Y})$ such that 
\begin{equation}
    q^\star_{\rvs{Y}_1^n} \leq\binom{n+|\set{Y}|-1}{n} \int \diff{\nu}(q_\rv{Y}) q_\rv{Y}^{\otimes n}
\leq (n+1)^{\size*{\set{Y}}-1} \cdot \underbrace{\int \diff\nu(q_\rv{Y})\ q_\rv{Y}^{\tensor n}}_{\defas \mu_{\rvs{Y}_1^n}}.
\end{equation}
Denoting the pmf after the multiplicative factor on the RHS by $\mu_{\rvs{Y}_1^n}$, we have 
\begin{equation}
    \epsilon^\fnc{NS}(\floor{e^{nr}}) \geq
    \sup_{p_{\rv{X}}\in\set{P}_n(\set{X})}\ 
    \sum_{\mathbf{x}_1^n\in\set{T}_n(p_\rv{X})} \frac{1}{\size*{\set{T}_n(p_\rv{X})}}
    \sum_{\mathbf{y}_1^n\in\set{Y}^n}
    \left( W_{\rv{Y}|\rv{X}}^{\tensor n}(\mathbf{y}_1^n|\mathbf{x}_1^n) - e^{nr} \cdot (n+1)^{\size*{\set{Y}}-1}\cdot\mu_{\rvs{Y}_1^n}(\mathbf{y}_1^n)\right)_+ \!.
\end{equation}

For \emph{any} $V^{(n)}_{\rv{Y}|\rv{X}}\in\set{P}_{n}(\set{Y}|\set{X})$, restricting the second summation to sequences $\mathbf{y_1^n}$ to the ones such that $(\mathbf{x}_1^n,\mathbf{y}_1^n)$ is of conditional type $V^{(n)}_{\rv{Y}|\rv{X}}$ (see Definition~\ref{def:conditonal:type}), we have 
\begin{equation}
    \epsilon^\fnc{NS}(\floor{e^{nr}}) \geq
    \sup_{p_{\rv{X}}\in\set{P}_n(\set{X})}\ 
    \sum_{\mathbf{x}_1^n\in\set{T}_n(p_\rv{X})} \frac{1}{\size*{\set{T}_n(p_\rv{X})}}
    \sum_{\mathbf{y}_1^n\in\set{T}_n(V^{(n)}_{\rv{Y}|\rv{X}},\mathbf{x}_1^n)}
    \left( W_{\rv{Y}|\rv{X}}^{\tensor n}(\mathbf{y}_1^n|\mathbf{x}_1^n) - e^{nr} \cdot (n+1)^{\size*{\set{Y}}-1}\cdot\mu_{\rvs{Y}_1^n}(\mathbf{y}_1^n)\right)_+ .
\end{equation}
By direct calculation, for each $\mathbf{x}_1^n\in\set{T}_n(p_\rv{X})$ and $\mathbf{y}_1^n\in\set{T}_n(V^{(n)}_{\rv{Y}|\rv{X}},\mathbf{x}_1^n)$ (or equivalently, $(\mathbf{x}_1^n,\mathbf{y}_1^n)\in\set{T}_n(p_\rv{X}\cdot V^{(n)}_{\rv{Y}|\rv{X}})$) it holds that \cite{csiszar2011information}
\begin{align}
\label{eq:W:conditional:type:identity}
W_{\rv{Y}|\rv{X}}^{\tensor n}(\mathbf{y}_1^n|\mathbf{x}_1^n)
&=\exp\left(-n\cdot H(\rv{Y}|\rv{X})\right) \cdot  \exp\left(-n\cdot D\infdiv*{p_\rv{X}\cdot V^{(n)}_{\rv{Y}|\rv{X}}}{p_\rv{X}\cdot W_{\rv{Y}|\rv{X}}} \right),\\
\label{eq:mu:conditional:type:identity}
\mu_{\rvs{Y}_1^n}(\mathbf{y}_1^n)
&=\int \diff\nu(q_\rv{Y})\ \cdot \exp\left(-n\cdot\left(D\infdiv*{p_\rv{X}\cdot V^{(n)}_{\rv{Y}|\rv{X}}}{p_\rv{X}\cdot q_\rv{Y}} + H(\rv{Y}|\rv{X})\right)\right) \\
&\leq \exp\left(-n\cdot H(\rv{Y}|\rv{X})\right) \cdot \max_{q_\rv{Y}\in\set{P}(\set{Y})} \exp\left(-n\cdot D\infdiv*{p_\rv{X}\cdot V^{(n)}_{\rv{Y}|\rv{X}}}{p_\rv{X}\cdot q_\rv{Y}}\right),
\end{align}
where the random variables $(\rv{X},\rv{Y})$ are distributed according to $p_\rv{X}\cdot V^{(n)}_{\rv{Y}|\rv{X}}$ for the conditional entropy $H(\rv{Y}|\rv{X})$ in  the above expressions.
This allows us to further write
\begin{equation}\begin{aligned}
    \epsilon^\fnc{NS}(\floor{e^{nr}}) \geq
    \sup_{p_{\rv{X}}\in\set{P}_n(\set{X})}\ 
    \sum_{\mathbf{x}_1^n\in\set{T}_n(p_\rv{X})} \frac{1}{\size*{\set{T}_n(p_\rv{X})}}
    \sum_{\mathbf{y}_1^n\in\set{T}_n(V^{(n)}_{\rv{Y}|\rv{X}},\mathbf{x}_1^n)}\!\!\!
    \Bigg( \!\exp\left(-n\!\cdot\!\left(D\infdiv*{p_\rv{X}\!\cdot\! V^{(n)}_{\rv{Y}|\rv{X}}}{p_\rv{X}\!\cdot\! W_{\rv{Y}|\rv{X}}} \!+\! H(\rv{Y}|\rv{X})\right)\!\right)\\
    - e^{nr} \cdot (n+1)^{\size*{\set{Y}}-1}\cdot \exp\left(-n\cdot H(\rv{Y}|\rv{X})\right) \cdot \max_{q_\rv{Y}\in\set{P}(\set{Y})} \exp\left(-n\cdot D\infdiv*{p_\rv{X}\cdot V^{(n)}_{\rv{Y}|\rv{X}}}{p_\rv{X}\cdot q_\rv{Y}}\right)\Bigg)_+ .
\end{aligned}\end{equation}
By Lemma~\ref{lem:conditional:type:size}~\cite[Lemma~2.5]{csiszar2011information}, we know 
\begin{equation}\label{eq:size:conditional:type}
\size*{\set{T}_n(V^{(n)}_{\rv{Y}|\rv{X}},\mathbf{x}_1^n)} \geq
\frac{1}{(n+1)^{\size*{\set{X}}\cdot\size*{\set{Y}}}} \cdot \exp\left(n\cdot H(\rv{Y}|\rv{X})\right),
\end{equation}
and we can therefore bound $\epsilon^\fnc{NS}$ as
\begin{equation}\begin{aligned}
    \epsilon^\fnc{NS}(\floor{e^{nr}}) \geq
    \sup_{p_{\rv{X}}\in\set{P}_n(\set{X})}\ 
    \frac{1}{(n+1)^{\size*{\set{X}}\cdot\size*{\set{Y}}}} \cdot
    \Bigg( \exp\left(-n\cdot D\infdiv*{p_\rv{X}\cdot V^{(n)}_{\rv{Y}|\rv{X}}}{p_\rv{X}\cdot W_{\rv{Y}|\rv{X}}}\right)\ldots\hspace{76pt}\\
    - e^{nr} \cdot (n+1)^{\size*{\set{Y}}-1}\cdot \max_{q_\rv{Y}\in\set{P}(\set{Y})} \exp\left(-n\cdot D\infdiv*{p_\rv{X}\cdot V^{(n)}_{\rv{Y}|\rv{X}}}{p_\rv{X}\cdot q_\rv{Y}}\right)\Bigg)_+ .
\end{aligned}\end{equation}
We can further restrict the maximization over $q_\rv{Y}$'s to a set of pmfs bounded away from zero.
In particular, for each $q_\rv{Y}\in\set{P}(\set{Y})$, we consider the pmf 
\begin{equation}\label{eq:bound:q:away:from:zero}
q_\rv{Y}^{(n)} \defeq \frac{q_\rv{Y}+\frac{1}{n}}{1+\frac{1}{n}\cdot\size*{\set{Y}}}.
\end{equation}
One can show by direct calculation that 
\begin{equation}
D\infdiv*{p_\rv{X}\cdot V_{\rv{Y}|\rv{X}}^{(n)}}{p_\rv{X}\times q_\rv{Y}^{(n)}} - D\infdiv*{p_\rv{X}\cdot V_{\rv{Y}|\rv{X}}^{(n)}}{p_\rv{X}\times q_\rv{Y}} \leq \frac{\size*{\set{Y}}}{n}.
\end{equation}
Note that $q_\rv{Y}^{(n)}$ is bounded away from zero, in particular $q_\rv{Y}^{(n)}(y) \geq \frac{1}{n+\size*{\set{Y}}}$ for all $y\in\set{Y}$.
Thus, by substituting $q_\rv{Y}$ by $q_\rv{Y}^{(\eta)}$ then relaxing the domain, we have
\begin{equation}\begin{aligned}
    \epsilon^\fnc{NS}(\floor{e^{nr}}) \geq
    \sup_{p_{\rv{X}}\in\set{P}_n(\set{X})}\ 
    \frac{1}{(n+1)^{\size*{\set{X}}\cdot\size*{\set{Y}}}} \cdot
    \Bigg( \exp\left(-n\cdot D\infdiv*{p_\rv{X}\cdot V^{(n)}_{\rv{Y}|\rv{X}}}{p_\rv{X}\cdot W_{\rv{Y}|\rv{X}}}\right)\ldots\hspace{76pt}\\
    - e^{nr} \cdot e^{\size*{\set{Y}}} \cdot (n+1)^{\size*{\set{Y}}-1} \cdot\sup_{q_\rv{Y}\in\set{P}(\set{Y}): q_\rv{Y} \geq \frac{1}{n+\size*{\set{Y}}}} \exp\left(-n\cdot D\infdiv*{p_\rv{X}\cdot V^{(n)}_{\rv{Y}|\rv{X}}}{p_\rv{X}\cdot q_\rv{Y}}\right)\Bigg)_+ .
\end{aligned}\end{equation}

Note that the above inequality holds for all conditional types $V^{(n)}_{\rv{Y}|\rv{X}}\in\set{P}_n(\set{Y}|\set{X})$, and thus for the supremum over all these types as well.
In particular, for each type $p_\rv{X}\in\set{P}_n(\set{X})$, we can restrict the supremum to approximations $V_{\rv{Y}|\rv{X}}^{(p_\rv{X})}\in\set{P}_n(\set{Y}|\set{X})$ of $V_{\rv{Y}|\rv{X}}$ (see Lemma~\ref{lem:conditional:type:approx}) where $V_{\rv{Y}|\rv{X}}\in\set{P}(\set{Y}|\set{X})$ can be any conditional pmf such that $V_{\rv{Y}|\rv{X}}\ll W_{\rv{Y}|\rv{X}}$, \ie,
\begin{equation}\label{eq:approx:V}\begin{aligned}
    \epsilon^\fnc{NS}(\floor{e^{nr}}) \geq
    \adjustlimits\sup_{p_{\rv{X}}\in\set{P}_n(\set{X})}
    \sup_{V_{\rv{Y}|\rv{X}}\in\set{P}(\set{Y}|\set{X}): V_{\rv{Y}|\rv{X}}\ll W_{\rv{Y}|\rv{X}}}\ 
    \frac{1}{(n+1)^{\size*{\set{X}}\cdot\size*{\set{Y}}}} \cdot
    \Bigg( \exp\left(\!-n\!\cdot\! D\infdiv*{p_\rv{X}\!\cdot\! V^{(p_\rv{X})}_{\rv{Y}|\rv{X}}}{p_\rv{X}\!\cdot\! W_{\rv{Y}|\rv{X}}}\right)\\
    - e^{nr} \cdot e^{\size*{\set{Y}}} \cdot (n+1)^{\size*{\set{Y}}-1}\cdot \sup_{q_\rv{Y}\in\set{P}(\set{Y}): q_\rv{Y} \geq \frac{1}{n+\size*{\set{Y}}}} \exp\left(-n\cdot D\infdiv*{p_\rv{X}\cdot V^{(p_\rv{X})}_{\rv{Y}|\rv{X}}}{p_\rv{X}\cdot q_\rv{Y}}\right)\Bigg)_+ .
\end{aligned}\end{equation}
By the construction of $V_{\rv{Y}|\rv{X}}^{(p_\rv{X})}$ in Lemma~\ref{lem:conditional:type:approx}, we know $p_\rv{X}\cdot V_{\rv{Y}|\rv{X}}^{(p_\rv{X})}\ll p_\rv{X}\cdot V_{\rv{Y}|\rv{X}}$.
This implies $p_\rv{X}\cdot V_{\rv{Y}|\rv{X}}^{(p_\rv{X})}\ll p_\rv{X}\cdot W_{\rv{Y}|\rv{X}}$.
Using~\eqref{eq:conditional:type:approx} and the continuity lemma on the Kullback-Leibler  divergence (Lemma~\ref{lem:D:continuity}), we have
\begin{align}
\abs*{D\infdiv*{p_\rv{X}\cdot V^{(p_\rv{X})}_{\rv{Y}|\rv{X}}}{p_\rv{X}\cdot W_{\rv{Y}|\rv{X}}} - D\infdiv*{p_\rv{X}\cdot V_{\rv{Y}|\rv{X}}}{p_\rv{X}\cdot W_{\rv{Y}|\rv{X}}} }
& \leq \frac{\size{\set{X}}\!\cdot\! \size{\set{Y}}}{n} \!\cdot\! \log{\frac{1}{W_{\min}}} + \size{\set{X}}\!\cdot\! \size{\set{Y}} \!\cdot\! \frac{1}{n}\log{n} \\
\abs*{D\infdiv*{p_\rv{X}\cdot V^{(p_\rv{X})}_{\rv{Y}|\rv{X}}}{p_\rv{X}\cdot q_\rv{Y}} - D\infdiv*{p_\rv{X}\cdot V_{\rv{Y}|\rv{X}}}{p_\rv{X}\cdot q_\rv{Y}}}
&\leq \frac{\size{\set{X}}\!\cdot\! \size{\set{Y}}}{n} \!\cdot\! \log{\left(n\!+\!\size*{\set{Y}}\right)} + \size{\set{X}} \!\cdot\! \size{\set{Y}} \!\cdot\! \frac{1}{n}\log{n}
\end{align}
for all integers $n\geq 3$, where we use the notation  $W_{\min}\defeq \min_{(x,y)\in\set{X}\times\set{Y}:\, W_{\rv{Y}|\rv{X}}(y|x)>0} W_{\rv{Y}|\rv{X}}(y|x)$.
Hence, we have
\begin{equation}\begin{aligned}
    \epsilon^\fnc{NS}(\floor{e^{nr}}) \geq
    \adjustlimits\sup_{p_{\rv{X}}\in\set{P}_n(\set{X})}
    \sup_{\genfrac{}{}{0pt}{}{V_{\rv{Y}|\rv{X}}\in\set{P}(\set{Y}|\set{X}):}{V_{\rv{Y}|\rv{X}}\ll W_{\rv{Y}|\rv{X}}}}
    \frac{1}{(n+1)^{\size*{\set{X}}\cdot\size*{\set{Y}}}} \cdot
    \Bigg( \!\!\!\left(\frac{W_{\min}}{n}\right)^{\size*{\set{X}}\cdot\size*{\set{Y}}}\hspace{-10pt}\cdot \exp\left(\!-n\!\cdot\! D\infdiv*{p_\rv{X}\!\cdot\! V_{\rv{Y}|\rv{X}}}{p_\rv{X}\!\cdot\! W_{\rv{Y}|\rv{X}}}\right)\\
    - e^{nr} \cdot e^{\size*{\set{Y}}} \cdot (n+1)^{\size*{\set{Y}}-1}\cdot \big(n\cdot(n+\size*{\set{Y}})\big)^{\size*{\set{X}}\cdot\size*{\set{Y}}} \cdot \sup_{q_\rv{Y}\in\set{P}(\set{Y}):q_\rv{Y}\geq\frac{1}{n+\size*{\set{Y}}}} \exp\left(-n\cdot D\infdiv*{p_\rv{X}\cdot V_{\rv{Y}|\rv{X}}}{p_\rv{X}\cdot q_\rv{Y}}\right)\Bigg)_+ .
\end{aligned}\end{equation}
Using a similar approach, we can further restrict the first supremum over $p_\rv{X}\in\set{P}_n(\set{X})$, to approximations $p^{(n)}_\rv{X}$ of $p_\rv{X}\in\set{P}(\set{X})$ (see Lemma~\ref{lem:type:approx}).
Using the fact that $\abs*{p^{(n)}_\rv{X}(x)-p_\rv{X}(x)}\leq\frac{1}{n}$ for all $x\in\set{X}$, we have for all $p_\rv{X}\in\set{P}(\set{X})$ that
\begin{align}
\abs*{D\infdiv*{p^{(n)}_\rv{X}\cdot V_{\rv{Y}|\rv{X}}}{p_\rv{X}\cdot W_{\rv{Y}|\rv{X}}} - D\infdiv*{p_\rv{X}\cdot V_{\rv{Y}|\rv{X}}}{p_\rv{X}\cdot W_{\rv{Y}|\rv{X}}} }
& \leq \frac{\size{\set{X}}\cdot \size{\set{Y}}}{n}\cdot \log{\frac{1}{W_{\min}}} \\
\abs*{D\infdiv*{p^{(n)}_\rv{X}\cdot V_{\rv{Y}|\rv{X}}}{p_\rv{X}\cdot q_\rv{Y}} - D\infdiv*{p_\rv{X}\cdot V_{\rv{Y}|\rv{X}}}{p_\rv{X}\cdot q_\rv{Y}}}
&\leq \frac{\size{\set{X}}\cdot \size{\set{Y}}}{n}\cdot \log{\left(n+\size*{\set{Y}}\right)}
\end{align}
for all integers $n\geq 3$.
Therefore, we can further rewrite
\begin{align}
    \label{EE:converse:temp:1}
    &\begin{aligned}
    \epsilon^\fnc{NS}(\floor{e^{nr}}) \geq
    \adjustlimits\sup_{p_{\rv{X}}\in\set{P}(\set{X})}
    \sup_{V_{\rv{Y}|\rv{X}}\in\set{P}(\set{Y}|\set{X})}\ 
    \frac{1}{(n+1)^{\size*{\set{X}}\cdot\size*{\set{Y}}}} \cdot
    \Bigg( \underbrace{\left(\frac{W_{\min}^2}{n}\right)^{\size*{\set{X}}\cdot\size*{\set{Y}}}}_{\defas A_n} \cdot \exp\left(-n\cdot D\infdiv*{p_\rv{X}\cdot V_{\rv{Y}|\rv{X}}}{p_\rv{X}\cdot W_{\rv{Y}|\rv{X}}}\right)\\
    - e^{nr} \cdot \underbrace{e^{\size*{\set{Y}}} \cdot (n+1)^{\size*{\set{Y}}-1}\cdot \big(n\cdot(n+\size*{\set{Y}})^2\big)^{\size*{\set{X}}\cdot\size*{\set{Y}}}}_{\defas B_n} \cdot \sup_{q_\rv{Y}\in\set{P}(\set{Y})} \exp\left(-n\cdot D\infdiv*{p_\rv{X}\cdot V_{\rv{Y}|\rv{X}}}{p_\rv{X}\cdot q_\rv{Y}}\right)\Bigg)_+ \end{aligned}\\
    \label{EE:converse:temp:2}
    & \begin{aligned} \phantom{\epsilon^\fnc{NS}(\floor{e^{nr}})} = 
    \adjustlimits\sup_{p_{\rv{X}}\in\set{P}(\set{X})}
    \sup_{V_{\rv{Y}|\rv{X}}\in\set{P}(\set{Y}|\set{X})}\ 
    \frac{1}{(n+1)^{\size*{\set{X}}\cdot\size*{\set{Y}}}} \cdot \bigg( A_n\cdot e^{-n\cdot D\infdiv*{p_\rv{X}\cdot V_{\rv{Y}|\rv{X}}}{p_\rv{X}\cdot W_{\rv{Y}|\rv{X}}}} - \ldots \hspace{66pt}\\ e^{nr}\cdot B_n \cdot e^{-n\cdot D\infdiv*{p_\rv{X}\cdot V_{\rv{Y}|\rv{X}}}{p_\rv{X}\cdot p_\rv{Y}}} \bigg)_+. \end{aligned}
\end{align}
Note that in~\eqref{EE:converse:temp:1} we also relaxed the domains of $V_{\rv{Y}|\rv{X}}$ and $q_\rv{Y}$ in the above expression.
The former can be done because all $V_{\rv{Y}|\rv{X}}\not\ll W_{\rv{Y}|\rv{X}}$ will either result in zero or can be equivalent to a $V_{\rv{Y}|\rv{X}}\ll W_{\rv{Y}|\rv{X}}$.
The latter can be done due to the minus sign before the maximization. In \eqref{EE:converse:temp:2}, 
we use the fact that $\inf_{q_\rv{Y}\in\set{P}(\set{Y})} D\infdiv*{p_\rv{X}\cdot V_{\rv{Y}|\rv{X}}}{p_\rv{X}\times q_\rv{Y}} = D\infdiv*{p_\rv{X}\cdot V_{\rv{Y}|\rv{X}}}{p_\rv{X}\times p_\rv{Y}}$ where 
\begin{equation}
p_\rv{Y}\defeq \sum_{x\in\set{X}} p_\rv{X}(x)\cdot V_{\rv{Y}|\rv{X}}(\cdot|x)
\end{equation}
is the output distribution corresponding to input $p_\rv{X}$ and channel $V_{\rv{Y}|\rv{X}}$.

Now, we would like to further bound the above expression by restricting $V_{\rv{Y}|\rv{X}}$ to the following subset of conditional pmfs
\begin{equation}
    \set{V}_n \defeq \left\{ V_{\rv{Y}|\rv{X}} \in\set{P}(\set{Y}|\set{X}) \;\middle\vert\;
    A_n\cdot e^{-n\cdot D\infdiv*{p_\rv{X}\cdot V_{\rv{Y}|\rv{X}}}{p_\rv{X}\cdot W_{\rv{Y}|\rv{X}}}} \geq \frac{e}{e-1}\cdot e^{nr}\cdot B_n \cdot e^{-n\cdot D\infdiv*{p_\rv{X}\cdot V_{\rv{Y}|\rv{X}}}{p_\rv{X}\times p_\rv{Y}}} \right\} .
\end{equation}
This allows us to write
\begin{equation}
    \epsilon^\fnc{NS}(\floor{e^{nr}}) \geq
    \adjustlimits\sup_{p_{\rv{X}}\in\set{P}(\set{X})}
    \sup_{V_{\rv{Y}|\rv{X}}\in\set{V}_n}\ 
    \frac{1}{(n+1)^{\size*{\set{X}}\cdot\size*{\set{Y}}}} \cdot
    \frac{1}{e}\cdot \left(\frac{W_{\min}^2}{n}\right)^{\size*{\set{X}}\cdot\size*{\set{Y}}} \cdot \exp\left(-n\cdot D\infdiv*{p_\rv{X}\cdot V_{\rv{Y}|\rv{X}}}{p_\rv{X}\cdot W_{\rv{Y}|\rv{X}}}\right),
\end{equation}
or equivalently
\begin{equation}\label{eq:EE-converse:temp}
    -\frac{1}{n}\log{\epsilon^\fnc{NS}(\floor{e^{nr}})}\le  - \frac{\size*{\set{X}}\cdot \size*{\set{Y}}}{n}\cdot\log{\frac{W^2_{\min}}{n\cdot (n+1)}} + \frac{1}{n} +
    \adjustlimits\inf_{p_{\rv{X}}\in\set{P}(\set{X})}
    \inf_{V_{\rv{Y}|\rv{X}}\in\set{V}_n}\ 
    D\infdiv*{p_\rv{X}\cdot V_{\rv{Y}|\rv{X}}}{p_\rv{X}\cdot W_{\rv{Y}|\rv{X}}}.
\end{equation}
Applying the method of Lagrange multipliers, the $V$-dependent term of the RHS of the above inequality can be written as 
\begin{align}
    &\hspace{12pt}\adjustlimits\inf_{p_{\rv{X}}\in\set{P}(\set{X})}
    \inf_{V_{\rv{Y}|\rv{X}}\in\set{V}_n}\ 
    D\infdiv*{p_\rv{X}\cdot V_{\rv{Y}|\rv{X}}}{p_\rv{X}\cdot W_{\rv{Y}|\rv{X}}} \nonumber \\
    & \begin{aligned} = \multiadjustlimits{
    \inf_{p_{\rv{X}}\in\set{P}(\set{X})},
    \inf_{V_{\rv{Y}|\rv{X}}\in\set{P}(\set{Y}|\set{X})},
    \sup_{\alpha\geq 0}} \
    D\infdiv*{p_\rv{X}\cdot V_{\rv{Y}|\rv{X}}}{p_\rv{X}\cdot W_{\rv{Y}|\rv{X}}} \ldots \hspace{191pt}\\
    -\alpha \cdot \left(D\infdiv*{p_\rv{X}\cdot V_{\rv{Y}|\rv{X}}}{p_\rv{X}\times p_\rv{Y}} - D\infdiv*{p_\rv{X}\cdot V_{\rv{Y}|\rv{X}}}{p_\rv{X}\times W_{\rv{Y}|\rv{X}}} -r +\frac{1}{n}\log{\frac{(e-1)A_n}{eB_n}}\right) \end{aligned}\\
    & \begin{aligned}=  \multiadjustlimits{
    \inf_{(p_{\rv{X}},V_{\rv{Y}|\rv{X}})\in\set{P}(\set{X})\times \set{P}(\set{Y}|\set{X})},
    \sup_{\alpha\geq 0}} \
    (1+\alpha)\cdot D\infdiv*{p_\rv{X}\cdot V_{\rv{Y}|\rv{X}}}{p_\rv{X}\cdot W_{\rv{Y}|\rv{X}}} 
    -\alpha\cdot D\infdiv*{p_\rv{X}\cdot V_{\rv{Y}|\rv{X}}}{p_\rv{X}\times p_\rv{Y}} \ldots \hspace{19pt} \\
    +\alpha\cdot \left(r-\frac{1}{n}\log{\frac{(e-1)A_n}{eB_n}}\right). \label{eq:D1}
    \end{aligned}
\end{align}
Observe that the above function is jointly convex in $(p_\rv{X}, V_{\rv{Y}|\rv{X}})$ for each fixed  $\alpha$ and linear in $\alpha$ for each fixed pair of $(p_\rv{X}, V_{\rv{Y}|\rv{X}})$; furthermore, the domain $\set{P}(\set{X})\times \set{P}(\set{Y}|\set{X})$ is compact convex and the domain $[0, \infty)$ is convex. 
Thus, Sion's minimax theorem~\cite{sion1958general} applies, and we have from \eqref{eq:D1}
\begin{align}
   &\adjustlimits\inf_{p_{\rv{X}}\in\set{P}(\set{X})}
    \inf_{V_{\rv{Y}|\rv{X}}\in\set{V}_n}\ 
    D\infdiv*{p_\rv{X}\cdot V_{\rv{Y}|\rv{X}}}{p_\rv{X}\cdot W_{\rv{Y}|\rv{X}}} \nonumber \\
     &\begin{aligned} = \multiadjustlimits{
    \sup_{\alpha\geq 0},
    \inf_{(p_{\rv{X}},V_{\rv{Y}|\rv{X}})\in\set{P}(\set{X})\times \set{P}(\set{Y}|\set{X})}}
    (1+\alpha)\cdot D\infdiv*{p_\rv{X}\cdot V_{\rv{Y}|\rv{X}}}{p_\rv{X}\cdot W_{\rv{Y}|\rv{X}}} \ldots \hspace{121pt} \\
    -\alpha\cdot D\infdiv*{p_\rv{X}\cdot V_{\rv{Y}|\rv{X}}}{p_\rv{X}\times p_\rv{Y}}
    +\alpha\cdot \left(r-\frac{1}{n}\log{\frac{(e-1)A_n}{eB_n}}\right),\end{aligned}\\
    &\begin{aligned}= \multiadjustlimits{
    \sup_{\alpha\geq 0},
    \inf_{(p_{\rv{X}},V_{\rv{Y}|\rv{X}})\in\set{P}(\set{X})\times \set{P}(\set{Y}|\set{X})},
    \sup_{q_\rv{Y}\in\set{P}(\set{Y})}} \
    (1+\alpha)\cdot D\infdiv*{p_\rv{X}\cdot V_{\rv{Y}|\rv{X}}}{p_\rv{X}\cdot W_{\rv{Y}|\rv{X}}} \ldots\hspace{81pt}\\
    -\alpha\cdot D\infdiv*{p_\rv{X}\cdot V_{\rv{Y}|\rv{X}}}{p_\rv{X}\times q_\rv{Y}} 
    +\alpha\cdot \left(r-\frac{1}{n}\log{\frac{(e-1)A_n}{eB_n}}\right),\end{aligned}\\
    &\begin{aligned}= \multiadjustlimits{
    \sup_{\alpha\geq 0},
    \sup_{q_\rv{Y}\in\set{P}(\set{Y})},
    \inf_{(p_{\rv{X}},V_{\rv{Y}|\rv{X}})\in\set{P}(\set{X})\times \set{P}(\set{Y}|\set{X})}}
    (1+\alpha)\cdot D\infdiv*{p_\rv{X}\cdot V_{\rv{Y}|\rv{X}}}{p_\rv{X}\cdot W_{\rv{Y}|\rv{X}}} \ldots\hspace{85pt}\\
    -\alpha\cdot D\infdiv*{p_\rv{X}\cdot V_{\rv{Y}|\rv{X}}}{p_\rv{X}\times q_\rv{Y}} 
    +\alpha\cdot \left(r-\frac{1}{n}\log{\frac{(e-1)A_n}{eB_n}}\right),\end{aligned}\label{eq:D2}
\end{align}
where we have used Sion's minimax theorem~\cite{sion1958general} again as the above function is concave in $q_\rv{Y}$ for each fixed $(p_\rv{X},V_{\rv{Y}})$, and jointly convex in $(p_\rv{X},V_{\rv{Y}})$ for each fixed $q_\rv{Y}$, and all of the  domains involved in this optimization are convex and compact.
Using the variational formulation of the Rényi divergence~\cite[Theorem 30]{vanErven2014Jun} and starting from \eqref{eq:D2}, we further have the following  
\begin{align}
&\adjustlimits\inf_{p_{\rv{X}}\in\set{P}(\set{X})}
    \inf_{V_{\rv{Y}|\rv{X}}\in\set{V}_n}\ 
    D\infdiv*{p_\rv{X}\cdot V_{\rv{Y}|\rv{X}}}{p_\rv{X}\cdot W_{\rv{Y}|\rv{X}}} \nonumber \\
    &= \multiadjustlimits{
    \sup_{\alpha\geq 0},
    \sup_{q_\rv{Y}\in\set{P}(\set{Y})},
    \inf_{p_\rv{X}\in\set{P}(\set{X})}}
    -\alpha\cdot  \mathbb{E}_{x\sim p_\rv{X}}\left[ D_{1+\alpha}\infdiv*{W_{\rv{Y}|\rv{X}}(\cdot|x)}{ q_\rv{Y}}\right] +\alpha\cdot \left(r-\frac{1}{n}\log{\frac{(e-1)A_n}{eB_n}}\right)
    \\&=\multiadjustlimits{
    \sup_{\alpha\geq 0},
    \sup_{q_\rv{Y}\in\set{P}(\set{Y})},
    \inf_{p_\rv{X}\in\set{P}(\set{X})}}
    -\alpha\cdot D_{1+\alpha}\infdiv*{p_\rv{X}\cdot W_{\rv{Y}|\rv{X}}}{p_\rv{X}\times q_\rv{Y}} +\alpha\cdot \left(r-\frac{1}{n}\log{\frac{(e-1)A_n}{eB_n}}\right)
\end{align}
where we use \cite[Proposition 1]{csiszar1995generalized} in the last equality. 
Combining above with~\eqref{eq:EE-converse:temp}, we have
\begin{equation}
    -\frac{1}{n}\log{\epsilon^\fnc{NS}(\floor{e^{nr}})} \leq
    \multiadjustlimits{
    \sup_{\alpha\geq 0},
    \sup_{q_\rv{Y}\in\set{P}(\set{Y})},
    \inf_{p_\rv{X}\in\set{P}(\set{X})}}
    \alpha\cdot \left(r-D_{1+\alpha}\infdiv*{p_\rv{X}\cdot W_{\rv{Y}|\rv{X}}}{p_\rv{X}\times q_\rv{Y}}+g_n\right) + f_n
\end{equation}
where
\begin{align}
f_n &\defeq \frac{\size*{\set{X}}\cdot \size*{\set{Y}}}{n}\cdot\log{\tfrac{n\cdot (n+1)}{W^2_{\min}}} + \frac{1}{n} = O\left(\frac{\log{n}}{n}\right),\\
g_n &\defeq \frac{1}{n}\log{\tfrac{eB_n}{(e-1)A_n}}
= \frac{1}{n}\log{\tfrac{e\cdot e^{\size*{\set{Y}}} \cdot (n+1)^{\size*{\set{Y}}-1}\cdot \big(n\cdot(n+\size*{\set{Y}})\big)^{2\size*{\set{X}}\cdot\size*{\set{Y}}}}{(e-1)\cdot W_{\min}^{2\size*{\set{X}}\cdot\size*{\set{Y}}}}}=O\left(\frac{\log{n}}{n}\right). \qedhere
\end{align}
\end{proof}
From Propositions \ref{prop:EE-Ach} and \ref{prop:EE-converse}, we deduce the error exponent for non-signaling channel simulation. 
\begin{theorem}\label{thm:NS:EE}
Let $W_{\rv{Y}|\rv{X}}\in\set{P}(\set{Y}|\set{X})$ be a channel.
For all $r> 0$, it holds that 
\begin{equation}
    \lim_{n\to \infty} -\frac{1}{n} \log\epsilon^\fnc{NS}(\floor{e^{nr}},W_{\rv{Y}|\rv{X}}^{\tensor n}) =
    \sup_{\alpha\geq 0} \; \alpha\cdot\left( r - \adjustlimits\sup_{p_\rv{X}\in\set{P}(\set{X})} \inf_{q_\rv{Y}\in\set{P}(\set{Y})} D_{\alpha+1}\infdiv*{p_\rv{X}\cdot W_{\rv{Y}|\rv{X}}}{p_\rv{X}\times q_\rv{Y}} \right).
\end{equation}
\end{theorem}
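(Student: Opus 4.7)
The plan is to deduce Theorem~\ref{thm:NS:EE} as a short combination of the achievability bound in Proposition~\ref{prop:EE-Ach} and the converse bound in Proposition~\ref{prop:EE-converse}. At first glance the two propositions produce expressions that look slightly different --- the achievability uses $\sup_{p_\rv{X}}\inf_{q_\rv{Y}}$ whereas the converse uses the swapped order $\inf_{q_\rv{Y}}\sup_{p_\rv{X}}$ --- but a simple sandwich argument reconciles them without invoking any extra minimax theorem.

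First, I would pass Proposition~\ref{prop:EE-Ach} to the limit (flipping the sign and taking $\liminf$) to obtain
\begin{equation*}
\liminf_{n\to\infty}\;-\tfrac{1}{n}\log\epsilon^\fnc{NS}(\floor{e^{nr}},W_{\rv{Y}|\rv{X}}^{\tensor n}) \;\geq\; A,
\end{equation*}
where $A := \sup_{\alpha\geq 0}\, \alpha\bigl(r - \sup_{p_\rv{X}}\inf_{q_\rv{Y}} D_{\alpha+1}\infdiv*{p_\rv{X}\cdot W_{\rv{Y}|\rv{X}}}{p_\rv{X}\times q_\rv{Y}}\bigr)$ is exactly the right-hand side of the theorem. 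Next, I would pass Proposition~\ref{prop:EE-converse} to the limit (whose correction terms $f_n$ and $g_n$ are both $O(\log n / n)$ and vanish) to get
\begin{equation*}
\limsup_{n\to\infty}\;-\tfrac{1}{n}\log\epsilon^\fnc{NS}(\floor{e^{nr}},W_{\rv{Y}|\rv{X}}^{\tensor n}) \;\leq\; B,
\end{equation*}
where $B := \sup_{\alpha\geq 0}\, \alpha\bigl(r - \inf_{q_\rv{Y}}\sup_{p_\rv{X}} D_{\alpha+1}\infdiv*{p_\rv{X}\cdot W_{\rv{Y}|\rv{X}}}{p_\rv{X}\times q_\rv{Y}}\bigr)$.

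The final step closes the loop with two elementary inequalities. The trivial $\sup_{p_\rv{X}}\inf_{q_\rv{Y}} D_{\alpha+1}(\cdot) \leq \inf_{q_\rv{Y}}\sup_{p_\rv{X}} D_{\alpha+1}(\cdot)$, after negation, multiplication by $\alpha\geq 0$, and supremum over $\alpha$, yields $A \geq B$. Combined with the obvious ordering $A \leq \liminf \leq \limsup \leq B$, this forces $A = B$ and shows the limit exists with value $A$, as claimed.

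There is no substantive obstacle at this stage: all the technical work --- the de Finetti reduction, the method of types, and the continuity arguments --- has already been discharged inside Propositions~\ref{prop:EE-Ach} and~\ref{prop:EE-converse}, and the combination step is essentially bookkeeping. As a bonus, the argument establishes the minimax identity $A = B$ for free; one could alternatively invoke Sion's minimax theorem directly (using concavity in $p_\rv{X}$ and convexity in $q_\rv{Y}$ of the relevant functional) to identify these two quantities a priori.
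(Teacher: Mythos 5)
Your proof is correct and takes the approach the paper itself intends (the theorem is stated immediately after Propositions~\ref{prop:EE-Ach} and~\ref{prop:EE-converse} with only the remark ``From Propositions~\ref{prop:EE-Ach} and~\ref{prop:EE-converse}, we deduce the error exponent,'' and no explicit proof is given). The paper leaves implicit the reconciliation of the two mismatched optimization orders---the achievability bound uses $\sup_{p_\rv{X}}\inf_{q_\rv{Y}}$ while the converse uses $\inf_{q_\rv{Y}}\sup_{p_\rv{X}}$---most likely counting on the reader to recognize the minimax identity $\sup_p\inf_q D_{\alpha+1} = \inf_q\sup_p D_{\alpha+1}$, which indeed holds (the functional is concave in $p_\rv{X}$ and quasi-convex in $q_\rv{Y}$, so Sion's theorem applies; the paper already invokes Sion elsewhere). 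Your sandwich argument is a cleaner way to close this gap: weak duality gives $A\geq B$ for free, the two propositions give $A\leq\liminf\leq\limsup\leq B$, and these squeeze together to establish both the existence of the limit and the equality $A=B$ simultaneously---a genuine simplification over an appeal to Sion's theorem, and one the paper would have done well to state explicitly.
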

\section{Strong Converse Exponent for Non-Signaling Channel Simulation}\label{sec:SCE}
In this section, we study how fast $\epsilon^\fnc{NS}$ converges to $1$ for simulating asymptotically many copies of a channel given a fixed communication rate $r>0$.
In particular, we would like to study the strong converse exponent $-\frac{1}{n}\log{\left(1-\epsilon^\fnc{NS}(\floor{e^{nr}},W_{\rv{Y}|\rv{X}}^{\tensor n})\right)}$ as $n$ tends to infinity.
Proposition~\ref{prop:SCE:Converse} below shows that $\epsilon^\fnc{NS}$ converges to $1$ exponentially fast for any finite blocklength $n$ when $r<I_{1}(W_{\rv{Y}|\rv{X}})$.
Later, Proposition~\ref{prop:SCE:achievability} shows that the established convergence rate (as known as the strong converse exponent) is asymptotically tight.

\subsection{Converse for the Strong Converse Exponent}\label{sec:SCE:Converse}
In the following, we would like to prove the following lower bound on the strong converse exponent.
\begin{proposition}[Converse for the Strong Converse Exponents]\label{prop:SCE:Converse}
Let $W_{\rv{Y}|\rv{X}}\in\set{P}(\set{Y}|\set{X})$ be a channel.
For all $r> 0$, it holds that 
\begin{equation}\label{eq:EE-Converse-finite}
     \frac{1}{n}\log{\left(1-\epsilon^\fnc{NS}(\floor{e^{nr}},W_{\rv{Y}|\rv{X}}^{\tensor n})\right)} \leq
    - \sup_{\alpha\in[0,1]} \; \alpha\cdot\left( -r + \adjustlimits\sup_{p_\rv{X}\in\set{P}(\set{X})} \inf_{q_\rv{Y}\in\set{P}(\set{Y})} D_{1-\alpha}\infdiv*{p_\rv{X}\cdot W_{\rv{Y}|\rv{X}}}{p_\rv{X}\times q_\rv{Y}} \right).
\end{equation}
\end{proposition}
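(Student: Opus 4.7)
The plan is to mirror the achievability proof in Proposition~\ref{prop:EE-Ach} but in the complementary direction, replacing the Markov-type tail bound on the excess mass $(W_{\rv{Y}|\rv{X}}-Mq_\rv{Y})_+$ by a Chernoff-type interpolation on the covered mass. Starting from~\eqref{eq:epsilon:NS:exact:1}, the pointwise identity $W_{\rv{Y}|\rv{X}}(y|x) - (W_{\rv{Y}|\rv{X}}(y|x) - M q_\rv{Y}(y))_+ = \min\{W_{\rv{Y}|\rv{X}}(y|x),\; M q_\rv{Y}(y)\}$ together with $\sum_y W_{\rv{Y}|\rv{X}}(y|x) = 1$ yields the ``success'' reformulation
\[
1 - \epsilon^\fnc{NS}(M, W_{\rv{Y}|\rv{X}}) = \adjustlimits\sup_{q_\rv{Y} \in \set{P}(\set{Y})} \min_{x\in\set{X}} \sum_{y\in\set{Y}} \min\{W_{\rv{Y}|\rv{X}}(y|x),\; M\cdot q_\rv{Y}(y)\},
\]
where the swap $1-\inf\max = \sup\min$ uses only that $1-\inf = \sup(1-\cdot)$ and similarly for $\max$.

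For each $\alpha \in [0,1]$, I would apply the elementary interpolation inequality $\min\{a,b\} \leq a^{1-\alpha}b^\alpha$ for $a,b\geq 0$ (the same Chernoff-style tool used in binary hypothesis testing) to obtain
\[
\sum_y \min\{W_{\rv{Y}|\rv{X}}(y|x),\, M q_\rv{Y}(y)\} \leq M^\alpha \sum_y W_{\rv{Y}|\rv{X}}(y|x)^{1-\alpha} q_\rv{Y}(y)^\alpha.
\]
I then relax $\min_x(\cdot)$ by $\sum_x p_\rv{X}(x)(\cdot)$ for an arbitrary $p_\rv{X}\in\set{P}(\set{X})$ and absorb the factor $p_\rv{X}(x) = p_\rv{X}(x)^{1-\alpha}\cdot p_\rv{X}(x)^{\alpha}$ to identify the R\'enyi divergence,
\[
\sum_{x,y} p_\rv{X}(x) W_{\rv{Y}|\rv{X}}(y|x)^{1-\alpha} q_\rv{Y}(y)^\alpha = e^{-\alpha\cdot D_{1-\alpha}\infdiv*{p_\rv{X}\cdot W_{\rv{Y}|\rv{X}}}{p_\rv{X} \times q_\rv{Y}}}.
\]
Taking $\sup_{q_\rv{Y}}$ turns the exponent into $-\alpha\cdot \inf_{q_\rv{Y}} D_{1-\alpha}(\cdot\Vert\cdot)$; since the bound is valid for every $p_\rv{X}$, infimizing over $p_\rv{X}$ produces the one-shot estimate
\[
1 - \epsilon^\fnc{NS}(M, W_{\rv{Y}|\rv{X}}) \leq M^\alpha \cdot \exp\Bigl( -\alpha\cdot \adjustlimits\sup_{p_\rv{X}\in\set{P}(\set{X})} \inf_{q_\rv{Y}\in\set{P}(\set{Y})} D_{1-\alpha}\infdiv*{p_\rv{X}\cdot W_{\rv{Y}|\rv{X}}}{p_\rv{X} \times q_\rv{Y}} \Bigr).
\]

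To conclude, I specialize to $W_{\rv{Y}|\rv{X}}^{\tensor n}$ with $M = \floor{e^{nr}}\leq e^{nr}$ and invoke the superadditivity $I_{1-\alpha}(W_{\rv{Y}|\rv{X}}^{\tensor n}) \geq n\cdot I_{1-\alpha}(W_{\rv{Y}|\rv{X}})$; taking $\tfrac{1}{n}\log$ and infimizing over $\alpha \in [0,1]$ yields~\eqref{eq:EE-Converse-finite}. I expect the additivity step to be the main technical subtlety---the one-shot bound features $\inf_{q_{\rvs{Y}_1^n}}$ over \emph{all} (possibly non-product) output distributions, so the fact that the optimizer is effectively of product form is not automatic. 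I would handle this cleanly by restricting the outer supremum to the iid ansatz $p_\rv{X}^{\tensor n}$ and then applying Sibson's closed-form identity for $\inf_{q_\rv{Y}} D_{1-\alpha}(p_\rv{X}\cdot W_{\rv{Y}|\rv{X}} \Vert p_\rv{X}\times q_\rv{Y})$, which factorizes exactly under tensor products. Everything else is a direct Chernoff-style computation, making this converse of the strong converse exponent notably cleaner than the achievability of the error exponent in Section~\ref{sec:EE-Ach}.
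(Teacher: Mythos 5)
Your proposal is correct and follows essentially the same route as the paper's proof: the ``success'' reformulation $1-\epsilon^\fnc{NS}=\sup_q\min_x\sum_y\min\{W,\;Mq\}$, the Chernoff interpolation $\min\{a,b\}\le a^{1-\alpha}b^\alpha$, identification of the resulting sum with $e^{-\alpha D_{1-\alpha}}$, and then restricting the input to an i.i.d.\ ansatz plus Sibson's factorization to obtain $n\cdot I_{1-\alpha}(W)$. The only cosmetic difference is that you package the argument as a one-shot bound applied to $W^{\tensor n}$ while the paper works directly at blocklength $n$ and cites additivity via \cite{arimoto1977information}; both are handling exactly the same step.
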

\begin{remark}
By following the argument in Remark~\ref{remark:EE}, the established strong converse exponent for non-signaling channel simulation, \ie, $\sup_{\alpha\in [0,1]} \alpha\cdot \left( I_{\alpha}(W_{\rv{Y}|\rv{X}}) - r \right)  $, is positive if and only if $r < I_{1}(W_{\rv{Y}|\rv{X}})$.
Hence, $\epsilon^\fnc{NS}$ converges to $1$ exponentially fast as $r < I_{1}(W_{\rv{Y}|\rv{X}})$.
\end{remark}
\begin{proof}
Starting from Proposition~\ref{prop:epsilon:NS:exact}, we have 
\begin{align}
    &\hspace{14pt}1-\epsilon^\fnc{NS}(\floor{e^{nr}},W_{\rv{Y}|\rv{X}}^{\tensor n})\nonumber\\
    &= 1\!-\!\adjustlimits \inf_{q_{\rvs{Y}_1^n}\in\set{P}(\set{Y}^n)} \sup_{p_{\rvs{X}_1^n}\in\set{P}(\set{X}^n)}\ 
    \sum_{\mathbf{x}_1^n\in\set{X}^n, \mathbf{y}_1^n\in\set{Y}^n}
    \Big( p_{\rvs{X}_1^n}(\mathbf{x}_1^n)\!\cdot\! W_{\rv{Y}|\rv{X}}^{\tensor n}(\mathbf{y}_1^n|\mathbf{x}_1^n) - \floor{e^{nr}} \!\cdot
    \! p_{\rvs{X}_1^n}(\mathbf{x}_1^n) \!\cdot\! q_{\rvs{Y}_1^n}(\mathbf{y}_1^n)\Big)_+, \!\\
    &=\adjustlimits \sup_{q_{\rvs{Y}_1^n}\in\set{P}(\set{Y}^n)} \inf_{p_{\rvs{X}_1^n}\in\set{P}(\set{X}^n)}\ 
    \sum_{\mathbf{x}_1^n\in\set{X}_1^n} p_{\rvs{X}_1^n}(\mathbf{x}_1^n)\cdot \sum_{\mathbf{y}_1^n}
    \min\left\{  W_{\rv{Y}|\rv{X}}^{\tensor n}(\mathbf{y}_1^n|\mathbf{x}_1^n),\, \floor{e^{nr}} \cdot q_{\rvs{Y}_1^n}(\mathbf{y}_1^n)\right\} \label{eq:SEE-Converse:1}\\
    &\leq \adjustlimits \inf_{p_\rv{X}\in\set{P}(\set{X})} \sup_{q_{\rvs{Y}_1^n}\in\set{P}(\set{Y}^n)} \ 
    \sum_{\mathbf{x}_1^n\in\set{X}_1^n} p_\rv{X}^{\tensor n}(\mathbf{x}_1^n)\cdot \sum_{\mathbf{y}_1^n}
    \min\left\{  W_{\rv{Y}|\rv{X}}^{\tensor n}(\mathbf{y}_1^n|\mathbf{x}_1^n),\, e^{nr} \cdot q_{\rvs{Y}_1^n}(\mathbf{y}_1^n)\right\} \label{eq:SEE-Converse:2}
\end{align}
where in~\eqref{eq:SEE-Converse:1}, we have applied the fact that $1-\sum_{a\in\set{A}} \left(p(a)-f(a)\right)_+ = \sum_{a\in\set{A}} \min\left\{p(a), f(a)\right\}$ for any pmf $p\in\set{P}(\set{A})$ and function $f:\set{A}\to\reals$;
and where in~\eqref{eq:SEE-Converse:2}, we swap the supremum inside, restrict the domain of the infimum to product distributions, and use the fact that $\floor{e^{nr}}\leq e^{nr}$.
Since $\min\left\{a,b\right\}\leq a^\alpha\cdot b^{1-\alpha}$ for any $a,b\geq 0$ and $\alpha\in[0,1]$, we can further upper bound the above expression as follows
\begin{align}
    \text{Above} &\leq 
    \adjustlimits \inf_{p_\rv{X}\in\set{P}(\set{X})} \sup_{q_{\rvs{Y}_1^n}\in\set{P}(\set{Y}^n)} \ 
    \sum_{\mathbf{x}_1^n\in\set{X}_1^n} p_\rv{X}^{\tensor n}(\mathbf{x}_1^n)\cdot \sum_{\mathbf{y}_1^n}
    \Big(W_{\rv{Y}|\rv{X}}^{\tensor n}(\mathbf{y}_1^n|\mathbf{x}_1^n)\Big)^{1-\alpha}\cdot \Big(e^{nr} \cdot q_{\rvs{Y}_1^n}(\mathbf{y}_1^n)\Big)^{\alpha}, \\
    &= e^{\alpha\cdot n\cdot r} \cdot \adjustlimits \inf_{p_\rv{X}\in\set{P}(\set{X})} \sup_{q_{\rvs{Y}_1^n}\in\set{P}(\set{Y}^n)} \ 
    \sum_{\mathbf{x}_1^n\in\set{X}_1^n, \mathbf{y}_1^n\in\set{Y}_1^n} \Big(p_\rv{X}^{\tensor n}(\mathbf{x}_1^n)\cdot W_{\rv{Y}|\rv{X}}^{\tensor n}(\mathbf{y}_1^n|\mathbf{x}_1^n)\Big)^{1-\alpha}\cdot \Big(p_\rv{X}^{\tensor n}(\mathbf{x}_1^n)\cdot q_{\rvs{Y}_1^n}(\mathbf{y}_1^n)\Big)^{\alpha}, \\
    &= \exp\left(\alpha\cdot n\cdot r
    -\alpha\cdot \adjustlimits\sup_{p_\rv{X}\in\set{P}(\set{X})} \inf_{q_{\rvs{Y}_1^n}\in\set{P}(\set{Y}^n)} D_{1-\alpha}\infdiv*{p_\rv{X}^{\tensor n}\cdot W_{\rv{Y}|\rv{X}}^{\tensor n}}{p_\rv{X}^{\tensor n}\times q_{\rvs{Y}_1^n}}\right).
\end{align}
By the additivity property of the mutual information of order $1-\alpha$~\cite{arimoto1977information}, \ie, 
\begin{equation}
    \inf_{q_{\rvs{Y}_1^n}\in\set{P}(\set{Y}^n)} D_{1-\alpha}\infdiv*{p_\rv{X}^{\tensor n}\cdot W_{\rv{Y}|\rv{X}}^{\tensor n}}{p_\rv{X}^{\tensor n}\times q_{\rvs{Y}_1^n}}
    = n\cdot \inf_{q_\rv{Y}\in\set{P}(\set{Y})} D_{1-\alpha}\infdiv*{p_\rv{X}\cdot W_{\rv{Y}|\rv{X}}}{p_\rv{X}\times q_\rv{Y}},
\end{equation}
we have 
\begin{align}
    1-\epsilon^\fnc{NS}(\floor{e^{nr}},W_{\rv{Y}|\rv{X}}^{\tensor n}) &\leq 
    \exp\left(-\alpha\cdot n\cdot \left(-r+
    \adjustlimits\sup_{p_\rv{X}\in\set{P}(\set{X})}\inf_{q_\rv{Y}\in\set{P}(\set{Y})}
    D_{1-\alpha}\infdiv*{p_\rv{X}\cdot W_{\rv{Y}|\rv{X}}}{p_\rv{X}\times q_\rv{Y}}
    \right)\right).
\end{align}
Finally,~\eqref{eq:EE-Converse-finite} is obtained by taking the logarithm and dividing by $n$ on both sides of the above inequality.
\end{proof}
\subsection{Achievability for the Strong Converse Exponent}\label{sec:SCE:achievability}
In the following, we move to prove the following upper bound on the strong converse exponent.
\begin{proposition}\label{prop:SCE:achievability}
Let $W_{\rv{Y}|\rv{X}}\in\set{P}(\set{Y}|\set{X})$ be a channel.
For all $r>0$, it holds that 
\begin{equation}\begin{aligned}
    \frac{1}{n}\log{\left(1\!-\!\epsilon^\fnc{NS}(\floor{e^{nr}},W_{\rv{Y}|\rv{X}}^{\tensor n})\right)} \geq -\sup_{\alpha\in[0,1]}
    (1\!-\!\alpha)\cdot 
    \left(-r_n \!+\! \adjustlimits\sup_{p_\rv{X}\in\set{P}(\set{X})}\inf_{q_\rv{Y}\in\set{P}(\set{Y})} D_{\alpha}\infdiv*{p_\rv{X}\!\cdot\! W_{\rv{Y}|\rv{X}}}{p_\rv{X}\!\times\! q_\rv{Y}}\right) \\
    + \alpha\cdot\tilde{g}_n - \tilde{f}_n
\end{aligned}\end{equation}
for all integer $n\geq 3\cdot\size*{\set{X}}$, for some sequences $\{\tilde{f}_n\}_n$ and $\{\tilde{g}_n\}_n$ such that $\tilde{f}_n=O\left(\frac{\log{n}}{n}\right)$ and $\tilde{g}_n=O\left(\frac{\log{n}}{n}\right)$.
Here, $r_n \defeq \frac{1}{n}\log{\floor{e^{nr}}}$ has already been defined in Proposition~\ref{prop:EE-Ach}.
In particular, as $n$ tends to infinity, we have
\begin{align}
    \liminf_{n\to \infty } \frac{1}{n}\log{\left(1\!-\!\epsilon^\fnc{NS}(\floor{e^{nr}},W_{\rv{Y}|\rv{X}}^{\tensor n})\right)}  &\geq -\sup_{\alpha\in[0,1]}\;
    (1\!-\!\alpha)\cdot 
    \left(-r \!+\! \adjustlimits\sup_{p_\rv{X}\in\set{P}(\set{X})}\inf_{q_\rv{Y}\in\set{P}(\set{Y})} D_{\alpha}\infdiv*{p_\rv{X}\!\cdot\! W_{\rv{Y}|\rv{X}}}{p_\rv{X}\!\times\! q_\rv{Y}}\right).
\end{align}
\end{proposition}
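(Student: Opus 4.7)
The plan is to start from Proposition~\ref{prop:epsilon:NS:exact} and use the identity $1-\sum_a(p(a)-f(a))_+=\sum_a\min\{p(a),f(a)\}$ to rewrite
\begin{equation*}
1-\epsilon^\fnc{NS}(\floor{e^{nr}},W_{\rv{Y}|\rv{X}}^{\tensor n})\ \geq\ \min_{\mathbf{x}_1^n\in\set{X}^n}\sum_{\mathbf{y}_1^n\in\set{Y}^n}\min\!\bigl\{W_{\rv{Y}|\rv{X}}^{\tensor n}(\mathbf{y}_1^n|\mathbf{x}_1^n),\ \floor{e^{nr}}\cdot q^*_{\rvs{Y}_1^n}(\mathbf{y}_1^n)\bigr\}
\end{equation*}
for any choice of $q^*_{\rvs{Y}_1^n}$. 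I would then take $q^*_{\rvs{Y}_1^n}(\mathbf{y}_1^n)\defeq|\set{P}_n(\set{Y})|^{-1}\sum_{q_\rv{Y}\in\set{P}_n(\set{Y})}q_\rv{Y}^{\tensor n}(\mathbf{y}_1^n)$, a permutation-invariant universal mixture satisfying $q^*_{\rvs{Y}_1^n}(\mathbf{y}_1^n)\geq(n+1)^{-\size*{\set{Y}}}\cdot q_\rv{Y}^{\tensor n}(\mathbf{y}_1^n)$ for every $q_\rv{Y}\in\set{P}_n(\set{Y})$---in particular for $q_\rv{Y}$ equal to the output type of $\mathbf{y}_1^n$.

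For $\mathbf{x}_1^n$ of type $p_\rv{X}\in\set{P}_n(\set{X})$, I would decompose the inner sum by the conditional type $V_{\rv{Y}|\rv{X}}$ of $(\mathbf{x}_1^n,\mathbf{y}_1^n)$. The identities in \eqref{eq:W:conditional:type:identity}--\eqref{eq:mu:conditional:type:identity} combined with Lemma~\ref{lem:conditional:type:size} show that each $V$-shell contributes, up to a polynomial-in-$n$ factor, at least $\min\{e^{-n D(p_\rv{X}\cdot V\|p_\rv{X}\cdot W)},\,e^{n(r_n-I(p_\rv{X},V))}\}$ to the sum, where $I(p_\rv{X},V)$ denotes the usual mutual information of the joint $p_\rv{X}\cdot V_{\rv{Y}|\rv{X}}$. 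Dropping all but the single best term in the sum over $V$, extending the infimum from $\set{P}_n(\set{Y}|\set{X})$ to $\set{P}(\set{Y}|\set{X})$ by Lemma~\ref{lem:conditional:type:approx} and a continuity estimate in the spirit of Lemma~\ref{lem:D:continuity}, and noting $\max_{\mathbf{x}_1^n}=\max_{p_\rv{X}\in\set{P}_n(\set{X})}\leq\sup_{p_\rv{X}\in\set{P}(\set{X})}$ after taking $-\tfrac{1}{n}\log$, one arrives at
\begin{equation*}
-\tfrac{1}{n}\log(1-\epsilon^\fnc{NS})\ \leq\ \sup_{p_\rv{X}\in\set{P}(\set{X})}\inf_{V_{\rv{Y}|\rv{X}}\in\set{P}(\set{Y}|\set{X})}\max\!\bigl\{D(p_\rv{X}\cdot V\|p_\rv{X}\cdot W),\,I(p_\rv{X},V)-r_n\bigr\}\ +\ O\!\left(\tfrac{\log n}{n}\right).
\end{equation*}

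The remaining task is to recognize the right-hand side as the claimed Rényi expression. Writing $\max\{a,b\}=\max_{\alpha\in[0,1]}\{(1-\alpha)a+\alpha b\}$ and applying Sion's minimax theorem---justified because $(1-\alpha)D(p_\rv{X}\cdot V\|p_\rv{X}\cdot W)+\alpha I(p_\rv{X},V)$ is jointly convex in $V_{\rv{Y}|\rv{X}}$, linear in $\alpha$, and both domains are convex and compact---allows swapping $\inf_V$ and $\sup_\alpha$. Using the variational identity $I(p_\rv{X},V)=\inf_{q_\rv{Y}}D(p_\rv{X}\cdot V\|p_\rv{X}\times q_\rv{Y})$ together with the per-$x$ geometric-mixture identity $\inf_V\{(1-\alpha) D(V\|P)+\alpha D(V\|Q)\}=\alpha D_{1-\alpha}(P\|Q)$, the inner infimum simplifies to $\alpha\inf_{q_\rv{Y}}\sum_{x\in\set{X}} p_\rv{X}(x)\,D_{1-\alpha}(W_{\rv{Y}|\rv{X}}(\cdot|x)\|q_\rv{Y})$. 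Moving $\sup_{p_\rv{X}}$ through $\sup_\alpha$ and invoking the coincidence of the Arimoto- and Sibson-style Rényi channel capacities \cite[Proposition~1]{csiszar1995generalized} to identify $\sup_{p_\rv{X}}\inf_{q_\rv{Y}}\sum_x p_\rv{X} D_{1-\alpha}(W(\cdot|x)\|q_\rv{Y})=I_{1-\alpha}(W_{\rv{Y}|\rv{X}})$, and finally reindexing $\alpha\mapsto 1-\alpha$, yields $\sup_{\alpha\in[0,1]}(1-\alpha)\bigl(I_\alpha(W_{\rv{Y}|\rv{X}})-r_n\bigr)+O(\log n /n)$, which matches the statement after absorbing the polynomial prefactors into $\tilde f_n$ and $\alpha\tilde g_n$ in the same way as in Proposition~\ref{prop:EE-converse}.

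The main obstacle, exactly parallel to the converse of the error exponent, is the careful finite-$n$ bookkeeping: the polynomial prefactors arising from the universal mixture, the type counting, the approximations $p_\rv{X}^{(n)}\to p_\rv{X}$ and $V^{(p_\rv{X})}\to V$, and the restriction of $q_\rv{Y}$ away from zero (cf.~\eqref{eq:bound:q:away:from:zero}) must all be simultaneously controlled so that the Rényi divergences $D_{1-\alpha}$ of order $\alpha\in(0,1)$ remain well-defined and continuous near the boundary of the simplex. Once these have been handled, the asymptotic statement follows immediately from $r_n\to r$ as $n\to\infty$.
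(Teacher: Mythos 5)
Your proof is correct and reaches the same final Rényi expression, but it takes a genuinely different route in the first half. Where the paper applies Sion's minimax to swap $\sup_{q_{\rvs{Y}_1^n}}$ and $\inf_{p_{\rvs{X}_1^n}}$, then restricts the sup to i.i.d.\ laws and argues that the resulting convex, permutation-invariant objective admits a permutation-invariant minimizer $p^\star_{\rvs{X}_1^n}$ (which it decomposes into type classes), you instead bypass this entire block by plugging in the explicit universal mixture $q^*_{\rvs{Y}_1^n}=\size{\set{P}_n(\set{Y})}^{-1}\sum_{q_\rv{Y}\in\set{P}_n(\set{Y})}q_\rv{Y}^{\tensor n}$, which dominates every $q_\rv{Y}^{\tensor n}$ up to a polynomial factor. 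This is a valid lower bound on $1-\epsilon^\fnc{NS}$ and, since the resulting objective depends on $\mathbf{x}_1^n$ only through its type, produces the same per-shell estimate $\min\{e^{-nD(p_\rv{X}V\|p_\rv{X}W)},\,e^{n(r_n-I(p_\rv{X},V))}\}$ with polynomial prefactors; your mixture is essentially the object the paper calls the universal probability distribution in Remark~\ref{app:rmk:lem:perm-inv-de-finetti}, deployed here rather than Sion. What your route buys is the elimination of one minimax swap and the permutation-invariant-minimizer argument; what the paper's route buys is a uniform framing in which both exponent proofs (Propositions~\ref{prop:EE-converse} and~\ref{prop:SCE:achievability}) invoke the same de Finetti / permutation-invariance machinery. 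From the type decomposition onward, the two proofs coincide: your $\max\{a,b\}=\max_{\alpha\in[0,1]}[(1-\alpha)a+\alpha b]$ is the logarithmic image of the paper's $\min\{x,y\}=\min_\alpha x^\alpha y^{1-\alpha}$, the single Sion application to bring $\sup_\alpha$ out is identical, the geometric-mixture identity $\inf_R[(1-\alpha)D(R\|P)+\alpha D(R\|Q)]=\alpha D_{1-\alpha}(P\|Q)$ is exactly the variational Rényi formula from~\cite{vanErven2014Jun} the paper cites, and the appeal to~\cite[Proposition~1]{csiszar1995generalized} to convert the arithmetic mean of conditional Rényi divergences into $I_{1-\alpha}(W_{\rv{Y}|\rv{X}})$ is the same. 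Your acknowledgment of the finite-$n$ bookkeeping (type approximation via Lemmas~\ref{lem:type:approx} and~\ref{lem:conditional:type:approx}, continuity via Lemmas~\ref{lem:D:continuity} and~\ref{lem:I:continuity}, and keeping $q_\rv{Y}$ away from zero) correctly identifies the remaining technical burden; one small simplification available to you is that since $-\tfrac1n\log$ turns $\min_{\mathbf{x}_1^n}$ into $\max_{p_\rv{X}\in\set{P}_n(\set{X})}$, the relaxation to $\sup_{p_\rv{X}\in\set{P}(\set{X})}$ is immediate and no $p_\rv{X}$-approximation is actually needed, unlike in the paper's ordering of steps.
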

We shall need the following continuation lemma on the mutual information for the proof of Proposition~\ref{prop:SCE:achievability}.
\begin{lemma}\label{lem:I:continuity}
Let $\set{X}$ and $\set{Y}$ be two finite sets.
Let $p_\rv{X}\in\set{P}(\set{X})$ be a pmf on $\set{X}$, and let $V_{\rv{Y}|\rv{X}}\in\set{P}(\set{Y}|\set{X})$ be a channel from $\set{X}$ to $\set{Y}$.
\begin{enumerate}
\item For any $\tilde{V}_{\rv{Y}|\rv{X}}\in\set{P}(\set{Y}|\set{X})$ such that $\abs{p_\rv{X}(x)\cdot\tilde{V}_{\rv{Y}|\rv{X}}(y|x)-p_\rv{X}(x)\cdot V_{\rv{Y}|\rv{X}}(y|x)}\leq\xi\ \forall (x,y)\in\set{X}\times\set{Y}$ where $0 < \xi\leq \frac{1}{\size*{\set{X}}\cdot e}$, it hold that
    \begin{equation}\label{eq:I:continuity:1}
    \abs*{D\infdiv*{p_\rv{X}\cdot V_{\rv{Y}|\rv{X}}}{p_\rv{X}\times p_\rv{Y}} - D\infdiv*{p_\rv{X}\cdot \tilde{V}_{\rv{Y}|\rv{X}}}{p_\rv{X}\times \tilde{p}_\rv{Y}}} \leq \xi\cdot \size*{\set{X}}\cdot\size*{\set{Y}}\cdot \left(\log{\frac{1}{\xi}} + \log{\frac{1}{\xi\cdot \size*{\set{X}}}} \right),
    \end{equation}
    where $p_\rv{Y}$ and $\tilde{p}_\rv{Y}$ are the induced output distributions of the channels $V_{\rv{Y}|\rv{X}}$ and $\tilde{V}_{\rv{Y}|\rv{X}}$ (with the same input distribution $p_\rv{X}$), respectively.
    \item For any $\tilde{p}_\rv{X}\in\set{P}(\set{X})$ such that $\abs{\tilde{p}_\rv{X}(x)-p_\rv{X}(x)}\leq\xi\ \forall x\in\set{X}$ where $0 < \xi\leq \frac{1}{\size*{\set{X}}\cdot e}$, it hold that
    \begin{equation}\label{eq:I:continuity:2}
    \abs*{D\infdiv*{p_\rv{X}\cdot V_{\rv{Y}|\rv{X}}}{p_\rv{X}\times p_\rv{Y}} - D\infdiv*{\tilde{p}_\rv{X}\cdot V_{\rv{Y}|\rv{X}}}{\tilde{p}_\rv{X}\times \tilde{p}_\rv{Y}}} \leq \xi\cdot\size*{\set{X}}\cdot\log{\size*{\set{Y}}} + \xi\cdot \size*{\set{X}}\cdot\size*{\set{Y}}\cdot \log{\frac{1}{\xi\cdot \size*{\set{X}}}}
    \end{equation}
    where $p_\rv{Y}$ and $\tilde{p}_\rv{Y}$ are the induced output distributions corresponding to the input distributions $p_\rv{X}$ and $\tilde{p}_\rv{X}$ (with the same channel $V_{\rv{Y}|\rv{X}}$), respectively.
\end{enumerate}
\end{lemma}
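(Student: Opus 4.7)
The plan is to decompose the mutual information via $D\infdiv*{p_\rv{X}\cdot V_{\rv{Y}|\rv{X}}}{p_\rv{X}\times p_\rv{Y}} = H(\rv{Y})-H(\rv{Y}|\rv{X})$ and bound the perturbation of each Shannon entropy separately by a pointwise continuity estimate. The workhorse, which I would prove first, is the following bound: for any two pmfs $p,q\in\set{P}(\set{A})$ with $\max_{a}\abs{p(a)-q(a)}\leq\delta\leq 1/e$,
\begin{equation*}
    \abs*{H(p)-H(q)} \leq \size*{\set{A}}\cdot\delta\log(1/\delta).
\end{equation*}
This reduces to the pointwise inequality $\abs{\eta(x)-\eta(y)}\leq\eta(\abs{x-y})$ on $\{(x,y)\in[0,1]^2:\abs{x-y}\leq 1/e\}$, where $\eta(t)\defeq -t\log t$; since both $\abs{p(a)-q(a)}$ and $\delta$ then lie in $[0,1/e]$ where $\eta$ is increasing, summing over $a\in\set{A}$ yields the claim.

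For part~1, with $p_\rv{X}$ fixed, summing the hypothesis over $x$ gives $\abs{p_\rv{Y}(y)-\tilde{p}_\rv{Y}(y)}\leq\xi\size*{\set{X}}$ pointwise; the assumption $\xi\leq 1/(\size*{\set{X}}\cdot e)$ keeps this $\leq 1/e$, so the workhorse gives $\abs{H(\rv{Y})-H(\tilde{\rv{Y}})}\leq\xi\size*{\set{X}}\size*{\set{Y}}\log(1/(\xi\size*{\set{X}}))$. For the conditional entropy I use $H(\rv{Y}|\rv{X})=H(\rv{X},\rv{Y})-H(\rv{X})$ with $H(\rv{X})$ unchanged, and apply the workhorse to the joint distributions on $\set{X}\times\set{Y}$ (whose $\size*{\set{X}}\size*{\set{Y}}$ entries differ by at most $\xi$) to obtain $\abs{H(\rv{Y}|\rv{X})-H(\tilde{\rv{Y}}|\rv{X})}\leq\xi\size*{\set{X}}\size*{\set{Y}}\log(1/\xi)$. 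A triangle inequality on $I=H(\rv{Y})-H(\rv{Y}|\rv{X})$ then recovers \eqref{eq:I:continuity:1} verbatim. For part~2, writing $p_\rv{Y}(y)-\tilde{p}_\rv{Y}(y)=\sum_x(p_\rv{X}(x)-\tilde{p}_\rv{X}(x))V_{\rv{Y}|\rv{X}}(y|x)$ and using $V_{\rv{Y}|\rv{X}}\leq 1$ again gives $\abs{p_\rv{Y}(y)-\tilde{p}_\rv{Y}(y)}\leq\xi\size*{\set{X}}$, hence the same $\xi\size*{\set{X}}\size*{\set{Y}}\log(1/(\xi\size*{\set{X}}))$ contribution from $\abs{H(\rv{Y})-H(\tilde{\rv{Y}})}$. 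The conditional term is now elementary since only the weights vary: $H(\rv{Y}|\rv{X})-\tilde{H}(\rv{Y}|\rv{X})=\sum_x(p_\rv{X}(x)-\tilde{p}_\rv{X}(x))H(V_{\rv{Y}|\rv{X}}(\cdot|x))$, and $H(V_{\rv{Y}|\rv{X}}(\cdot|x))\leq\log\size*{\set{Y}}$ produces the $\xi\size*{\set{X}}\log\size*{\set{Y}}$ contribution in \eqref{eq:I:continuity:2}.

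The main obstacle is the workhorse pointwise inequality in the crossover case $y\leq 1/e\leq x$ where $\eta$ switches monotonicity. When $y\leq x$ and $\eta(x)\geq\eta(y)$, the subadditivity $\eta(a+b)\leq\eta(a)+\eta(b)$ on $[0,1]$ (verified by rewriting the difference as $-(a+b)h(a/(a+b))\leq 0$, with $h$ the binary entropy) immediately gives $\eta(x)-\eta(y)\leq\eta(x-y)$. The harder direction $\eta(y)>\eta(x)$ I would establish by studying $f(y)\defeq\eta(y)-\eta(x)-\eta(x-y)$ at fixed $x$: differentiation gives $f'(y)=-\log(y(x-y))-2$, so interior critical points satisfy $y(x-y)=1/e^2$, which has no real solution when $x<2/e$. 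The condition $\abs{x-y}\leq 1/e$ combined with $x\geq 1/e\geq y$ forces exactly $x\leq 2/e$, so $f$ is monotone on $[0,x]$ with $f(0)=-2\eta(x)\leq 0$ and $f(x)=0$, whence $f\leq 0$ throughout. Once this pointwise inequality is in place, the remainder of the proof consists of triangle inequalities together with the trivial bound $H\leq\log\size*{\set{Y}}$, and the hypothesis $\xi\leq 1/(\size*{\set{X}}\cdot e)$ ensures every invocation of the workhorse falls inside its range of validity.
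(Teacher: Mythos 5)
Your approach is essentially the same as the paper's. The paper's proof expands each Kullback--Leibler term, groups it into a joint-entropy difference and an output-entropy difference (your $I=H(\rv{Y})+H(\rv{X})-H(\rv{X},\rv{Y})$ with $H(\rv{X})$ cancelling), triangle-inequalities, and then invokes exactly the same continuity bound on $t\mapsto t\log t$ that you call the ``workhorse''; your two subsequent case analyses (joint vs.\ output distributions in part~1; conditional entropy weights vs.\ output distribution in part~2) reproduce the paper's bounds term by term.

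The one place you go beyond the paper is in \emph{proving} the workhorse $\abs{\eta(x)-\eta(y)}\leq\eta(\abs{x-y})$ (the paper simply asserts the analogous $\sup$ bound inside the proof of Lemma~\ref{lem:D:continuity}), and there your case analysis has a gap. You cover $\eta(x)\geq\eta(y)$ via subadditivity, and for $\eta(y)>\eta(x)$ you analyze only the crossover case $y\leq 1/e\leq x$, for which you correctly deduce $x\leq 2/e$ and hence no interior critical point of $f$. But with $y\leq x$ and $\abs{x-y}\leq 1/e$, the inequality $\eta(y)>\eta(x)$ also occurs when \emph{both} $x,y\in(1/e,1]$ (e.g.\ $y=0.4$, $x=0.5$), a case you never address; and there $x$ can well exceed $2/e$, so your ``no critical point'' argument is unavailable. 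That case is in fact the easy one: on $[1/e,1]$ one has $0\leq-\eta'\leq 1$, so by the mean value theorem $\eta(y)-\eta(x)\leq x-y\leq(x-y)\log\tfrac{1}{x-y}=\eta(x-y)$ whenever $x-y\leq 1/e$. With that line added your proof of the workhorse, and hence of the lemma, closes.
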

\begin{proof}
See Appendix~\ref{app:continuity:proofs}.
\end{proof}
\begin{proof}[Proof of Proposition~\ref{prop:SCE:achievability}]
We start with~\eqref{eq:SEE-Converse:1}, \ie, 
\begin{align}
    1-\epsilon^\fnc{NS}(\floor{e^{nr}},W_{\rv{Y}|\rv{X}}^{\tensor n})
    &= \adjustlimits \sup_{q_{\rvs{Y}_1^n}\in\set{P}(\set{Y}^n)} \inf_{p_{\rvs{X}_1^n}\in\set{P}(\set{X}^n)}\ 
    \sum_{\mathbf{x}_1^n\in\set{X}_1^n} p_{\rvs{X}_1^n}(\mathbf{x}_1^n)\cdot \sum_{\mathbf{y}_1^n} \min\left\{  W_{\rv{Y}|\rv{X}}^{\tensor n}(\mathbf{y}_1^n|\mathbf{x}_1^n),\, \floor{e^{nr}}\cdot q_{\rvs{Y}_1^n}(\mathbf{y}_1^n) \right\} \tag{\ref{eq:SEE-Converse:1}, repeated}.
\end{align}
Since the function to be optimized above is linear in $p_{\rvs{X}_1^n}$ and concave in $q_{\rvs{Y}_1^n}$, and the sets $\set{P}(\set{X}^n)$ and $\set{P}(\set{Y}^n)$ are convex compact, by Sion's minimax theorem~\cite{sion1958general}, we have 
\begin{align}
    &1-\epsilon^\fnc{NS}(\floor{e^{nr}},W_{\rv{Y}|\rv{X}}^{\tensor n}) \nonumber
   \\ &= \adjustlimits \inf_{p_{\rvs{X}_1^n}\in\set{P}(\set{X}^n)} \sup_{q_{\rvs{Y}_1^n}\in\set{P}(\set{Y}^n)}\ 
    \sum_{\mathbf{x}_1^n\in\set{X}_1^n} p_{\rvs{X}_1^n}(\mathbf{x}_1^n)\cdot \sum_{\mathbf{y}_1^n\in\set{Y}^n} \min\left\{  W_{\rv{Y}|\rv{X}}^{\tensor n}(\mathbf{y}_1^n|\mathbf{x}_1^n),\, \floor{e^{nr}}\cdot q_{\rvs{Y}_1^n}(\mathbf{y}_1^n) \right\}\\
    \label{eq:SEE-Achievability:1}
    &\geq \inf_{p_{\rvs{X}_1^n}\in\set{P}(\set{X}^n)} \underbrace{\sup_{q_\rv{Y}\in\set{P}(\set{Y})}\ 
    \sum_{\mathbf{x}_1^n\in\set{X}_1^n} p_{\rvs{X}_1^n}(\mathbf{x}_1^n)\cdot \sum_{\mathbf{y}_1^n\in\set{Y}^n} \min\left\{  W_{\rv{Y}|\rv{X}}^{\tensor n}(\mathbf{y}_1^n|\mathbf{x}_1^n),\, \floor{e^{nr}}\cdot q_\rv{Y}^{\tensor n}(\mathbf{y}_1^n) \right\}}_\text{convex \wrt $p_{\rvs{X}_1^n}$ and invariant under all permutations of $\rv{X}_1,\ldots,\rv{X}_n$}
\end{align}
where in~\eqref{eq:SEE-Achievability:1} we restricted the supremum to i.i.d. pmfs.
In addition, note that the expression after the infimum is a convex function of $p_{\rvs{X}_1^n}$, and is invariant under all permutations of $\rv{X}_1,\ldots,\rv{X}_n$.
Thus, there exists some permutation invariant $p^\star_{\rvs{X}_1^n}$ that achieves the infimum.
As every length-$n$ permutation invariant pmf can be written as convex combinations of uniform distributions over sequences of same types (see Definition~\ref{def:type} for notation), \ie
\begin{equation}
    p^\star_{\rvs{X}_1^n}(\mathbf{x}_1^n) = \sum_{p_\rv{X}\in\set{P}_n(\set{X})} \alpha_{p_\rv{X}}\cdot \frac{1}{\size*{\set{T}_n(p_\rv{X})}}\cdot \mathbbm{1}\left\{\mathbf{x}_1^n\in\set{T}_n(p_\rv{X})\right\} \quad\forall \mathbf{x}_1^n\in\set{X}^n,
\end{equation}
for some pmf $\{\alpha_{p_\rv{X}}\}_{p_\rv{X}\in\set{P}_n(\set{X})}$ on $\set{P}_n(\set{X})$,
there exists some type $p_\rv{X}\in\set{P}_n(\set{X})$ such that
\begin{equation}
	p^\star_{\rvs{X}_1^n} \geq \frac{1}{\size*{\set{P}_n(\set{X})}}\cdot \frac{1}{\size*{\set{T}_n(p_\rv{X})}}\cdot \mathbbm{1}\left\{\mathbf{x}_1^n\in\set{T}_n(p_\rv{X})\right\}
	\ge  \frac{1}{(n+1)^{\size*{\set{X}}-1}} \cdot \frac{1}{\size*{\set{T}_n(p_\rv{X})}}\cdot \mathbbm{1}\left\{\mathbf{x}_1^n\in\set{T}_n(p_\rv{X})\right\},
\end{equation}
where we used $\size*{\set{P}_n(\set{X})}=\binom{n+\size*{\set{X}}-1}{n} \le (n+1)^{\size*{\set{X}}-1}$. 
This allows us to lower bound~\eqref{eq:SEE-Achievability:1} by
\begin{equation}\begin{aligned}
    1-\epsilon^\fnc{NS}(\floor{e^{nr}},W_{\rv{Y}|\rv{X}}^{\tensor n})
    \geq \frac{1}{(n+1)^{\size*{\set{X}}-1}} \cdot
    \adjustlimits \inf_{p_\rv{X}\in\set{P}_n(\set{X})} \sup_{q_\rv{Y}\in\set{P}(\set{Y})}
    \sum_{\mathbf{x}_1^n\in\set{T}_n(p_\rv{X})}\frac{1}{\size*{\set{T}_n(p_\rv{X})}} \cdot \ldots \hspace{82pt}\\
    \sum_{\mathbf{y}_1^n\in\set{Y}^n}\min\left\{  W_{\rv{Y}|\rv{X}}^{\tensor n}(\mathbf{y}_1^n|\mathbf{x}_1^n),\, \floor{e^{nr}}\cdot q_\rv{Y}^{\tensor n}(\mathbf{y}_1^n) \right\}
\end{aligned}\end{equation}

Now, we apply the method of types in a similar fashion as in Section~\ref{sec:EE-converse}.
Let $V_{\rv{Y}|\rv{X}}^{(n)}\in\set{P}_n(\set{Y}|\set{X})$ be an arbitrary conditional type (see Definition~\ref{def:conditonal:type}).
Restricting the summation over $\mathbf{y}_1^n$ to the ones such that $(\mathbf{x}_1^n,\mathbf{y}_1^n)$ is of conditional type $V^{(n)}_{\rv{Y}|\rv{X}}$ we have 
\begin{align}
&\begin{aligned}
    1-\epsilon^\fnc{NS}(\floor{e^{nr}},W_{\rv{Y}|\rv{X}}^{\tensor n})
    \geq \frac{1}{(n+1)^{\size*{\set{X}}-1}} \cdot
    \multiadjustlimits{
        \inf_{p_\rv{X}\in\set{P}_n(\set{X})},
        \sup_{q_\rv{Y}\in\set{P}(\set{Y})},
        \sup_{V_{\rv{Y}|\rv{X}}^{(n)}\in\set{P}_n(\set{Y}|\set{X})}}\ 
    \sum_{\mathbf{x}_1^n\in\set{T}_n(p_\rv{X})}\frac{1}{\size*{\set{T}_n(p_\rv{X})}} \cdot \ldots \hspace{19pt}\\
    \sum_{\mathbf{y}_1^n\in\set{T}_n(V^{(n)}_{\rv{Y}|\rv{X}},\mathbf{x}_1^n)} 
    \min\Bigg\{ \exp\left(-n\cdot\left(D\infdiv*{p_\rv{X}\cdot V^{(n)}_{\rv{Y}|\rv{X}}}{p_\rv{X}\cdot W_{\rv{Y}|\rv{X}}} + H(\rv{Y}|\rv{X})\right) \right),\,\ldots\\
    \floor{e^{nr}}\cdot \exp\left(-n\cdot\left(D\infdiv*{p_\rv{X}\cdot V^{(n)}_{\rv{Y}|\rv{X}}}{p_\rv{X}\times q_\rv{Y}} + H(\rv{Y}|\rv{X})\right)\right) \Bigg\}
\end{aligned}\\
& \begin{aligned}
    \phantom{1-\epsilon^\fnc{NS}(\floor{e^{nr}},W_{\rv{Y}|\rv{X}}^{\tensor n})}=\frac{1}{(n+1)^{\size*{\set{X}}-1}} \cdot
    \multiadjustlimits{
        \inf_{p_\rv{X}\in\set{P}_n(\set{X})},
        \sup_{q_\rv{Y}\in\set{P}(\set{Y})},
        \sup_{V_{\rv{Y}|\rv{X}}^{(n)}\in\set{P}_n(\set{Y}|\set{X})}}\ 
    \sum_{\mathbf{y}_1^n\in\set{T}_n(V^{(n)}_{\rv{Y}|\rv{X}},\mathbf{x}_1^n)} \ldots \hspace{46pt}\\
    \min\Bigg\{ \exp\left(-n\cdot\left(D\infdiv*{p_\rv{X}\cdot V^{(n)}_{\rv{Y}|\rv{X}}}{p_\rv{X}\cdot W_{\rv{Y}|\rv{X}}} + H(\rv{Y}|\rv{X})\right) \right),\, \ldots \\
    \floor{e^{nr}}\cdot \exp\left(-n\cdot\left(D\infdiv*{p_\rv{X}\cdot V^{(n)}_{\rv{Y}|\rv{X}}}{p_\rv{X}\times q_\rv{Y}} + H(\rv{Y}|\rv{X})\right)\right) \Bigg\}
\end{aligned}
\end{align}
where we have used~\eqref{eq:W:conditional:type:identity} and the fact that (also see~\eqref{eq:mu:conditional:type:identity})
\begin{equation}
	q_\rv{Y}^{\tensor n}(\mathbf{y}_1^n) = \exp\left(-n\cdot\left(D\infdiv*{p_\rv{X}\cdot V^{(n)}_{\rv{Y}|\rv{X}}}{p_\rv{X}\cdot q_\rv{Y}} + H(\rv{Y}|\rv{X})\right)\right),
\end{equation}
where the random variables $(\rv{X},\rv{Y})$ are distributed according to $p_\rv{X}\cdot V^{(n)}_{\rv{Y}|\rv{X}}$ for the conditional entropy $H(\rv{Y}|\rv{X})$ in the above expressions.
Recall~\eqref{eq:size:conditional:type} (also see Lemma~\ref{lem:conditional:type:size}~\cite[Lemma~2.5]{csiszar2011information}), and we can further write 
\begin{align}
    &\begin{aligned}
    1-\epsilon^\fnc{NS}(\floor{e^{nr}},W_{\rv{Y}|\rv{X}}^{\tensor n})\geq\frac{1}{(n+1)^{\size*{\set{X}}-1}} \cdot
    \multiadjustlimits{
        \inf_{p_\rv{X}\in\set{P}_n(\set{X})},
        \sup_{q_\rv{Y}\in\set{P}(\set{Y})},
        \sup_{V_{\rv{Y}|\rv{X}}^{(n)}\in\set{P}_n(\set{Y}|\set{X})}}\ 
    \frac{1}{(n+1)^{\size*{\set{X}}\cdot\size*{\set{Y}}}}\cdot \ldots\hspace{41pt}\\
    \min\Bigg\{ \exp\left(-n\cdot D\infdiv*{p_\rv{X}\cdot V^{(n)}_{\rv{Y}|\rv{X}}}{p_\rv{X}\cdot W_{\rv{Y}|\rv{X}}}\right),\, \floor{e^{nr}}\cdot \exp\left(-n\cdot D\infdiv*{p_\rv{X}\cdot V^{(n)}_{\rv{Y}|\rv{X}}}{p_\rv{X}\times q_\rv{Y}}\right) \Bigg\}
    \end{aligned}\\
    &\begin{aligned}
    \phantom{1-\epsilon^\fnc{NS}(\floor{e^{nr}},W_{\rv{Y}|\rv{X}}^{\tensor n})}\ge \frac{1}{(n+1)^{\size*{\set{X}}-1}} \cdot
    \adjustlimits \inf_{p_\rv{X}\in\set{P}_n(\set{X})} \sup_{V_{\rv{Y}|\rv{X}}^{(n)}\in\set{P}_n(\set{Y}|\set{X})}\
    \frac{1}{(n+1)^{\size*{\set{X}}\cdot\size*{\set{Y}}}}\cdot \ldots\hspace{77pt}\\
    \min\Bigg\{ \exp\left(-n\cdot D\infdiv*{p_\rv{X}\cdot V^{(n)}_{\rv{Y}|\rv{X}}}{p_\rv{X}\cdot W_{\rv{Y}|\rv{X}}}\right),\, \floor{e^{nr}}\cdot \exp\left(-n\cdot D\infdiv*{p_\rv{X}\cdot V^{(n)}_{\rv{Y}|\rv{X}}}{p_\rv{X}\times p^{(n)}_\rv{Y}}\right) \Bigg\}
    \end{aligned}\end{align}
where we choose  $q_{\rv{Y}}=p_\rv{Y}^{(n)}(\cdot) \defeq \sum_{x\in\set{X}} p_\rv{X}(x) \cdot V_{\rv{Y}|\rv{X}}^{(n)}(\cdot|x)$.
Following a similar argument in Section~\ref{sec:EE-converse}, we can restrict $V_{\rv{Y}|\rv{X}}^{(n)}$ to approximations $V_{\rv{Y}|\rv{X}}^{(p_\rv{X})}$ of $V_{\rv{Y}|\rv{X}}\in\set{P}(\set{Y}|\set{X})$ and $p_\rv{X}$ to approximations $p^{(n)}_\rv{X}$ of $p_\rv{X}\in\set{P}(\set{X})$ (see Lemma~\ref{lem:I:continuity} together with Lemma~\ref{lem:type:approx}, Lemma~\ref{lem:conditional:type:approx}, Lemma~\ref{lem:D:continuity}, and~\eqref{eq:approx:V}--\eqref{EE:converse:temp:2}), \ie,
\begin{align}\label{eq:SCE:method:of:types:end}\begin{aligned}
    &1-\epsilon^\fnc{NS}(\floor{e^{nr}},W_{\rv{Y}|\rv{X}}^{\tensor n})\geq\frac{1}{(n+1)^{\size*{\set{X}}\cdot\size*{\set{Y}}+\size*{\set{X}}-1}} \cdot
    \adjustlimits\inf_{p_\rv{X}\in\set{P}(\set{X})}\sup_{V_{\rv{Y}|\rv{X}}\in\set{P}(\set{Y}|\set{X})}\
    \ldots\\
    &\min\Bigg\{\! A_n\!\cdot\! \exp\left(-n\cdot D\infdiv*{p_\rv{X}\cdot V_{\rv{Y}|\rv{X}}}{p_\rv{X}\cdot W_{\rv{Y}|\rv{X}}}\right),\, \floor{e^{nr}}\!\cdot\! \tilde{B}_n^{-1}\!\cdot\!\exp\left(-n\cdot D\infdiv*{p_\rv{X}\cdot V_{\rv{Y}|\rv{X}}}{p_\rv{X}\times p_\rv{Y}}\right) \!\Bigg\}
\end{aligned}\end{align}
for all $V_{\rv{Y}|\rv{X}}\in\set{P}(\set{Y}|\set{X})$, where $A_n$ is the same one as in~\eqref{EE:converse:temp:1} whereas $\tilde{B}_n$ turns out to be as follows
\begin{equation}
    \tilde{B}_n \defeq n^{3\cdot \size*{\set{X}}\cdot\size*{\set{Y}}}\cdot \size*{\set{X}}^{-2\cdot\size*{\set{X}}\cdot\size*{\set{Y}}}\cdot \size*{\set{Y}}^{\size*{\set{X}}}.	
\end{equation}

Notice that $\min\{x,y\} = \min_{\alpha\in[0,1]} x^{\alpha}\cdot y^{1-\alpha} $ for any positive real numbers $x$ and $y$.
Thus, we can rewrite~\eqref{eq:SCE:method:of:types:end} as
\begin{align}
\begin{aligned}
     1-\epsilon^\fnc{NS}(\floor{e^{nr}},W_{\rv{Y}|\rv{X}}^{\tensor n})\geq
      \frac{1}{(n+1)^{\size*{\set{X}}\cdot\size*{\set{Y}}+\size*{\set{X}}-1}} \cdot \multiadjustlimits{
    \inf_{p_\rv{X}\in\set{P}(\set{X})}, 
    \sup_{V_{\rv{Y}|\rv{X}}\in\set{P}(\set{Y}|\set{X})}, 
    \inf_{\alpha\in[0,1]}}\ldots \hspace{10pt} \\
    \ 
     A_n^{\alpha}\cdot \exp\left(-n\cdot \alpha \cdot D\infdiv*{p_\rv{X}\cdot V_{\rv{Y}|\rv{X}}}{p_\rv{X}\cdot W_{\rv{Y}|\rv{X}}}\right) \cdot \left(\floor{e^{nr}}\right)^{1-\alpha}\cdot \ldots\\
     \tilde{B}_n^{\alpha-1}\cdot\exp\left(-n\cdot(1-\alpha)\cdot D\infdiv*{p_\rv{X}\cdot V_{\rv{Y}|\rv{X}}}{p_\rv{X}\times p_\rv{Y}}\right),
\end{aligned}\end{align}
or equivalently, 
\begin{equation}\label{eq:SCE-achievability:temp}\begin{aligned}
    &\hspace{14pt}-\frac{1}{n}\log{\left(1-\epsilon^\fnc{NS}(\floor{e^{nr}},W_{\rv{Y}|\rv{X}}^{\tensor n})\right)}	
    -\frac{\size*{\set{X}}\cdot\size*{\set{Y}}+\size*{\set{X}}-1}{n}\log{(n+1)}
    -\frac{\log{\tilde{B}_n}}{n} \\
    &\begin{aligned}\leq \multiadjustlimits{
    \sup_{p_\rv{X}\in\set{P}(\set{X})}, 
    \inf_{V_{\rv{Y}|\rv{X}}\in\set{P}(\set{Y}|\set{X})}, 
    \sup_{\alpha\in[0,1]}}\ 
    - \alpha\cdot\frac{\log{(A_n\tilde{B}_n)}}{n} + (\alpha-1)\cdot r_n + \ldots \hspace{129pt}\\
    \underbrace{\alpha \cdot D\infdiv*{p_\rv{X}\cdot V_{\rv{Y}|\rv{X}}}{p_\rv{X}\cdot W_{\rv{Y}|\rv{X}}}
    +(1-\alpha)\cdot D\infdiv*{p_\rv{X}\cdot V_{\rv{Y}|\rv{X}}}{p_\rv{X}\times p_\rv{Y}}}_{= I(\rv{X};\rv{Y}) - \alpha\cdot H(\rv{Y}) - \alpha\cdot\sum_{x,y} p_\rv{X}(x)\cdot V_{\rv{Y}|\rv{X}}(y|x) \cdot \log{W_{\rv{Y}|\rv{X}}(y|x)} \text{ with }(\rv{X},\rv{Y})\sim p_\rv{X}\cdot V_{\rv{Y}|\rv{X}} } \end{aligned}
\end{aligned}\end{equation}
where $r_n \defeq \frac{1}{n}\log{\floor{e^{nr}}}$ has already been defined in Proposition~\ref{prop:EE-Ach}.
Note that the expression to be optimized on the RHS of~\eqref{eq:SCE-achievability:temp} is linear in $\alpha$ and convex in $V_{\rv{Y}|\rv{X}}$ and the sets $[0,1]$ and $\set{P}(\set{Y}|\set{X})$ are convex compact.
We apply Sion's minimax theorem~\cite{sion1958general} to swap the supremum over $\alpha$ outside, and get 
\begin{equation}\label{eq:SCE-achievability:temp:2}\begin{aligned}
    &\hspace{14pt}-\frac{1}{n}\log{\left(1-\epsilon^\fnc{NS}(\floor{e^{nr}},W_{\rv{Y}|\rv{X}}^{\tensor n})\right)}	
    -\frac{\size*{\set{X}}\cdot\size*{\set{Y}}+\size*{\set{X}}-1}{n}\log{(n+1)}
    -\frac{\log{\tilde{B}_n}}{n} \\
    &\begin{aligned}\leq \sup_{\alpha\in[0,1]} - \alpha\cdot\frac{\log{(A_n\tilde{B}_n)}}{n} + (\alpha-1)\cdot r_n  + \ldots \hspace{223pt}\\
    \multiadjustlimits{
    \sup_{p_\rv{X}\in\set{P}(\set{X})}, 
    \inf_{V_{\rv{Y}|\rv{X}}\in\set{P}(\set{Y}|\set{X})},
    \inf_{q_\rv{Y}\in\set{P}(\set{Y})}}\ 
    \alpha \cdot D\infdiv*{p_\rv{X}\cdot V_{\rv{Y}|\rv{X}}}{p_\rv{X}\cdot W_{\rv{Y}|\rv{X}}}
    +(1-\alpha)\cdot D\infdiv*{p_\rv{X}\cdot V_{\rv{Y}|\rv{X}}}{p_\rv{X}\times q_\rv{Y}} \end{aligned}
\end{aligned}\end{equation}
where we have also used the fact that $\inf_{q_\rv{Y}\in\set{P}(\set{Y})} D\infdiv*{p_\rv{X}\cdot V_{\rv{Y}|\rv{X}}}{p_\rv{X}\times q_\rv{Y}} = D\infdiv*{p_\rv{X}\cdot V_{\rv{Y}|\rv{X}}}{p_\rv{X}\times p_\rv{Y}}$.
By the variational formulation of the R\'enyi divergence~\cite[Theorem 30]{vanErven2014Jun}, we have for $\alpha\in [0,1]$
\begin{equation}\label{eq:variational-Renyi}\begin{aligned}
    \inf_{V_{\rv{Y}|\rv{X}}\in\set{P}(\set{Y}|\set{X})}
    \alpha \cdot D\infdiv*{p_\rv{X}\cdot V_{\rv{Y}|\rv{X}}}{p_\rv{X}\cdot W_{\rv{Y}|\rv{X}}}
    +(1-\alpha)\cdot D\infdiv*{p_\rv{X}\cdot V_{\rv{Y}|\rv{X}}}{p_\rv{X}\times q_\rv{Y}} \hspace{97pt}\\
    = (1-\alpha) \cdot \mathbb{E}_{x\sim p_\rv{X}}\left[D_\alpha\infdiv*{ W_{\rv{Y}|\rv{X}}(\cdot|x)}{q_\rv{Y}}\right].
\end{aligned}\end{equation}
Combine~\eqref{eq:variational-Renyi} with~\eqref{eq:SCE-achievability:temp:2}, we have 
\begin{align}
    &\hspace{14pt}-\frac{1}{n}\log{\left(1-\epsilon^\fnc{NS}(\floor{e^{nr}},W_{\rv{Y}|\rv{X}}^{\tensor n})\right)}\nonumber \\
    &\leq \multiadjustlimits{
        \sup_{\alpha\in[0,1]},
        \sup_{p_\rv{X}\in\set{P}(\set{X})},
        \inf_{q_\rv{Y}\in\set{P}(\set{Y})}}
    (1-\alpha)\cdot \left(\mathbb{E}_{x\sim p_\rv{X}}\left[D_\alpha\infdiv*{ W_{\rv{Y}|\rv{X}}(\cdot|x)}{q_\rv{Y}}\right] - r_n\right) 
    -\alpha\cdot\tilde{g}_n + \tilde{f}_n \\
    &= \multiadjustlimits{
        \sup_{\alpha\in[0,1]},
        \sup_{p_\rv{X}\in\set{P}(\set{X})},
        \inf_{q_\rv{Y}\in\set{P}(\set{Y})}}
    (1-\alpha)\cdot \left(D_{\alpha}\infdiv*{p_\rv{X}\cdot W_{\rv{Y}|\rv{X}}}{p_\rv{X}\times q_\rv{Y}}- r_n\right) 
    -\alpha\cdot\tilde{g}_n + \tilde{f}_n
\end{align}
 where we use \cite[Proposition 1]{csiszar1995generalized} in the last equality  and
\begin{align*}
 \tilde{f}_n &\defeq \frac{\size*{\set{X}}\cdot\size*{\set{Y}}+\size*{\set{X}}-1}{n}\log{(n+1)} + \frac{\log{\tilde{B}_n}}{n}\\
&= \frac{\left(\size*{\set{X}}\cdot\size*{\set{Y}}+\size*{\set{X}}-1\right)\cdot\log{(n+1)}}{n} + \frac{3\cdot\size*{\set{X}}\cdot\size*{\set{Y}}\cdot\log{n}}{n} + \frac{-2\cdot\size*{\set{X}}\cdot\size*{\set{Y}}\cdot\log{\size*{\set{X}}} + \size*{\set{X}}\cdot\log{\size*{\set{Y}}}}{n}\\
&= O\left(\frac{\log{n}}{n}\right), \\
 \tilde{g}_n &\defeq \frac{1}{n}\log{\left(A_n \tilde{B}_n\right)} 
= \frac{2\cdot\size*{\set{X}}\cdot\size*{\set{Y}}\cdot\log{(W_{\min}\cdot n)}}{n} + \frac{-2\cdot\size*{\set{X}}\cdot\size*{\set{Y}}\cdot\log{\size*{\set{X}}} + \size*{\set{X}}\cdot\log{\size*{\set{Y}}}}{n} \\
&= O\left(\frac{\log{n}}{n}\right).\qedhere
\end{align*}
\end{proof}
From Propositions \ref{prop:SCE:Converse} and \ref{prop:SCE:achievability}, we deduce strong converse exponent for non-signaling channel simulation. 
\begin{theorem}\label{thm:NS:SCE}
Let $W_{\rv{Y}|\rv{X}}\in\set{P}(\set{Y}|\set{X})$ be a channel.
For all $r> 0$, it holds that 
\begin{equation}
    \lim_{n\to \infty} -\frac{1}{n} \log\left(1-\epsilon^\fnc{NS}(\floor{e^{nr}},W_{\rv{Y}|\rv{X}}^{\tensor n})\right) =
    \sup_{\alpha\in[0,1]} \; (1-\alpha)\cdot\left( -r + \adjustlimits\sup_{p_\rv{X}\in\set{P}(\set{X})} \inf_{q_\rv{Y}\in\set{P}(\set{Y})} D_{\alpha}\infdiv*{p_\rv{X}\cdot W_{\rv{Y}|\rv{X}}}{p_\rv{X}\times q_\rv{Y}} \right).
\end{equation}
\end{theorem}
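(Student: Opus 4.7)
The proof is a direct combination of Propositions~\ref{prop:SCE:Converse} and~\ref{prop:SCE:achievability}, which already supply matching inequalities up to a cosmetic change of variable and vanishing remainder terms. My plan is simply to align the two expressions and take $n\to\infty$.

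First, I would invoke Proposition~\ref{prop:SCE:Converse}: for every positive integer $n$,
\begin{equation*}
-\frac{1}{n}\log\left(1-\epsilon^\fnc{NS}(\floor{e^{nr}},W^{\tensor n}_{\rv{Y}|\rv{X}})\right) \geq \sup_{\alpha\in[0,1]}\alpha\cdot\left(-r + I_{1-\alpha}(W_{\rv{Y}|\rv{X}})\right),
\end{equation*}
where $I_\beta(W_{\rv{Y}|\rv{X}})\defeq \sup_{p_\rv{X}}\inf_{q_\rv{Y}}D_\beta\infdiv*{p_\rv{X}\cdot W_{\rv{Y}|\rv{X}}}{p_\rv{X}\times q_\rv{Y}}$. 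The substitution $\beta = 1-\alpha$ is a bijection of $[0,1]$ onto itself, so the RHS rewrites as $\sup_{\beta\in[0,1]}(1-\beta)\cdot(-r + I_\beta(W_{\rv{Y}|\rv{X}}))$, which is precisely the claimed limit. Since the bound is uniform in $n$, taking $\liminf_{n\to\infty}$ on the LHS gives the ``$\geq$'' direction of the theorem.

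Second, I would apply Proposition~\ref{prop:SCE:achievability}: for all $n\geq 3\size{\set{X}}$,
\begin{equation*}
-\frac{1}{n}\log\left(1-\epsilon^\fnc{NS}(\floor{e^{nr}},W^{\tensor n}_{\rv{Y}|\rv{X}})\right) \leq \sup_{\alpha\in[0,1]}\left\{(1-\alpha)\cdot\left(-r_n + I_\alpha(W_{\rv{Y}|\rv{X}})\right) - \alpha\tilde{g}_n\right\} + \tilde{f}_n.
\end{equation*}
Since $r_n\to r$ and both $\tilde{f}_n$, $\tilde{g}_n$ are $O(\log n/n)$, and the supremum is over the compact set $[0,1]$ with jointly continuous dependence on the parameters $(r_n,\tilde{g}_n)$, taking $\limsup_{n\to\infty}$ yields the matching upper bound. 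Combining the two bounds gives the stated equality.

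The main (and only) obstacle at this step is verifying that the limit commutes with the supremum over $\alpha\in[0,1]$; however, this is automatic because $[0,1]$ is compact and the integrand is jointly continuous in all parameters. The heavy lifting—namely, the one-shot meta-converse of Proposition~\ref{prop:epsilon:NS:exact}, the Chernoff bound for the converse direction, and the method of types together with the de Finetti reduction, Sion's minimax theorem, and the variational characterization of Rényi divergence for the achievability direction—was already deployed inside Propositions~\ref{prop:SCE:Converse} and~\ref{prop:SCE:achievability}, so no new ideas are needed at this final assembly step.
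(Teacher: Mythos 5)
Your proof is correct and takes the same approach as the paper; the paper itself gives no explicit argument beyond the sentence ``From Propositions~\ref{prop:SCE:Converse} and~\ref{prop:SCE:achievability}, we deduce\ldots'', and your assembly (the change of variable $\beta = 1-\alpha$ to align Proposition~\ref{prop:SCE:Converse} with the theorem's form, plus the observation that the $r_n \to r$, $\tilde f_n, \tilde g_n = O(\log n / n)$ perturbations vanish uniformly in $\alpha \in [0,1]$, so the $n\to\infty$ limit passes through the supremum) is precisely the intended gap-filling.
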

\section{Exponents in the Shared Randomness-Assisted Scenario}\label{sec:SR}
In this section, we would like to link the exponents of channel simulation in the shared randomness-assisted scenario to those in the non-signaling scenario.
In particular, in~\cite{berta2024optimality}, the authors have shown the following result connecting the optimal deviations for channel simulation under the two scenarios, \ie,
\begin{theorem}[{\cite[Corollary~1.1]{berta2024optimality}}, rephrased] \label{prop:rounding}
Let $W_{\rv{Y}|\rv{X}}\in\set{P}(\set{Y}|\set{X})$ be a channel.
For any $M, M' \geq 1$, it holds that
\begin{equation}
    \epsilon^\fnc{NS}(M',W_{\rv{Y}|\rv{X}}) \leq
    \epsilon^\fnc{SR}(M',W_{\rv{Y}|\rv{X}}) \leq
    \left(1-\left(1-\frac{1}{M}\right)^{M'}\right)\cdot \epsilon^\fnc{NS}(M,W_{\rv{Y}|\rv{X}}) + \left(1-\frac{1}{M}\right)^{M'}
\end{equation}
where $\epsilon^\fnc{SR}(M',W_{\rv{Y}|\rv{X}})$ denotes the minimal attainable distortion (measured in TVD) of shared randomness-assisted simulation codes for $W_{\rv{Y}|\rv{X}}$ with alphabet size at most $M'$.
\end{theorem}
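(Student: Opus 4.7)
The plan is as follows. The left inequality $\epsilon^\fnc{NS}(M',W_{\rv{Y}|\rv{X}}) \leq \epsilon^\fnc{SR}(M',W_{\rv{Y}|\rv{X}})$ is immediate, since any shared-randomness-assisted code fits into the non-signaling framework by casting the shared random variable as a non-signaling box with trivial inputs on each side. So the entire work lies in the right inequality, which I would establish by a rejection-sampling rounding argument. The starting point is the one-shot characterization of Proposition~\ref{prop:epsilon:NS:exact} in the form~\eqref{eq:epsilon:NS:exact:1}: given $\eta>0$, pick $q_\rv{Y}^{\star}\in\set{P}(\set{Y})$ such that $\sum_{y} \bigl(W_{\rv{Y}|\rv{X}}(y|x) - M \cdot q_\rv{Y}^{\star}(y)\bigr)_+ \leq \epsilon^\fnc{NS}(M,W_{\rv{Y}|\rv{X}}) + \eta$ for every $x\in\set{X}$. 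Since $\sum_{y} M\cdot q_\rv{Y}^{\star}(y)=M\geq 1$, applying Lemma~\ref{lem:mim:tvd} with $f=M\cdot q_\rv{Y}^{\star}$ columnwise yields an auxiliary channel $\hat{W}_{\rv{Y}|\rv{X}}$ satisfying both the pointwise majorization $\hat{W}_{\rv{Y}|\rv{X}}(\cdot|x)\leq M\cdot q_\rv{Y}^{\star}$ and $\bigl\|\hat{W}_{\rv{Y}|\rv{X}}-W_{\rv{Y}|\rv{X}}\bigr\|_\fnc{TVD} \leq \epsilon^\fnc{NS}(M,W_{\rv{Y}|\rv{X}})+\eta$.

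Next I construct an explicit $M'$-message SR code that exactly simulates $\hat{W}_{\rv{Y}|\rv{X}}$ conditioned on a ``success'' event. The shared randomness comprises $M'$ i.i.d.\ samples $Y_1,\ldots,Y_{M'}\sim q_\rv{Y}^{\star}$ together with independent uniforms $U_1,\ldots,U_{M'}\sim\uniform([0,1])$. Given input $x$, the encoder scans $i=1,\ldots,M'$ in order and accepts index $i$ as soon as $U_i\leq \hat{W}_{\rv{Y}|\rv{X}}(Y_i|x)/(M\cdot q_\rv{Y}^{\star}(Y_i))$; the pointwise majorization guarantees the ratio lies in $[0,1]$. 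If some index is accepted, it is transmitted and the decoder outputs the corresponding $Y_i$; otherwise a fixed default index is sent and the decoder outputs an arbitrary fallback symbol whose distribution I denote by $P_\fnc{fail}$.

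By direct computation, the per-trial acceptance probability is
\begin{equation*}
\sum_{y\in\set{Y}} q_\rv{Y}^{\star}(y)\cdot\frac{\hat{W}_{\rv{Y}|\rv{X}}(y|x)}{M\cdot q_\rv{Y}^{\star}(y)}=\frac{1}{M},
\end{equation*}
uniformly in $x\in\set{X}$, and the conditional distribution of the accepted sample given acceptance equals $\hat{W}_{\rv{Y}|\rv{X}}(\cdot|x)$ exactly. Consequently the decoded output follows $\bigl(1-(1-1/M)^{M'}\bigr)\hat{W}_{\rv{Y}|\rv{X}}(\cdot|x)+(1-1/M)^{M'}P_\fnc{fail}$, and the triangle inequality for TVD (together with the trivial bound $\|P_\fnc{fail}-W_{\rv{Y}|\rv{X}}(\cdot|x)\|_\fnc{TVD}\leq 1$) delivers the claimed bound, modulo the arbitrary $\eta>0$ that is then sent to zero. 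The main obstacle is ensuring that the per-trial acceptance probability equals $1/M$ \emph{exactly} and independently of $x$: using $W_{\rv{Y}|\rv{X}}$ directly in the acceptance test can produce ratios exceeding $1$ whenever $W_{\rv{Y}|\rv{X}}(y|x) > M\cdot q_\rv{Y}^{\star}(y)$, which would break both boundedness and the uniform-in-$x$ acceptance rate. The rounding to $\hat{W}_{\rv{Y}|\rv{X}}$ furnished by Lemma~\ref{lem:mim:tvd} is thus essential for cleanly separating the analysis into a rejection-sampling bias of size at most $\epsilon^\fnc{NS}(M,W_{\rv{Y}|\rv{X}})$ plus a failure probability of exactly $(1-1/M)^{M'}$.
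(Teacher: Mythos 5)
The paper does not supply its own proof of this statement---it is cited verbatim (``rephrased'') from \cite[Corollary~1.1]{berta2024optimality}---so there is no internal proof to compare against, but the abstract and introduction point to exactly the ``rounding techniques'' that your argument reconstructs. Your rejection-sampling construction is correct and, to the best of my assessment, is the same approach as in the cited reference: first obtain the majorizing channel $\hat{W}_{\rv{Y}|\rv{X}}\leq M\cdot q^{\star}_{\rv{Y}}$ via Lemma~\ref{lem:mim:tvd} so that each acceptance ratio lies in $[0,1]$ and the per-trial acceptance probability is exactly $1/M$ independently of $x$; then bound the simulated channel by the mixture $(1-(1-1/M)^{M'})\hat{W}_{\rv{Y}|\rv{X}}+(1-1/M)^{M'}P_\fnc{fail}$ and triangulate. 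You correctly identify that using $W_{\rv{Y}|\rv{X}}$ directly in the acceptance test would fail, and that the one-shot characterization from Proposition~\ref{prop:epsilon:NS:exact} is what makes the rounding quantitatively tight.

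One small imprecision worth flagging: with $M'$ messages, the decoder cannot distinguish ``index $1$ was accepted'' from ``default index $1$ was sent upon total failure,'' so the decoder necessarily outputs $Y_1$ in both cases. Consequently $P_\fnc{fail}$ is not a fixed distribution but the conditional law of $Y_1$ given that all $M'$ trials rejected, which does depend on $x$. This has no effect on your estimate, since the only property of $P_\fnc{fail}$ you invoke is $\|P_\fnc{fail}-W_{\rv{Y}|\rv{X}}(\cdot|x)\|_\fnc{TVD}\leq 1$, which holds uniformly; but stating $P_\fnc{fail}$ as an ``arbitrary fallback symbol'' suggests an $x$-independent default that the $M'$-message protocol does not actually implement. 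Beyond this cosmetic point, the proof is complete.
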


\subsection{Error Exponent for Shared Randomness-Assisted Channel Simulation}
We use Theorem~\ref{prop:rounding} to establish the following relationships between the error exponents for channel simulation under the two scenarios.
\begin{lemma}\label{lem:SR-NS-EE}
Let $W_{\rv{Y}|\rv{X}}\in\set{P}(\set{Y}|\set{X})$ be a channel.
For any $r>0$ and $\delta>0$, it holds that
\begin{align}
\label{eq:rounding:1}
\limsup_{n\to\infty} -\frac{1}{n} \log{\epsilon^\fnc{SR}(\floor{e^{nr}}, W_{\rv{Y}|\rv{X}}^{\tensor n})} 
&\leq \lim_{n\to\infty} -\frac{1}{n} \log{\epsilon^\fnc{NS}(\floor{e^{nr}}, W_{\rv{Y}|\rv{X}}^{\tensor n})},\\
\label{eq:rounding:2}
\liminf_{n\to\infty} -\frac{1}{n} \log{\epsilon^\fnc{SR}(\floor{e^{n\cdot (r+\delta)}}, W_{\rv{Y}|\rv{X}}^{\tensor n})} 
&\geq \lim_{n\to\infty} -\frac{1}{n} \log{\epsilon^\fnc{NS}(\floor{e^{nr}}, W_{\rv{Y}|\rv{X}}^{\tensor n})}.
\end{align}
\end{lemma}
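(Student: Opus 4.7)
The plan is to derive both inequalities directly from Theorem~\ref{prop:rounding}, applied with suitable choices of $M$ and $M'$ to the $n$-fold channel $W_{\rv{Y}|\rv{X}}^{\tensor n}$. The first inequality~\eqref{eq:rounding:1} is essentially free: the left half of the sandwich in Theorem~\ref{prop:rounding}, applied with $M' = \floor{e^{nr}}$, gives $\epsilon^\fnc{NS}(\floor{e^{nr}}, W_{\rv{Y}|\rv{X}}^{\tensor n}) \leq \epsilon^\fnc{SR}(\floor{e^{nr}}, W_{\rv{Y}|\rv{X}}^{\tensor n})$. Taking $-\tfrac{1}{n}\log$, then $\limsup$ on the left and $\lim$ on the right (the latter exists by Theorem~\ref{thm:NS:EE}), immediately yields~\eqref{eq:rounding:1}.

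For~\eqref{eq:rounding:2}, I would invoke the right half of Theorem~\ref{prop:rounding} with the choices $M = \floor{e^{nr}}$ and $M' = \floor{e^{n(r+\delta)}}$, obtaining
\begin{equation}
\epsilon^\fnc{SR}(\floor{e^{n(r+\delta)}}, W_{\rv{Y}|\rv{X}}^{\tensor n}) \leq \epsilon^\fnc{NS}(\floor{e^{nr}}, W_{\rv{Y}|\rv{X}}^{\tensor n}) + \left(1 - \frac{1}{\floor{e^{nr}}}\right)^{\floor{e^{n(r+\delta)}}}.
\end{equation}
The crucial observation is that the correction term is \emph{doubly} exponentially small in $n$: using $(1-1/M)^{M'} \leq e^{-M'/M}$ together with the fact that $\floor{e^{n(r+\delta)}}/\floor{e^{nr}}$ grows like $e^{n\delta}$ (so is at least $e^{n\delta}/2$ for all sufficiently large $n$), this correction is bounded above by $\exp(-e^{n\delta}/2)$, whose $-\tfrac{1}{n}\log$ diverges to $+\infty$.

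To extract the exponent of the sum, I would apply the elementary bound $-\log(a+b) \geq \min\{-\log a, -\log b\} - \log 2$ for positive $a,b$, dividing by $n$ to get
\begin{equation}
-\tfrac{1}{n}\log \epsilon^\fnc{SR}(\floor{e^{n(r+\delta)}}, W_{\rv{Y}|\rv{X}}^{\tensor n}) \geq \min\!\left\{-\tfrac{1}{n}\log \epsilon^\fnc{NS}(\floor{e^{nr}}, W_{\rv{Y}|\rv{X}}^{\tensor n}),\; \tfrac{e^{n\delta}}{2n}\right\} - \tfrac{\log 2}{n}.
\end{equation}
Taking $\liminf$, the second argument of the minimum diverges to $+\infty$ while the first converges to the (finite) NS error exponent from Theorem~\ref{thm:NS:EE}; hence the minimum is eventually realized by the NS term, which yields~\eqref{eq:rounding:2}.

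There is no serious obstacle in this argument, but the one point worth flagging is the genuine role of the strictly positive rate gap $\delta$: it is precisely what makes $M'/M$ exponentially large and hence the rounding correction doubly exponentially small. At $\delta = 0$ one would have $M'/M = \Theta(1)$ and the correction $(1-1/M)^{M'}$ would not decay at all, so~\eqref{eq:rounding:2} cannot be obtained at $\delta = 0$ from Theorem~\ref{prop:rounding} alone. This presumably forces a subsequent continuity step (in $\delta$) to absorb the infinitesimal rate penalty when converting Lemma~\ref{lem:SR-NS-EE} into the final matching exponent for shared randomness-assisted simulation.
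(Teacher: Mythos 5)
Your proposal is correct and follows essentially the same route as the paper: both derive~\eqref{eq:rounding:1} from the trivial half of Theorem~\ref{prop:rounding}, and both prove~\eqref{eq:rounding:2} by applying the rounding bound with $M=\floor{e^{nr}}$, $M'=\floor{e^{n(r+\delta)}}$, dropping the $(1-\cdot)$ prefactor of $\epsilon^\fnc{NS}$, and observing that the additive correction term is doubly exponentially small. The only cosmetic difference is in how that perturbation is absorbed at the end: the paper isolates this into a standalone lemma (Eq.~\eqref{eq:exp:growth:pertubation}, proved in Appendix~\ref{app:exp:growth:pertubation} via an $\eta$-$N$ argument), while you use the inline bound $-\log(a+b)\geq\min\{-\log a,-\log b\}-\log 2$; these are interchangeable, and your version is arguably more streamlined. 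Your remark on the indispensable role of $\delta>0$ and the need to remove it by a subsequent limit $\delta\downarrow 0$ is exactly what the paper does in the proof of Theorem~\ref{thm:SR:EE}.
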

\begin{proof}
Eq.~\eqref{eq:rounding:1} is an immediate implication of the first inequality of Theorem~\ref{prop:rounding}.
To show~\eqref{eq:rounding:2}, we apply the second inequality in Theorem~\ref{prop:rounding}, and get
\begin{align}
    \epsilon^\fnc{SR}(\floor{e^{n\cdot(r+\delta)}}, W_{\rv{Y}|\rv{X}}^{\tensor n})
    & \leq \left(1-\left(1-\frac{1}{\floor{e^{nr}}}\right)^{\floor{e^{n\cdot(r+\delta)}}}\right)\cdot \epsilon^\fnc{NS}(\floor{e^{nr}},W_{\rv{Y}|\rv{X}}^{\tensor n}) + \left(1-\frac{1}{\floor{e^{nr}}}\right)^{\floor{e^{n\cdot(r+\delta)}}} \\
    &\leq \epsilon^\fnc{NS}(\floor{e^{nr}},W_{\rv{Y}|\rv{X}}^{\tensor n}) + \exp\left(-e^{n\cdot\delta}+e^{-n\cdot r}\right),\label{eq:epsSR:epsNS:delta}
\end{align}
where we have used the fact that, for $n\ge \frac{1}{r}$,  
\begin{align}
\log\left\{\left(1-\frac{1}{\floor{e^{nr}}}\right)^{\floor{e^{n\cdot(r+\delta)}}}\right\}
&= \floor{e^{n\cdot(r+\delta)}} \cdot \log{\left(1-\frac{1}{\floor{e^{nr}}}\right)}
\leq \floor{e^{n\cdot(r+\delta)}} \cdot \left(-\frac{1}{\floor{e^{nr}}}\right)\\
&\leq -\frac{e^{n\cdot(r+\delta)}-1}{e^{nr}} = -e^{n\cdot\delta} + e^{-n\cdot r}.
\end{align}
We further claim that for any positive sequence $\{a_n\}_{n\in\mathbb{N}}$ such that $\{\frac{1}{n}\log{a_n}\}_{n\in\mathbb{N}}$ is convergent, it holds that 
\begin{equation}\label{eq:exp:growth:pertubation}
\lim_{n\to\infty} \frac{1}{n} \log{a_n} = \lim_{n\to\infty} \frac{1}{n} \log{\left(a_n + \exp\left(-e^{n\cdot\delta} + e^{-n\cdot r}\right)\right)}
\end{equation}
for any $r, \delta>0$.
We defer the proof of~\eqref{eq:exp:growth:pertubation} to Appendix~\ref{app:exp:growth:pertubation}.
Finally,~\eqref{eq:rounding:2} is the result of combining~\eqref{eq:epsSR:epsNS:delta} and~\eqref{eq:exp:growth:pertubation}.
\end{proof}
Lemma~\ref{lem:SR-NS-EE} enables the following theorem.
\begin{theorem}\label{thm:SR:EE}
Let $W_{\rv{Y}|\rv{X}}\in\set{P}(\set{Y}|\set{X})$ be a channel.
For all $r> 0$, it holds that 
\begin{equation}\label{eq:EE-SR}
    \lim_{n\to \infty} -\frac{1}{n} \log\epsilon^\fnc{SR}(\floor{e^{nr}},W_{\rv{Y}|\rv{X}}^{\tensor n}) =
    \sup_{\alpha\geq 0} \; \alpha\cdot\left( r - \adjustlimits\sup_{p_\rv{X}\in\set{P}(\set{X})} \inf_{q_\rv{Y}\in\set{P}(\set{Y})} D_{\alpha+1}\infdiv*{p_\rv{X}\cdot W_{\rv{Y}|\rv{X}}}{p_\rv{X}\times q_\rv{Y}} \right).
\end{equation}
\end{theorem}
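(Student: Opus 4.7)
The plan is to sandwich $-\tfrac{1}{n}\log\epsilon^\fnc{SR}(\floor{e^{nr}},W_{\rv{Y}|\rv{X}}^{\tensor n})$ between two quantities that both converge to the RHS of~\eqref{eq:EE-SR}, which I abbreviate as $E(r)\defeq\sup_{\alpha\geq 0}\alpha\cdot(r-I_{\alpha+1}(W_{\rv{Y}|\rv{X}}))$. Both bounds come from pairing Lemma~\ref{lem:SR-NS-EE} with the non-signaling exponent formula already established in Theorem~\ref{thm:NS:EE}; a short left-continuity argument for $r\mapsto E(r)$ then absorbs the $\delta$-slack inherent to the rounding-based lower bound.

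For the easy direction, Eq.~\eqref{eq:rounding:1} of Lemma~\ref{lem:SR-NS-EE} together with Theorem~\ref{thm:NS:EE} immediately yields
\begin{equation*}
\limsup_{n\to\infty} -\frac{1}{n}\log\epsilon^\fnc{SR}(\floor{e^{nr}},W_{\rv{Y}|\rv{X}}^{\tensor n}) \leq E(r).
\end{equation*}
For the matching lower bound, I apply Eq.~\eqref{eq:rounding:2} with the target rate $s\defeq r-\delta$ in place of ``$r$'' (so the ``$r+\delta$'' in the lemma statement becomes exactly $r$): combined with Theorem~\ref{thm:NS:EE}, this gives, for every $\delta\in(0,r)$,
\begin{equation*}
\liminf_{n\to\infty} -\frac{1}{n}\log\epsilon^\fnc{SR}(\floor{e^{nr}},W_{\rv{Y}|\rv{X}}^{\tensor n}) \geq E(r-\delta).
\end{equation*}
Sending $\delta\to 0^+$ reduces everything to showing $\lim_{\delta\to 0^+}E(r-\delta)=E(r)$, i.e., left-continuity of $E$ at every $r>0$.

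For the continuity, observe that for each fixed $\alpha\geq 0$ the map $r\mapsto \alpha\cdot(r-I_{\alpha+1}(W_{\rv{Y}|\rv{X}}))$ is affine and non-decreasing in $r$ (since $\alpha\geq 0$). Hence $E$, a pointwise supremum of such maps, is a convex, lower semicontinuous, non-decreasing function $\reals\to[0,\infty]$. The non-decreasing property yields $\limsup_{\delta\to 0^+}E(r-\delta)\leq E(r)$, while lower semicontinuity yields $\liminf_{\delta\to 0^+}E(r-\delta)\geq E(r)$, so the limit exists and equals $E(r)$, even when $E(r)=+\infty$. Given that all the substantive work (the meta-converse analysis of Sections~\ref{sec:EE-Ach}--\ref{sec:EE-converse} and the rounding theorem of~\cite{berta2024optimality}) is already packaged in Theorem~\ref{thm:NS:EE} and Lemma~\ref{lem:SR-NS-EE}, I do not anticipate a serious obstacle; the only subtlety is ensuring the $\delta\to 0^+$ limit genuinely recovers $E(r)$ at rates where $E(r)$ may be infinite or non-smooth, which is precisely what the convexity-plus-monotonicity argument above handles.
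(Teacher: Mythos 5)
Your proposal is correct and follows essentially the same route as the paper: both directions come from combining Lemma~\ref{lem:SR-NS-EE} with Theorem~\ref{thm:NS:EE}, and the only remaining task is to absorb the $\delta$-slack from~\eqref{eq:rounding:2}. The one cosmetic difference is how that slack is eliminated — the paper swaps $\sup_{\delta\in(0,r)}$ and $\sup_{\alpha\geq 0}$ and evaluates the inner supremum in $\delta$ directly, whereas you invoke convexity, monotonicity, and lower semicontinuity of $E(r)=\sup_{\alpha\geq 0}\alpha(r-I_{\alpha+1}(W_{\rv{Y}|\rv{X}}))$ to deduce $\lim_{\delta\to 0^+}E(r-\delta)=E(r)$; both are valid and establish the same identity $\sup_{\delta\in(0,r)}E(r-\delta)=E(r)$, including in the case $E(r)=+\infty$.
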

\begin{proof}
By~\eqref{eq:rounding:2} and Theorem~\ref{thm:NS:EE}, we have
\begin{align}
&\hspace{14pt}\liminf_{n\to\infty} -\frac{1}{n} \log{\epsilon^\fnc{SR}(\floor{e^{nr}}, W_{\rv{Y}|\rv{X}}^{\tensor n})} \nonumber \\
&\ge \sup_{\delta\in(0,r)} \lim_{n\to\infty} -\frac{1}{n} \log{\epsilon^\fnc{NS}(\floor{e^{n\cdot(r-\delta)}}, W_{\rv{Y}|\rv{X}}^{\tensor n})}
\\
&= \sup_{\delta\in(0,r)} \sup_{\alpha\geq 0} \; \alpha\cdot\left( (r-\delta) - \adjustlimits\sup_{p_\rv{X}\in\set{P}(\set{X})} \inf_{q_\rv{Y}\in\set{P}(\set{Y})} D_{\alpha+1}\infdiv*{p_\rv{X}\cdot W_{\rv{Y}|\rv{X}}}{p_\rv{X}\times q_\rv{Y}} \right)\\
&=\sup_{\alpha\geq 0}\; \alpha\cdot \sup_{\delta\in(0,r)} \left( (r-\delta) - \adjustlimits\sup_{p_\rv{X}\in\set{P}(\set{X})} \inf_{q_\rv{Y}\in\set{P}(\set{Y})} D_{\alpha+1}\infdiv*{p_\rv{X}\cdot W_{\rv{Y}|\rv{X}}}{p_\rv{X}\times q_\rv{Y}} \right)\\
&= \sup_{\alpha\geq 0}\; \alpha\cdot \left( r - \adjustlimits\sup_{p_\rv{X}\in\set{P}(\set{X})} \inf_{q_\rv{Y}\in\set{P}(\set{Y})} D_{\alpha+1}\infdiv*{p_\rv{X}\cdot W_{\rv{Y}|\rv{X}}}{p_\rv{X}\times q_\rv{Y}} \right).
\end{align}
On the other hand, by~\eqref{eq:rounding:1} and Theorem~\ref{thm:NS:EE} we have 
\begin{align}
\limsup_{n\to\infty} -\frac{1}{n} \log{\epsilon^\fnc{SR}(\floor{e^{nr}}, W_{\rv{Y}|\rv{X}}^{\tensor n})} 
&\leq \lim_{n\to\infty} -\frac{1}{n} \log{\epsilon^\fnc{NS}(\floor{e^{n r}}, W_{\rv{Y}|\rv{X}}^{\tensor n})}
\\&= \sup_{\alpha\geq 0} \; \alpha\cdot\left( r - \adjustlimits\sup_{p_\rv{X}\in\set{P}(\set{X})} \inf_{q_\rv{Y}\in\set{P}(\set{Y})} D_{\alpha+1}\infdiv*{p_\rv{X}\cdot W_{\rv{Y}|\rv{X}}}{p_\rv{X}\times q_\rv{Y}} \right),
\end{align}
which finishes the proof.
\end{proof}

\subsection{Strong Converse Exponent for Shared Randomness-Assisted Channel Simulation}
By taking $M=M'$ in Theorem~\ref{prop:rounding}, we have the following relationships between the deviations for channel simulation under the non-signaling and the shared randomness-assisted scenarios.
\begin{lemma}\label{lem:rounding:success}
Let $W_{\rv{Y}|\rv{X}}\in\set{P}(\set{Y}|\set{X})$ be a channel.
For any $M\geq 1$, it holds that
\begin{equation}\label{eq:rounding:succes}
    1-\epsilon^\fnc{NS}(M,W_{\rv{Y}|\rv{X}}) \geq
    1-\epsilon^\fnc{SR}(M,W_{\rv{Y}|\rv{X}}) \geq
    \left(1-\frac{1}{e}\right)\cdot \left(1-\epsilon^\fnc{NS}(M,W_{\rv{Y}|\rv{X}})\right).
\end{equation}
\end{lemma}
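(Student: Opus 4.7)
The plan is to specialize Theorem~\ref{prop:rounding} to the case $M' = M$ and then rearrange. Substituting $M' = M$ into the two-sided bound gives
\begin{equation*}
    \epsilon^\fnc{NS}(M,W_{\rv{Y}|\rv{X}})
    \leq \epsilon^\fnc{SR}(M,W_{\rv{Y}|\rv{X}})
    \leq \left(1 - c_M\right) \cdot \epsilon^\fnc{NS}(M,W_{\rv{Y}|\rv{X}}) + c_M,
\end{equation*}
where I write $c_M \defeq \left(1-\tfrac{1}{M}\right)^{M}$ for convenience.

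The first inequality of \eqref{eq:rounding:succes} is then immediate: from $\epsilon^\fnc{NS} \leq \epsilon^\fnc{SR}$ one gets $1-\epsilon^\fnc{NS} \geq 1-\epsilon^\fnc{SR}$. For the second inequality, I would rewrite the upper bound on $\epsilon^\fnc{SR}$ as
\begin{equation*}
    \epsilon^\fnc{SR}(M,W_{\rv{Y}|\rv{X}}) \leq \epsilon^\fnc{NS}(M,W_{\rv{Y}|\rv{X}}) + c_M\cdot\left(1 - \epsilon^\fnc{NS}(M,W_{\rv{Y}|\rv{X}})\right),
\end{equation*}
which after subtracting from $1$ yields
\begin{equation*}
    1 - \epsilon^\fnc{SR}(M,W_{\rv{Y}|\rv{X}}) \geq (1 - c_M)\cdot\left(1-\epsilon^\fnc{NS}(M,W_{\rv{Y}|\rv{X}})\right).
\end{equation*}

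To conclude, I need $1 - c_M \geq 1 - \tfrac{1}{e}$, i.e., $c_M \leq \tfrac{1}{e}$ for every integer $M\geq 1$. This is the one non-trivial observation, though it is entirely standard: the sequence $\{(1-\tfrac{1}{M})^M\}_{M\geq 1}$ is monotonically increasing in $M$ and converges to $\tfrac{1}{e}$ from below (for $M=1$ the quantity is $0$, and taking the logarithm one sees that $M\cdot\log(1-\tfrac{1}{M})$ increases to $-1$). Hence $c_M \leq \tfrac{1}{e}$ uniformly in $M$, giving the claimed bound. No real obstacle is present: the content is entirely packaged in the cited rounding result of~\cite{berta2024optimality}, and the lemma is essentially just the observation that the non-signaling success probability and the shared-randomness success probability agree up to a universal multiplicative constant $1-\tfrac{1}{e}$.
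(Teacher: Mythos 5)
Your proof is correct and takes essentially the same route as the paper: specialize Theorem~\ref{prop:rounding} to $M'=M$, rearrange, and invoke the elementary bound $(1-\tfrac{1}{M})^{M}\leq \tfrac{1}{e}$ for all $M\geq 1$. The only cosmetic difference is that you spell out the algebraic rearrangement and the monotonicity of $(1-\tfrac{1}{M})^M$ explicitly, whereas the paper states the resulting bound directly.
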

\begin{proof}
Let $M=M'$ in Theorem~\ref{prop:rounding}, we have 
\begin{equation}
    1-\epsilon^\fnc{NS}(M,W_{\rv{Y}|\rv{X}}) \geq
    1-\epsilon^\fnc{SR}(M,W_{\rv{Y}|\rv{X}}) \geq
    \left(1-\left(1-\frac{1}{M}\right)^{M}\right)\cdot \left(1-\epsilon^\fnc{NS}(M,W_{\rv{Y}|\rv{X}})\right).
\end{equation}
This results in~\eqref{eq:rounding:succes} by noting that $(1-\frac{1}{M})^{M}\leq \frac{1}{e}$ for all $M\geq 1$.
\end{proof}
\begin{theorem}\label{thm:SR:SCE}
Let $W_{\rv{Y}|\rv{X}}\in\set{P}(\set{Y}|\set{X})$ be a channel.
For all $r> 0$, it holds that 
\begin{equation}
    \lim_{n\to \infty} -\frac{1}{n} \log\left(1-\epsilon^\fnc{SR}(\floor{e^{nr}},W_{\rv{Y}|\rv{X}}^{\tensor n})\right) =
    \sup_{\alpha\in[0,1]} \; (1-\alpha)\cdot\left( -r + \adjustlimits\sup_{p_\rv{X}\in\set{P}(\set{X})} \inf_{q_\rv{Y}\in\set{P}(\set{Y})} D_{\alpha}\infdiv*{p_\rv{X}\cdot W_{\rv{Y}|\rv{X}}}{p_\rv{X}\times q_\rv{Y}} \right).
\end{equation}
\end{theorem}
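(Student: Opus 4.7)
The plan is to derive Theorem~\ref{thm:SR:SCE} directly from the non-signaling strong converse exponent (Theorem~\ref{thm:NS:SCE}) by invoking the rounding sandwich in Lemma~\ref{lem:rounding:success}. The key observation is that, in contrast to the error exponent case where the rounding step forced an extra rate slack $\delta>0$ and one needed the two-sided argument in Lemma~\ref{lem:SR-NS-EE}, the success-probability version of the rounding inequality is purely multiplicative with a dimension-free constant $(1-1/e)$, which is invisible to exponential-scale analysis.

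Concretely, I would first instantiate Lemma~\ref{lem:rounding:success} with $W_{\rv{Y}|\rv{X}}^{\tensor n}$ and $M=\floor{e^{nr}}$, yielding
\begin{equation}
\left(1-\tfrac{1}{e}\right)\cdot\bigl(1-\epsilon^\fnc{NS}(\floor{e^{nr}},W_{\rv{Y}|\rv{X}}^{\tensor n})\bigr) \leq 1-\epsilon^\fnc{SR}(\floor{e^{nr}},W_{\rv{Y}|\rv{X}}^{\tensor n}) \leq 1-\epsilon^\fnc{NS}(\floor{e^{nr}},W_{\rv{Y}|\rv{X}}^{\tensor n}).
\end{equation}
Taking $-\tfrac{1}{n}\log(\cdot)$ reverses the inequalities and gives the sandwich
\begin{equation}
-\tfrac{1}{n}\log\bigl(1-\epsilon^\fnc{NS}(\floor{e^{nr}},W_{\rv{Y}|\rv{X}}^{\tensor n})\bigr) \leq -\tfrac{1}{n}\log\bigl(1-\epsilon^\fnc{SR}(\floor{e^{nr}},W_{\rv{Y}|\rv{X}}^{\tensor n})\bigr) \leq -\tfrac{1}{n}\log\bigl(1-\epsilon^\fnc{NS}(\floor{e^{nr}},W_{\rv{Y}|\rv{X}}^{\tensor n})\bigr) - \tfrac{1}{n}\log\left(1-\tfrac{1}{e}\right).
\end{equation}

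Finally, I would let $n\to\infty$. The additive term $-\tfrac{1}{n}\log(1-1/e)$ vanishes, so both the upper and lower bounds converge to the same limit, namely the non-signaling strong converse exponent, which by Theorem~\ref{thm:NS:SCE} equals $\sup_{\alpha\in[0,1]} (1-\alpha)\bigl(-r+\sup_{p_\rv{X}}\inf_{q_\rv{Y}} D_\alpha\infdiv*{p_\rv{X}\cdot W_{\rv{Y}|\rv{X}}}{p_\rv{X}\times q_\rv{Y}}\bigr)$. By the squeeze principle, the shared-randomness SCE exists and coincides with this quantity.

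There is essentially no obstacle to overcome beyond recognizing that the success-probability formulation of the rounding bound (Lemma~\ref{lem:rounding:success}) is the right tool here: unlike the error exponent proof, which needed an extra rate budget $\delta$ to absorb the additive $\exp(-e^{n\delta})$ term and therefore required a continuity-in-rate argument as in Lemma~\ref{lem:SR-NS-EE}, the multiplicative constant $1-1/e$ is $n$-independent and disappears in the normalized log. Hence the proof collapses to a one-line sandwich plus a citation of Theorem~\ref{thm:NS:SCE}.
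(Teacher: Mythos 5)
Your proof is correct and matches the paper's proof essentially verbatim: both instantiate Lemma~\ref{lem:rounding:success} with $M=\floor{e^{nr}}$, observe that the multiplicative constant $1-\tfrac{1}{e}$ contributes $O(1/n)$ after taking $-\tfrac{1}{n}\log(\cdot)$ and hence vanishes, and then invoke Theorem~\ref{thm:NS:SCE} via a squeeze argument. Your remark contrasting this with the error-exponent case (where the additive $\exp(-e^{n\delta})$ remainder forced the $\delta$-slack machinery of Lemma~\ref{lem:SR-NS-EE}) is accurate and captures exactly why this direction is simpler.
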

\begin{proof}
By Theorem \ref{thm:NS:SCE}, it suffices to show that 
\begin{equation}
\lim_{n\to \infty} -\frac{1}{n} \log\left(1-\epsilon^\fnc{SR}(\floor{e^{nr}},W_{\rv{Y}|\rv{X}}^{\tensor n})\right)
= \lim_{n\to \infty} -\frac{1}{n} \log\left(1-\epsilon^\fnc{NS}(\floor{e^{nr}},W_{\rv{Y}|\rv{X}}^{\tensor n})\right),
\end{equation}
which can be directly shown via Lemma~\ref{lem:rounding:success}, \ie, 
\begin{align}
\liminf_{n\to \infty} -\frac{1}{n} \log\left(1-\epsilon^\fnc{SR}(\floor{e^{nr}},W_{\rv{Y}|\rv{X}}^{\tensor n})\right)
& \geq \lim_{n\to \infty} -\frac{1}{n} \log\left(1-\epsilon^\fnc{NS}(\floor{e^{nr}},W_{\rv{Y}|\rv{X}}^{\tensor n})\right),\\
\limsup_{n\to \infty} -\frac{1}{n} \log\left(1-\epsilon^\fnc{SR}(\floor{e^{nr}},W_{\rv{Y}|\rv{X}}^{\tensor n})\right)
& \leq \lim_{n\to \infty} -\frac{1}{n} \log\left(\left(1-\frac{1}{e}\right)\cdot \left(1-\epsilon^\fnc{NS}(\floor{e^{nr}},W_{\rv{Y}|\rv{X}})\right)\right)\\
& = \lim_{n\to \infty} -\frac{1}{n} \log\left(1-\epsilon^\fnc{NS}(\floor{e^{nr}},W_{\rv{Y}|\rv{X}}^{\tensor n})\right).\qedhere
\end{align}
\end{proof}
\section{Conclusion}\label{sec:conclusion}
In this study, we derived the error exponent and the strong converse exponent for channel simulation in the non-signaling and the shared randomness-assisted settings.
We expressed the exponents in terms of the R\'enyi divergences.
It should be noted that our expressions of exponents hold for all rates $r>0$, \ie, there is no critical rate, despite the fact that the expressions only become nontrivial when $r$ is above or below certain thresholds.
Our finding is quite surprising as compared to the error exponent of channel coding, in which the critical rate is involved~\cite{gallager1968information}.
The method of types has proven to be a useful tool in analyzing the relevant terms in the non-signaling regimes.
Along with earlier findings on small- and moderate-deviation analysis, this research offers a comprehensive study of finite-blocklength analysis of the channel simulation task.
For future studies, it would be intriguing to investigate whether the methods and insights presented here can be generalized to address the unresolved issues in channel interconversion, especially, the corresponding error exponent and the strong converse exponent.

\section*{Acknowledgments}

AO, MC, and MB acknowledge funding by the European Research Council (ERC Grant Agreement No. 948139). MB acknowledges support from the Excellence Cluster - Matter and Light for Quantum Computing (ML4Q). 
HC is supported by NSTC 113-2119-M-001-009, No.~NSTC 113-2628-E-002-029, No.~NTU-113V1904-5, No.~NTU-CC-113L891605, and No.~NTU-113L900702.
We thank Marco Tomamichel for discussions.

\printbibliography
\appendix
\section{Classical Post-Selection Lemma with Integral Representation}
\label{app:proof:lem:perm-inv-de-finetti}
\begin{lemma*}[Lemma \ref{lem:perm-inv-de-finetti} restated]
Let $\set{A}$ be a finite set.
There exists a probability measure $\nu$ on the set of all pmfs on $\set{A}$, \ie, $\set{P}(\set{A})$, such that for all positive integers $n$  
\begin{equation}
    p_{\rvs{A}_1^n}\leq \binom{n+\size*{\set{A}}-1}{n} \cdot \int \diff{\nu}(q_\rv{A}) q_\rv{A}^{\tensor n}
\end{equation}
for every permutation invariant pmf $p_{\rvs{A}_1^n}\in\set{P}(\set{A}^n)$.
\end{lemma*}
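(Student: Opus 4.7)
The plan is to take $\nu$ to be the uniform (Lebesgue) probability measure on the simplex $\set{P}(\set{A})$, equivalently the Dirichlet distribution with all parameters equal to $1$, and verify the bound by direct computation combined with the trivial upper bound on the mass of a type class.

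The first step is to compute the integral $\int q_\rv{A}^{\tensor n}(\mathbf{a}_1^n) \diff\nu(q_\rv{A})$ explicitly for an arbitrary $\mathbf{a}_1^n\in\set{A}^n$. Writing $t \defeq p_\rv{A}^{(\mathbf{a}_1^n)}\in\set{P}_n(\set{A})$ for the type of $\mathbf{a}_1^n$, the value $q_\rv{A}^{\tensor n}(\mathbf{a}_1^n)=\prod_{a\in\set{A}} q_\rv{A}(a)^{n\cdot t(a)}$ is a monomial in $q_\rv{A}$, and the standard Dirichlet integral (with uniform prior) yields
\begin{equation}
    \int q_\rv{A}^{\tensor n}(\mathbf{a}_1^n) \diff\nu(q_\rv{A})
    = \frac{(\size*{\set{A}}-1)!\,\prod_{a\in\set{A}}(n\cdot t(a))!}{(n+\size*{\set{A}}-1)!}
    = \binom{n+\size*{\set{A}}-1}{n}^{-1}\!\!\cdot \frac{1}{\size*{\set{T}_n(t)}},
\end{equation}
where the last equality uses $\size*{\set{T}_n(t)} = n!/\prod_a(n\cdot t(a))!$ from Definition~\ref{def:type}. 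Notice that this value depends on $\mathbf{a}_1^n$ only through its type, as expected.

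The second step uses the hypothesis that $p_{\rvs{A}_1^n}$ is permutation invariant. This means $p_{\rvs{A}_1^n}$ is constant on each type class $\set{T}_n(t)$; since $p_{\rvs{A}_1^n}$ is a pmf, the total mass of $\set{T}_n(t)$ is at most $1$, so for any $\mathbf{a}_1^n\in\set{T}_n(t)$ we have $p_{\rvs{A}_1^n}(\mathbf{a}_1^n) \leq 1/\size*{\set{T}_n(t)}$. Combining this with the computation above establishes the desired pointwise inequality
\begin{equation}
    p_{\rvs{A}_1^n}(\mathbf{a}_1^n) \leq \frac{1}{\size*{\set{T}_n(t)}} = \binom{n+\size*{\set{A}}-1}{n}\cdot \int q_\rv{A}^{\tensor n}(\mathbf{a}_1^n) \diff\nu(q_\rv{A}),
\end{equation}
which, crucially, holds for every $n$ with the \emph{same} measure $\nu$.

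There is no real obstacle here beyond the Dirichlet computation; the entire argument hinges on the fact that the Dirichlet$(1,\ldots,1)$ predictive distribution is exactly uniform over sequences of each type, which is what gives the clean factor $\binom{n+\size*{\set{A}}-1}{n}^{-1}$ and simultaneously absorbs the $n$-dependence into the prefactor of the lemma. The only thing to double-check is the boundary behaviour of $q_\rv{A}^{\tensor n}$ for types $t$ with $t(a)=0$ for some $a$ (so that $q_\rv{A}(a)^{0}$ is interpreted as $1$ and the Dirichlet integral $\int q_\rv{A}^{\tensor n}\diff\nu$ still evaluates correctly via $\Gamma(1)=1$), which is routine.
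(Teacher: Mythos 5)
Your proof is correct. Let me check the key computation: with $d\defeq\size*{\set{A}}$ and $\nu$ the uniform (Dirichlet$(1,\ldots,1)$) measure on the simplex, whose density with respect to Lebesgue measure is $(d-1)!$, the Dirichlet integral gives
\begin{equation}
\int q_\rv{A}^{\tensor n}(\mathbf{a}_1^n)\,\diff\nu(q_\rv{A})
= (d-1)!\cdot\frac{\prod_a (n\,t(a))!}{(n+d-1)!}
= \frac{n!\,(d-1)!}{(n+d-1)!}\cdot\frac{\prod_a (n\,t(a))!}{n!}
= \binom{n+d-1}{n}^{-1}\cdot\frac{1}{\size*{\set{T}_n(t)}},
\end{equation}
and the permutation-invariance step ($p_{\rvs{A}_1^n}$ constant on $\set{T}_n(t)$, hence pointwise at most $1/\size*{\set{T}_n(t)}$) is exactly right. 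This is a genuinely different route from the paper's: the paper embeds $p_{\rvs{A}_1^n}$ as a diagonal density operator supported on the symmetric subspace, invokes the quantum post-selection (de Finetti) inequality of Christandl--K\"onig--Renner, and then applies a pinching channel to recover a classical statement, with $\nu$ arising as the pushforward of Haar measure on pure states. Your argument avoids any quantum machinery entirely and is a one-line Dirichlet calculation; it also makes the constant $\binom{n+d-1}{n}$ visibly tight (it is achieved at each type class). The two constructions in fact produce the same mixture: the paper's own Remark~\ref{app:rmk:lem:perm-inv-de-finetti} computes, via Weingarten calculus, that $\int \diff{\ket\phi}\, q_{\ket\phi}^{\tensor n}$ equals the universal distribution $\frac{1}{\size*{\set{P}_n(\set{A})}}\sum_p \uniform_p$, which is precisely the Dirichlet$(1,\ldots,1)$ predictive distribution you compute directly --- so your approach also subsumes that remark with far less work. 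What the paper's approach buys is the observation that the classical statement is a corollary of the quantum one, which may be of independent interest; what yours buys is self-containedness and elementariness.
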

\begin{proof}
We use some elements from quantum information theory to construct the desired measure $\nu$.

Let $\hilbert$ be a Hilbert space with a basis $\{\ket{a}\}_{a\in \set{A}}$ indexed by the set $\set{A}$, and let $\sys{A}_1,\ldots,\sys{A}_n$ be $n$ quantum systems, each with the same state space $\hilbert$.
For each type $p\in \set{P}_n(\set{A})$ (see Definition~\ref{def:type} for notations), define $\ket{\psi_p}$ to be the following pure state on the joint system $\syss{A}_1^n$
\begin{align}\label{eq:def-psi}
    \ket{\psi_p} \defeq \frac{1}{\sqrt{\size*{\set{T}_{n}(p)}}}\sum_{\mathbf{a}_1^n\in \set{T}_{n}(p)} \ket{a_1}\otimes \ket{a_2}\otimes \cdots \otimes \ket{a_n}.
\end{align}
Observe that $\ket{\psi_p}$ is invariant under any permutations of $\sys{A}_1, \sys{A}_2, \ldots, \sys{A}_n$, \ie, $\ket{\psi_p}$ lies inside the symmetric subspace $\operatorname{Sym}^n(\mathcal{H})$.
Since the pmf $p_{\rvs{A}_1^n}$ is permutation invariant, we can write 
\begin{equation}
    p_{\rvs{A}_1^n}(\mathbf{a}_1^n) = \sum_{p\in \set{P}_n(\set{A})} \alpha_p \cdot \underbrace{\frac{1}{|\set{T}_{n}(p)|}\bid\{\mathbf{a}_1^n\in \set{T}_{n}(p)\}}_{\defas \uniform_p(\mathbf{a}_1^n)}
    \qquad \forall \mathbf{a}_1^n\in \set{X}^n
\end{equation}
for some pmf $\{\alpha_p\}_{p\in \set{P}_n(\set{A})}$ on $\set{P}_n(\set{A})$.
Correspondingly, we define $\rho_{\syss{A}_1^n}$ to be the following density operator on $\syss{A}_1^n$
\begin{align}\label{eq:rho-alpha}
    \rho_{\syss{A}_1^n} =\sum_{p\in \set{P}_n(\set{A})} \alpha_p \cdot \proj{\psi_p}.
\end{align}
By construction, $\rho_{\syss{A}_1^n}$ is supported on $\operatorname{Sym}^n(\mathcal{H})$.
Thus, by~\cite{christandl2012reliable}, it holds that 
\begin{equation}\label{eq:perm-inv-de-finetti:quantum}
    \rho_{\syss{A}_1^n} \mle \binom{n+\size*{\set{A}}-1}{n} \cdot \int \diff{\ket{\phi}}\; \proj{\phi}^{\otimes n}
\end{equation}
where $\diff{\ket{\phi}}$ is the Haar measure on the Hilbert space $\mathcal{H}$.

Now, we apply the following quantum channel (CPTP map) 
\begin{equation}
\mathcal{M} : \rho \mapsto \sum_{\mathbf{a}_1^n \in \set{A}^n} \bra{\mathbf{a}_1^n}\rho\ket{\mathbf{a}_1^n}  \, \proj{\mathbf{a}_1^n}
\end{equation}
on both sides of the inequality~\eqref{eq:perm-inv-de-finetti:quantum}.
By direct calculation, we have from \eqref{eq:rho-alpha} and \eqref{eq:def-psi}
\begin{align}
    \mathcal{M}(\text{LHS of~\eqref{eq:perm-inv-de-finetti:quantum}})
    &=\sum_{p\in \set{P}_n(\set{A})} \alpha_p \cdot \sum_{\mathbf{a}_1^n \in \set{A}^n}  \abs*{\spr{\mathbf{a}_1^n}{\psi_p}}^2 \cdot \proj{\mathbf{a}_1^n}\\
    &= \sum_{p\in \set{P}_n(\set{A})} \alpha_p \cdot \sum_{\mathbf{a}_1^n \in \set{T}_n(p)}  \frac{1}{\size*{\set{T}_n(p)}} \cdot \proj{\mathbf{a}_1^n}\\
    &= \sum_{p\in \set{P}_n(\set{A})} \alpha_p \cdot \uniform_p = p_{\rvs{A}_1^n},\\
    \mathcal{M}(\text{RHS of~\eqref{eq:perm-inv-de-finetti:quantum}})
    &= \binom{n+\size*{\set{A}}-1}{n} \cdot \int \diff{\ket{\phi}}\; \sum_{\mathbf{a}_1^n \in \set{A}^n}  \abs*{\left\langle\mathbf{a}_1^n\middle\vert\phi\right\rangle^{\otimes n}}^2 \cdot \proj{\mathbf{a}_1^n}\\
    &= \binom{n+\size*{\set{A}}-1}{n} \cdot \int \diff{\ket{\phi}}\; \sum_{\mathbf{a}_1^n \in \set{A}^n} \prod_{i=1}^n \abs*{\spr{a_i}{\phi}}^2 \cdot \proj{\mathbf{a}_1^n}\\
    &= \binom{n+\size*{\set{A}}-1}{n} \cdot \int \diff{\ket{\phi}}\; \bigg(\underbrace{\sum_{a \in \set{A}} \abs*{\spr{a}{\phi}}^2 \cdot \proj{a}}_{\defas q_{\ket{\phi}}}\bigg)^{\tensor n},
\end{align}
where we have treated pmfs as diagonal density operators, \ie, $p_{\rvs{A}_1^n} \equiv \sum_{\mathbf{a}_1^n\in\set{A}^n} p_{\rvs{A}_1^n}(\mathbf{a}_1^n)\cdot \proj{\mathbf{a}_1^n}$.
By linearity and complete-positiveness of $\mathcal{M}$, we have
\begin{equation}\label{eq:mixture-iid}
    p_{\rvs{A}_1^n} \leq \binom{n+\size*{\set{A}}-1}{n} \cdot \int \diff{\ket{\phi}}\; q_{\ket{\phi}}^{\tensor n}
\end{equation}
where $q_{\ket{\phi}}: a \mapsto \abs*{\spr{a}{\phi}}^2$ is a pmf on $\set{A}$ for all pure state $\ket{\phi}$.
\end{proof}

\begin{remark}\label{app:rmk:lem:perm-inv-de-finetti}
In the following remark, we show the pmf on the RHS of~\eqref{eq:mixture-iid} equals the universal probability distribution as in~\cite{tomamichel2017operational} exactly, \ie, 
\begin{equation}
    \left(\int \diff{\ket{\phi}}\; q_{\ket{\phi}}^{\tensor n}\right)(\mathbf{a}_1^n) = 
    \int \diff{\ket{\phi}}\; \prod_{i=1}^n q_{\ket{\phi}}(a_i) = 
    \frac{1}{\size*{\set{P}_n(\set{A})}} \cdot \sum_{p\in\set{P}_n(\set{A})} \frac{1}{\size*{\set{T}_n(p)}} \mathbbm{1}\left\{\mathbf{a}_1^n \in \set{T}_n(p)\right\}
    \quad \forall \mathbf{a}_1^n\in\set{A}^n.
\end{equation}
Let $\diff{U}$ be the Haar measure over the set of unitary operators over $\hilbert$ \wrt the Hilbert-Schmidt inner product.
This allows us to replace $\diff{\nu}$ by $\diff{U}$ as follows
\begin{equation}
\int \diff{\ket{\phi}}\; \prod_{i=1}^n q_{\ket{\phi}}(a_i)
= \int \diff{U}\; \prod_{i=1}^n q_{U\ket{0}}(a_i)
= \int \diff{U}\; \prod_{i=1}^n \bra{0} U^\dagger \proj{a_i} U \ket{0}
\end{equation}
where $\ket{0}\in \{\ket{a}\}_{a\in \set{A}}$ is a fixed pure state. 
Using the Weingarten formula  for the unitary group $\mathbb{U}(\size*{\set{A}})$ (e.g., \cite{collins2022weingarten}), we have
\begin{equation}
 \int \diff{U}\; \prod_{i=1}^n \bra{0} U^\dagger \proj{a_i} U \ket{0}= \sum_{\sigma, \tau \in S_n} \mathbbm{1}\left\{\mathbf{a}_1^n = \sigma(\mathbf{a}_1^n)\right\} \cdot \fnc{Wg}(\sigma^{-1}\tau, \size*{\set{A}})
\end{equation}
where $S_n$ is the symmetric group of $n$ elements (permutations), for each permutation $\sigma\in S_n$, we denote $\sigma(\mathbf{a}_1^n) \defeq (a_{\sigma(1)}, \ldots, a_{\sigma(n)})$, and $\fnc{Wg}$ is the Weingarten function. 
Note that (see, \eg,~\cite[Eq.~(2.3)]{collins2014integration})
\begin{equation}
\sum_{\tau\in S_n} \fnc{Wg}(\sigma^{-1}\tau, \size*{\set{A}})
=\sum_{\tau\in S_n} \fnc{Wg}(\tau, \size*{\set{A}})
= \frac{1}{\size*{\set{A}}\cdot (\size*{\set{A}}+1)  \cdots  (\size*{\set{A}}+n-1)}.
\end{equation}
Observe that the permutations of the sequence $\mathbf{a}_1^n$ shall visit all sequences of the same type for equal number of times.
Thus, 
\begin{equation}
\sum_{\sigma\in S_n} \mathbbm{1}\left\{\mathbf{a}_1^n = \sigma(\mathbf{a}_1^n)\right\}
= \frac{\size*{S_n}}{\size*{\set{T}_n(p^{(\mathbf{a}_1^n)}_\rv{A})}}
= \frac{n!}{\size*{\set{T}_n(p^{(\mathbf{a}_1^n)}_\rv{A})}}
\end{equation}
where $p^{(\mathbf{a}_1^n)}_\rv{A}$ denotes the type (or the induced empirical distribution) of the sequence $\mathbf{a}_1^n$.
Combining the above, we have 
\begin{align}
\int \diff{\ket{\phi}}\; \prod_{i=1}^n q_{\ket{\phi}}(a_i)
&= \frac{n!}{\size*{\set{T}_n(p^{(\mathbf{a}_1^n)}_\rv{A})}} \cdot \frac{1}{\size*{\set{A}}\cdot (\size*{\set{A}}+1)  \cdots  (\size*{\set{A}}+n-1)}\\
&= \frac{1}{\size*{\set{T}_n(p^{(\mathbf{a}_1^n)}_\rv{A})}}\cdot \frac{1}{\binom{n+\size*{\set{A}}-1}{n}}.
\end{align}
Further note that 
\begin{equation}
\size*{\set{P}_n(\set{A})} = \binom{n+\size*{\set{A}}-1}{n}.
\end{equation}
Therefore, for all $\mathbf{a}_1^n\in\set{A}^n$
\begin{equation}
\left(\int \diff{\ket{\phi}}\; q_{\ket{\phi}}^{\tensor n}\right)(\mathbf{a}_1^n)
= \frac{1}{\size*{\set{P}_n(\set{A})}}\cdot \frac{1}{\size*{\set{T}_n(p^{(\mathbf{a}_1^n)}_\rv{A})}}
= \frac{1}{\size*{\set{P}_n(\set{A})}}\cdot \sum_{p\in\set{P}_n(\set{A})} \frac{1}{\size*{\set{T}_n(p)}} \cdot \mathbbm{1}\left\{\mathbf{a}_1^n \in \set{T}_n(p)\right\}.
\end{equation}

\end{remark}

\section{Proofs of Lemmas~\ref{lem:conditional:type:approx},~\ref{lem:D:continuity}, and~\ref{lem:I:continuity}}\label{app:continuity:proofs}
\begin{lemma*}[Lemma \ref{lem:conditional:type:approx} restated]
Let $\set{X}$ and $\set{Y}$ be two finite sets, and let $n$ be a positive integer.
Let $V_{\rv{Y}|\rv{X}}\in\set{P}(\set{Y}|\set{X})$ be a conditional pmf.
For each $n$-denominator type on $\set{X}$ $p_\rv{X}\in\set{P}_n(\set{X})$, there exists an approximation $V_{\rv{Y}|\rv{X}}^{(p_\rv{X})}\in\set{P}_n(\set{Y}|\set{X})$ of $V_{\rv{Y}|\rv{X}}$ such that
\begin{equation}\label{eq:conditional:type:approx:app}
    \abs*{p_\rv{X}(x)\cdot V_{\rv{Y}|\rv{X}}^{(p_\rv{X})}(y|x) - p_\rv{X}(x)\cdot V_{\rv{Y}|\rv{X}}(y|x)} \leq \frac{1}{n}
\end{equation}
for all $(x,y)\in\set{X}\times\set{Y}$.
\end{lemma*}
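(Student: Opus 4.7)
The plan is to round the joint distribution $p_\rv{X}\cdot V_{\rv{Y}|\rv{X}}$ to an $n$-type \emph{row by row}, preserving the given marginal $p_\rv{X}$ exactly. Since $p_\rv{X}\in\set{P}_n(\set{X})$, the integer $n_x\defeq n\cdot p_\rv{X}(x)$ is non-negative for every $x\in\set{X}$, and for each $x$ with $n_x>0$ the rescaled values $a_{y|x}\defeq n_x\cdot V_{\rv{Y}|\rv{X}}(y|x)$ are non-negative reals summing to $n_x$ as $y$ ranges over $\set{Y}$. Thus the problem reduces, row by row, to rounding a real vector of sum $n_x$ to a non-negative integer vector of the same sum, with each coordinate moved by less than one.

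The heart of the argument is a standard largest-remainder apportionment. For each $x$ with $n_x>0$, set $b_{y|x}\defeq \lfloor a_{y|x}\rfloor$ and $k_x\defeq n_x-\sum_y b_{y|x}$. Because $\sum_y a_{y|x}=n_x$ and each fractional part $a_{y|x}-b_{y|x}$ lies in $[0,1)$, the surplus $k_x$ is a non-negative integer at most $\size*{\set{Y}}-1$. Pick any subset $\set{S}_x\subseteq\set{Y}$ of cardinality $k_x$ (\eg\ the indices with the largest fractional parts), define $m_{y|x}\defeq b_{y|x}+\bid\{y\in\set{S}_x\}$, and set $V_{\rv{Y}|\rv{X}}^{(p_\rv{X})}(y|x)\defeq m_{y|x}/n_x$. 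Then $m_{y|x}\in\integers_{\geq 0}$, $\sum_y m_{y|x}=n_x$, and $\abs{m_{y|x}-a_{y|x}}\leq 1$ since $m_{y|x}\in\{b_{y|x},b_{y|x}+1\}$ while $a_{y|x}\in[b_{y|x},b_{y|x}+1)$. For $x$ with $n_x=0$, define $V_{\rv{Y}|\rv{X}}^{(p_\rv{X})}(\cdot|x)$ to be any pmf on $\set{Y}$ (say, $V_{\rv{Y}|\rv{X}}(\cdot|x)$ itself); the choice is immaterial because the $p_\rv{X}(x)$ prefactor will kill it.

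Two routine verifications then complete the proof. For membership in $\set{P}_n(\set{Y}|\set{X})$, note that $p_\rv{X}(x)\cdot V_{\rv{Y}|\rv{X}}^{(p_\rv{X})}(y|x)$ equals $m_{y|x}/n$ when $n_x>0$ and $0$ when $n_x=0$; in either case it is a non-negative integer multiple of $1/n$, so the joint pmf lies in $\set{P}_n(\set{Y}\times\set{X})$ with $p_\rv{X}$ itself serving as the witnessing marginal. For the approximation bound, for $n_x>0$ we have
\[
\abs*{p_\rv{X}(x)\cdot V_{\rv{Y}|\rv{X}}^{(p_\rv{X})}(y|x) - p_\rv{X}(x)\cdot V_{\rv{Y}|\rv{X}}(y|x)} = \frac{\abs{m_{y|x}-a_{y|x}}}{n} \leq \frac{1}{n},
\]
while for $n_x=0$ both sides vanish. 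I do not anticipate any real obstacle here—the only nontrivial point is ensuring that the rounding preserves the exact marginal $p_\rv{X}$ and the column-sum $n_x$ simultaneously, which is precisely what the largest-remainder scheme guarantees.
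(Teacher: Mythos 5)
Your proof is correct and takes essentially the same approach as the paper's: round the joint $p_\rv{X}\cdot V_{\rv{Y}|\rv{X}}$ to an $n$-denominator type whose $\set{X}$-marginal is exactly $p_\rv{X}$, with every entry moved by at most $1/n$. The paper phrases this as picking a $\mathbf{y}_1^n$ to pair with a fixed $\mathbf{x}_1^n\in\set{T}_n(p_\rv{X})$ and asserts existence; your row-by-row largest-remainder apportionment makes that existence explicit and, in fact, supplies the missing justification for why the rounding can be done while holding each row sum $n\cdot p_\rv{X}(x)$ fixed.
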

\begin{proof}[Proof of Lemma~\ref{lem:conditional:type:approx}]
Pick a sequence $\mathbf{x}_1^n$ from $\set{T}_n(p_\rv{X})$.
We argue that there exists some output sequence $\mathbf{y}_1^n\in\set{Y}_1^n$ such that
\begin{equation}
    \floor*{n\cdot p_{\rv{X}}\cdot V_{\rv{Y}|\rv{X}}} \leq
    n\cdot p_{\rv{XY}}^{(\mathbf{x}_1^n,\mathbf{y}_1^n)} \leq
    \ceil*{n\cdot p_{\rv{X}}\cdot V_{\rv{Y}|\rv{X}}}
\end{equation}
since $\sum_{x,y} n\cdot p_\rv{X}(x) \cdot V_{\rv{Y}|\rv{X}}(y|x) = n$.
Eq.~\eqref{eq:conditional:type:approx:app} is satisfied by defining $V_{\rv{Y}|\rv{X}}^{(p_\rv{X})}$ to be the conditional type of $(\mathbf{x}_1^n, \mathbf{y}_1^n)$, \ie, $V_{\rv{Y}|\rv{X}}^{(p_\rv{X})} \defeq p_{\rv{Y}|\rv{X}}^{(\mathbf{x}_1^n,\mathbf{y}_1^n)}$.
\end{proof}
\begin{lemma}[Lemma \ref{lem:D:continuity} restated]
Let $\set{A}$ be a finite set, and $\xi\in(0,\frac{1}{e})$
For any two pmfs $p,p'\in\set{P}(\set{A})$ such that $\abs*{p(a)-p'(a)}\leq\xi$ { for all $a\in\set{A}$}, it holds for all $q\in\set{P}(\set{A})$, with $p\ll q$ and $p'\ll q$, that
\begin{equation}
\abs*{D\infdiv*{p}{q} - D\infdiv*{p'}{q}} \leq \xi\cdot \size*{\set{A}} \cdot \log{\frac{1}{q_{\min}}} + \size*{\set{A}}\cdot\xi\cdot\log{\frac{1}{\xi}}
\end{equation}
where $q_{\min}\defeq \min_{a\in\set{A}:\, q(a)>0} q(a)$.
\end{lemma}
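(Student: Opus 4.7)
The plan is to split the divergence difference into a part that sees only $q$ and a part that is the difference of negative entropies, then bound each piece separately. Concretely, I write
\begin{align}
D\infdiv*{p}{q} - D\infdiv*{p'}{q}
&= \sum_{a\in\set{A}} \bigl(p(a) - p'(a)\bigr) \cdot \log\tfrac{1}{q(a)} \;+\; \sum_{a\in\set{A}} \bigl(p(a)\log p(a) - p'(a)\log p'(a)\bigr),
\end{align}
with the convention $0\log 0 = 0$. Terms with $q(a) = 0$ do not appear since $p\ll q$ and $p'\ll q$ force $p(a) = p'(a) = 0$ on the support complement.

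For the first sum, the hypothesis $|p(a)-p'(a)|\leq \xi$ together with $q(a)\geq q_{\min}$ on the support of $q$ gives the termwise bound $\xi\cdot\log(1/q_{\min})$, so the whole sum is at most $\xi\cdot\size*{\set{A}}\cdot\log(1/q_{\min})$ in absolute value. This is the first term in the claimed upper bound.

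For the second sum, I invoke the standard modulus-of-continuity estimate for the function $\phi(x) \defeq -x\log x$ on $[0,1]$: whenever $x,y\in[0,1]$ satisfy $|x-y|\leq \xi$ with $\xi\in(0,1/e)$, one has $|\phi(x)-\phi(y)|\leq \xi\log(1/\xi)$. This is where the assumption $\xi<1/e$ is used: $\phi$ is concave with derivative blowing up at $0$, but on $[0,\xi]$ one has $\phi(x)\leq \xi\log(1/\xi)$, while on $[\xi,1]$ the derivative is bounded in magnitude by $\log(1/\xi)$, so the Lipschitz-type bound follows from the mean value theorem. Applying this termwise to $x=p(a)$, $y=p'(a)$ and summing over $a\in\set{A}$ yields the second contribution $\size*{\set{A}}\cdot \xi\log(1/\xi)$.

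Combining the two bounds via the triangle inequality gives exactly the claimed inequality. The only delicate step is the modulus of continuity of $\phi$, but this is a well-known one-variable calculus fact rather than a genuine obstacle; the rest is bookkeeping and careful handling of zeros via absolute continuity.
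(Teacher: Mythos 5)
Your decomposition of $D(p\|q)-D(p'\|q)$ into the $\log(1/q)$ part and the $p\log p - p'\log p'$ part, followed by a termwise bound using $q(a)\geq q_{\min}$ for the first sum and the modulus of continuity of $t\mapsto t\log t$ for the second, is exactly the paper's proof. Your side remark on why $|\phi(x)-\phi(y)|\leq\xi\log(1/\xi)$ is a bit looser than a fully rigorous argument (the case $x<\xi<y$ is not cleanly covered by splitting into the two intervals; concavity of $\phi$ gives the clean justification), but the paper asserts this sup bound without proof either, so you are at the same level of detail.
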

\begin{proof}[Proof of Lemma~\ref{lem:D:continuity}]
This is a result using the triangular inequality on absolute values and a property of the function $t\mapsto t\cdot\log{t}$.
\begin{align}
\abs*{D\infdiv*{p}{q} - D\infdiv*{p'}{q}}
&= \abs*{\sum_{a\in\set{A}} p(a)\log{\frac{p(a)}{q(a)}} - p'(a)\log{\frac{p'(a)}{q(a)}}}\\
&\leq \sum_{a\in\set{A}} \abs*{p(a)-p'(a)}\cdot\log{\frac{1}{q(a)}} + \sum_{a\in\set{A}} \abs*{p(a)\log{p(a)} - p'(a)\log{p'(a)}}\\
&\leq \xi\cdot \size*{\set{A}}\cdot \max_{a\in\set{A}: q(a)>0}\log{\frac{1}{q(a)}} + \size*{\set{A}}\cdot \sup_{0\leq t_0\leq t_1\leq 1: t_1-t_0 \leq \xi} \abs*{t_1\log{t_1} - t_0\log{t_0}}\\
&\leq \xi\cdot \size*{\set{A}}\cdot \log{\frac{1}{q_{\min}}} + \size*{\set{A}}\cdot \xi\cdot\log{\frac{1}{\xi}}. \qedhere
\end{align}
\end{proof}
\begin{lemma*}[Lemma \ref{lem:I:continuity} restated]
Let $\set{X}$ and $\set{Y}$ be two finite sets.
Let $p_\rv{X}\in\set{P}(\set{X})$ be a pmf on $\set{X}$, and let $V_{\rv{Y}|\rv{X}}\in\set{P}(\set{Y}|\set{X})$ be a channel from $\set{X}$ to $\set{Y}$.
\begin{enumerate}
\item For any $\tilde{V}_{\rv{Y}|\rv{X}}\in\set{P}(\set{Y}|\set{X})$ such that $\abs{p_\rv{X}(x)\cdot\tilde{V}_{\rv{Y}|\rv{X}}(y|x)-p_\rv{X}(x)\cdot V_{\rv{Y}|\rv{X}}(y|x)}\leq\xi\ \forall (x,y)\in\set{X}\times\set{Y}$ where $0 < \xi\leq \frac{1}{\size*{\set{X}}\cdot e}$, it hold that
    \begin{equation}\label{eq:I:continuity:1:app}
    \abs*{D\infdiv*{p_\rv{X}\cdot V_{\rv{Y}|\rv{X}}}{p_\rv{X}\times p_\rv{Y}} - D\infdiv*{p_\rv{X}\cdot \tilde{V}_{\rv{Y}|\rv{X}}}{p_\rv{X}\times \tilde{p}_\rv{Y}}} \leq \xi\cdot \size*{\set{X}}\cdot\size*{\set{Y}}\cdot \left(\log{\frac{1}{\xi}} + \log{\frac{1}{\xi\cdot \size*{\set{X}}}} \right),
    \end{equation}
    where $p_\rv{Y}$ and $\tilde{p}_\rv{Y}$ are the induced output distributions of the channels $V_{\rv{Y}|\rv{X}}$ and $\tilde{V}_{\rv{Y}|\rv{X}}$ (with the same input distribution $p_\rv{X}$), respectively.
    \item For any $\tilde{p}_\rv{X}\in\set{P}(\set{X})$ such that $\abs{\tilde{p}_\rv{X}(x)-p_\rv{X}(x)}\leq\xi\ \forall x\in\set{X}$ where $0 < \xi\leq \frac{1}{\size*{\set{X}}\cdot e}$, it hold that
    \begin{equation}\label{eq:I:continuity:2:app}
    \abs*{D\infdiv*{p_\rv{X}\cdot V_{\rv{Y}|\rv{X}}}{p_\rv{X}\times p_\rv{Y}} - D\infdiv*{\tilde{p}_\rv{X}\cdot V_{\rv{Y}|\rv{X}}}{\tilde{p}_\rv{X}\times \tilde{p}_\rv{Y}}} \leq \xi\cdot\size*{\set{X}}\cdot\log{\size*{\set{Y}}} + \xi\cdot \size*{\set{X}}\cdot\size*{\set{Y}}\cdot \log{\frac{1}{\xi\cdot \size*{\set{X}}}}
    \end{equation}
    where $p_\rv{Y}$ and $\tilde{p}_\rv{Y}$ are the induced output distributions corresponding to the input distributions $p_\rv{X}$ and $\tilde{p}_\rv{X}$ (with the same channel $V_{\rv{Y}|\rv{X}}$), respectively.
\end{enumerate}
\end{lemma*}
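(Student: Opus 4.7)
The plan is to reduce both parts to continuity of the Shannon entropy by exploiting the two standard decompositions of the mutual information, namely
\begin{equation}
D\infdiv*{p_\rv{X}\cdot V_{\rv{Y}|\rv{X}}}{p_\rv{X}\times p_\rv{Y}}
= H(p_\rv{X}) + H(p_\rv{Y}) - H(p_\rv{X}\cdot V_{\rv{Y}|\rv{X}})
= H(p_\rv{Y}) - \sum_{x\in\set{X}} p_\rv{X}(x)\cdot H(V_{\rv{Y}|\rv{X}}(\cdot|x)).
\end{equation}
The key analytic input is the pointwise estimate $\abs{t_1\log t_1 - t_0\log t_0}\leq \abs{t_1-t_0}\cdot\log\tfrac{1}{\abs{t_1-t_0}}$ valid for $t_0,t_1\in[0,1]$ with $\abs{t_1-t_0}\leq 1/e$, which is precisely the engine underlying Lemma~\ref{lem:D:continuity}. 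The hypothesis $\xi\leq\tfrac{1}{\size{\set{X}}\cdot e}$ is exactly what is required so that after the $\size{\set{X}}$-fold inflation of perturbation occurring when one marginalizes over $\set{X}$, one still lies in the regime $\cdot\leq 1/e$ where this estimate applies.

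For Part 1, I would use the first decomposition above so that the $H(p_\rv{X})$ terms cancel, giving
\begin{equation}
\abs*{D\infdiv*{p_\rv{X}\cdot V_{\rv{Y}|\rv{X}}}{p_\rv{X}\times p_\rv{Y}} - D\infdiv*{p_\rv{X}\cdot \tilde{V}_{\rv{Y}|\rv{X}}}{p_\rv{X}\times \tilde{p}_\rv{Y}}}
\leq \abs*{H(p_\rv{Y}) - H(\tilde{p}_\rv{Y})} + \abs*{H(p_\rv{X}\!\cdot\! V_{\rv{Y}|\rv{X}}) - H(p_\rv{X}\!\cdot\! \tilde{V}_{\rv{Y}|\rv{X}})}.
\end{equation}
The joint entropy term is controlled directly by the hypothesis on $\size{\set{X}}\cdot\size{\set{Y}}$ entries each perturbed by at most $\xi$, giving a bound $\size{\set{X}}\cdot\size{\set{Y}}\cdot\xi\cdot\log(1/\xi)$. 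For the marginal term, summing the hypothesis over $x$ yields $\abs{p_\rv{Y}(y)-\tilde{p}_\rv{Y}(y)}\leq\size{\set{X}}\cdot\xi$ for every $y$, so the pointwise $t\log t$ estimate applied on $\set{Y}$ (with inflated perturbation $\size{\set{X}}\cdot\xi\leq 1/e$) contributes $\size{\set{Y}}\cdot\size{\set{X}}\cdot\xi\cdot\log\tfrac{1}{\size{\set{X}}\cdot\xi}$. Summing these two bounds reproduces the RHS of~\eqref{eq:I:continuity:1:app}.

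For Part 2, the second decomposition is the convenient one, because $H(Y|X)$ is \emph{linear} in $p_\rv{X}$, avoiding the need to separately bound $\abs{H(p_\rv{X})-H(\tilde{p}_\rv{X})}$:
\begin{equation}
D\infdiv*{p_\rv{X}\!\cdot\! V}{p_\rv{X}\!\times\! p_\rv{Y}} - D\infdiv*{\tilde{p}_\rv{X}\!\cdot\! V}{\tilde{p}_\rv{X}\!\times\! \tilde{p}_\rv{Y}}
= \left[H(p_\rv{Y}) - H(\tilde{p}_\rv{Y})\right] - \sum_{x\in\set{X}}\big(p_\rv{X}(x)-\tilde{p}_\rv{X}(x)\big)\cdot H\big(V_{\rv{Y}|\rv{X}}(\cdot|x)\big).
\end{equation}
The second term is trivially bounded by $\xi\cdot\size{\set{X}}\cdot\log\size{\set{Y}}$ using $H(V(\cdot|x))\leq\log\size{\set{Y}}$. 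For the first term, writing $p_\rv{Y}(y)-\tilde{p}_\rv{Y}(y)=\sum_x(p_\rv{X}(x)-\tilde{p}_\rv{X}(x))\cdot V_{\rv{Y}|\rv{X}}(y|x)$ and using $V\leq 1$ gives the pointwise bound $\abs{p_\rv{Y}(y)-\tilde{p}_\rv{Y}(y)}\leq\size{\set{X}}\cdot\xi$, whence the same entropy continuity step as in Part 1 contributes $\size{\set{X}}\cdot\size{\set{Y}}\cdot\xi\cdot\log\tfrac{1}{\size{\set{X}}\cdot\xi}$, matching the RHS of~\eqref{eq:I:continuity:2:app}.

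There is no real conceptual obstacle here; the only thing requiring mild care is the choice of decomposition (so that the $p_\rv{X}$-entropy cancels in Part 1 and never arises in Part 2), and bookkeeping the two different perturbation scales ($\xi$ on the joint and $\size{\set{X}}\cdot\xi$ on the $\set{Y}$-marginal) to ensure both lie in the regime where the $t\log t$ continuity bound is valid --- which is exactly guaranteed by the hypothesis $\xi\leq\tfrac{1}{\size{\set{X}}\cdot e}$.
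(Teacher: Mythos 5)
Your proposal is correct and follows essentially the same route as the paper's own proof: the paper's explicit manipulation of the log-ratio $\log\frac{V(y|x)}{p_\rv{Y}(y)}$ into $\log(p_\rv{X}(x)V(y|x)) - \log p_\rv{X}(x) - \log p_\rv{Y}(y)$ (Part 1) and $\log V(y|x) - \log p_\rv{Y}(y)$ (Part 2) is exactly the entropy decompositions $I = H(\rv{X}) + H(\rv{Y}) - H(\rv{X},\rv{Y})$ and $I = H(\rv{Y}) - H(\rv{Y}|\rv{X})$ that you invoke, after which the $H(p_\rv{X})$ term cancels in Part 1 and the $H(\rv{Y}|\rv{X})$ term is linear in $p_\rv{X}$ in Part 2, and both arguments then reduce to the same $t\mapsto t\log t$ estimate applied with perturbation $\xi$ on the joint (resp.\ per-input) and $\size{\set{X}}\cdot\xi$ on the $\set{Y}$-marginal. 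Your framing in terms of entropy decompositions makes the cancellation structure and the bookkeeping of the two perturbation scales slightly more transparent, but the underlying computation is identical.
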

\begin{proof}[Proof of Lemma~\ref{lem:I:continuity}]
The proof utilizes the triangular inequality on absolute values and the property of the function $t\mapsto t\cdot\log{t}$ in a similar way as in the proof of Lemma~\ref{lem:D:continuity}.
To prove~\eqref{eq:I:continuity:1:app}, we have
\begin{align}
    &\hspace{13.3pt}\abs*{D\infdiv*{p_\rv{X}\cdot V_{\rv{Y}|\rv{X}}}{p_\rv{X}\times p_\rv{Y}} - D\infdiv*{p_\rv{X}\cdot \tilde{V}_{\rv{Y}|\rv{X}}}{p_\rv{X}\times \tilde{p}_\rv{Y}}} \nonumber\\
    &= \abs*{
    \sum_{x,y} p_\rv{X}(x)\cdot V_{\rv{Y}|\rv{X}}(y|x) \cdot \log{\frac{V_{\rv{Y}|\rv{X}}(y|x)}{p_\rv{Y}(y)}}
    - \sum_{x,y} p_\rv{X}(x)\cdot \tilde{V}_{\rv{Y}|\rv{X}}(y|x) \cdot \log{\frac{\tilde{V}_{\rv{Y}|\rv{X}}(y|x)}{\tilde{p}_\rv{Y}(y)}}} \\
    &\ \begin{aligned}\leq \sum_{x,y}\abs*{p_\rv{X}(x) \cdot V_{\rv{Y}|\rv{X}}(y|x)\cdot \log{\left(p_\rv{X}(x) \cdot V_{\rv{Y}|\rv{X}}(y|x)\right)} - p_\rv{X}(x) \cdot \tilde{V}_{\rv{Y}|\rv{X}}(y|x)\cdot \log{\left(p_\rv{X}(x) \cdot \tilde{V}_{\rv{Y}|\rv{X}}(y|x)\right)}} \\
    + \sum_{y} \abs*{p_\rv{Y}(y)\log{p_\rv{Y}(y)} - \tilde{p}_\rv{Y}(y)\log{\tilde{p}_\rv{Y}(y)}}\end{aligned}\\
    &\leq \size*{\set{X}}\cdot\size*{\set{Y}}\cdot \xi \cdot \log{\frac{1}{\xi}} + \sum_{y} \abs*{p_\rv{Y}(y)\log{p_\rv{Y}(y)} - \tilde{p}_\rv{Y}(y)\log{\tilde{p}_\rv{Y}(y)}}.
    \label{eq:I:continuity:proof:1}
\end{align}
Note that 
\begin{align}
\abs*{p_\rv{Y}(y) - \tilde{p}_\rv{Y}(y)}
&= \abs*{\sum_{x} p_\rv{X}(x)\cdot V_{\rv{Y}|\rv{X}}(y|x) - p_\rv{X}(x)\cdot \tilde{V}_{\rv{Y}|\rv{X}}(y|x)} \\
&\leq \sum_{x} \abs*{p_\rv{X}(x)\cdot V_{\rv{Y}|\rv{X}}(y|x) - p_\rv{X}(x)\cdot \tilde{V}_{\rv{Y}|\rv{X}}(y|x)} \\
&\leq \size*{\set{X}} \cdot \xi.
\end{align}
Hence, 
\begin{equation}\label{eq:I:continuity:proof:2}
    \sum_{y} \abs*{p_\rv{Y}(y)\log{p_\rv{Y}(y)} - \tilde{p}_\rv{Y}(y)\log{\tilde{p}_\rv{Y}(y)}}
    \leq \size*{\set{Y}} \cdot \size*{\set{X}} \cdot \xi \cdot \log{\frac{1}{\size*{\set{X}}\cdot \xi}}.
\end{equation}
Combining~\eqref{eq:I:continuity:proof:2} with~\eqref{eq:I:continuity:proof:1}, we have 
\begin{equation}\begin{aligned}
&\hspace{14pt}\abs*{D\infdiv*{p_\rv{X}\cdot V_{\rv{Y}|\rv{X}}}{p_\rv{X}\times p_\rv{Y}} - D\infdiv*{p_\rv{X}\cdot \tilde{V}_{\rv{Y}|\rv{X}}}{p_\rv{X}\times \tilde{p}_\rv{Y}}} \\
&\leq \size*{\set{X}}\cdot\size*{\set{Y}}\cdot \xi \cdot \log{\frac{1}{\xi}} + \size*{\set{Y}} \cdot \size*{\set{X}} \cdot \xi \cdot \log{\frac{1}{\size*{\set{X}}\cdot \xi}} 
= \text{RHS of~\eqref{eq:I:continuity:1:app}}.
\end{aligned}\end{equation}

To prove~\eqref{eq:I:continuity:2:app}, we have 
\begin{align}
    &\hspace{13.3pt}\abs*{D\infdiv*{p_\rv{X}\cdot V_{\rv{Y}|\rv{X}}}{p_\rv{X}\times p_\rv{Y}} - D\infdiv*{\tilde{p}_\rv{X}\cdot V_{\rv{Y}|\rv{X}}}{\tilde{p}_\rv{X}\times \tilde{p}_\rv{Y}}}\nonumber\\
    &= \abs*{
    \sum_{x,y} p_\rv{X}(x)\cdot V_{\rv{Y}|\rv{X}}(y|x) \cdot \log{\frac{V_{\rv{Y}|\rv{X}}(y|x)}{p_\rv{Y}(y)}}
    - \sum_{x,y} \tilde{p}_\rv{X}(x)\cdot V_{\rv{Y}|\rv{X}}(y|x) \cdot \log{\frac{V_{\rv{Y}|\rv{X}}(y|x)}{\tilde{p}_\rv{Y}(y)}}} \\
    &\leq \sum_{x}\abs*{p_\rv{X}(x) - \tilde{p}_\rv{X}(x)} \cdot \sum_{y} V_{\rv{Y}|\rv{X}}(y|x)\cdot\log{\frac{1}{V_{\rv{Y}|\rv{X}}(y|x)}} + \sum_{y} \abs*{p_\rv{Y}(y)\log{p_\rv{Y}(y)} - \tilde{p}_\rv{Y}(y)\log{\tilde{p}_\rv{Y}(y)}}\\
    &\leq \size*{\set{X}}\cdot \xi \cdot \log{\size*{\set{Y}}} + \sum_{y} \abs*{p_\rv{Y}(y)\log{p_\rv{Y}(y)} - \tilde{p}_\rv{Y}(y)\log{\tilde{p}_\rv{Y}(y)}}.
    \label{eq:I:continuity:proof:3}
\end{align}
Again, note that 
\begin{align}
\abs*{p_\rv{Y}(y) - \tilde{p}_\rv{Y}(y)}
&= \abs*{\sum_{x} p_\rv{X}(x)\cdot V_{\rv{Y}|\rv{X}}(y|x) - \tilde{p}_\rv{X}(x)\cdot V_{\rv{Y}|\rv{X}}(y|x)} \\
&\leq \sum_{x} \abs*{p_\rv{X}(x) - \tilde{p}_\rv{X}(x)} \cdot V_{\rv{Y}|\rv{X}}(y|x)
\leq \size*{\set{X}} \cdot \xi.
\end{align}
Hence,~\eqref{eq:I:continuity:proof:2} also holds in this case.
Combining~\eqref{eq:I:continuity:proof:2} with~\eqref{eq:I:continuity:proof:3}, we have 
\begin{align}
\abs*{D\infdiv*{p_\rv{X}\cdot V_{\rv{Y}|\rv{X}}}{p_\rv{X}\times p_\rv{Y}} - D\infdiv*{\tilde{p}_\rv{X}\cdot V_{\rv{Y}|\rv{X}}}{\tilde{p}_\rv{X}\times \tilde{p}_\rv{Y}}}
&\leq \size*{\set{X}}\cdot \xi \cdot \log{\size*{\set{Y}}} + \size*{\set{Y}} \cdot \size*{\set{X}} \cdot \xi \cdot \log{\frac{1}{\size*{\set{X}}\cdot \xi}} \\
&= \text{RHS of~\eqref{eq:I:continuity:2:app}}. \qedhere
\end{align}
\end{proof}

\section{Proof of~Eq.~\ref{eq:exp:growth:pertubation}}\label{app:exp:growth:pertubation}
We restate~\eqref{eq:exp:growth:pertubation} as the following lemma.
\begin{lemma*}[Eq.~\eqref{eq:exp:growth:pertubation} restated]
Let $\{a_n\}_{n\in\mathbb{N}}$ be a positive sequence such that $\{\frac{1}{n}\log{a_n}\}_{n\in\mathbb{N}}$ is convergent.
For any $r, \delta>0$, it holds that 
\begin{equation}
\lim_{n\to\infty} \frac{1}{n} \log{a_n} = \lim_{n\to\infty} \frac{1}{n} \log{\left(a_n + \exp\left(-e^{n\cdot\delta} + e^{-n\cdot r}\right)\right)}.\tag{\ref{eq:exp:growth:pertubation}, repeated}
\end{equation}
\end{lemma*}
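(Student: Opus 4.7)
The plan is to denote the perturbation by $b_n \defeq \exp(-e^{n\delta} + e^{-nr})$ and exploit the fact that its normalized logarithm
\[
\tfrac{1}{n}\log b_n = \tfrac{1}{n}\left(-e^{n\delta} + e^{-nr}\right)
\]
diverges to $-\infty$ doubly exponentially fast. In particular, $b_n$ is negligible compared with any sequence whose logarithm grows at most linearly in $n$. I would then split into cases according to the value of $L \defeq \lim_{n\to\infty} \tfrac{1}{n}\log a_n$, which exists in the extended reals by assumption.

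First, if $L$ is finite (the main case, arising in the applications), I would fix an arbitrary $\epsilon \in (0,1)$ and invoke the hypothesis to get $a_n \geq e^{n(L-\epsilon)}$ for all sufficiently large $n$. Writing
\[
\tfrac{1}{n}\log(a_n + b_n) = \tfrac{1}{n}\log a_n + \tfrac{1}{n}\log\left(1 + \tfrac{b_n}{a_n}\right),
\]
I would bound $b_n/a_n \leq \exp\bigl(-e^{n\delta} + e^{-nr} - n(L-\epsilon)\bigr) \to 0$ (doubly exponentially), so the second term tends to $0$ and the two limits coincide.

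Second, if $L = -\infty$, I would use the elementary sandwich $a_n \leq a_n + b_n \leq 2\max(a_n,b_n)$ to obtain
\[
\tfrac{1}{n}\log a_n \;\leq\; \tfrac{1}{n}\log(a_n + b_n) \;\leq\; \tfrac{\log 2}{n} + \max\!\left\{\tfrac{1}{n}\log a_n,\, \tfrac{1}{n}\log b_n\right\}.
\]
Since both $\tfrac{1}{n}\log a_n$ and $\tfrac{1}{n}\log b_n$ tend to $-\infty$, a squeeze forces $\tfrac{1}{n}\log(a_n+b_n) \to -\infty = L$ as well. The remaining case $L = +\infty$ would be handled analogously (or ruled out in applications, since $\epsilon^{\fnc{NS}} \leq 1$ and hence $a_n \leq 1$ throughout the paper).

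There is no real obstacle here; the content is routine once one observes that $\exp(-e^{n\delta})$ decays faster than every single-exponential $e^{-nc}$. The only minor subtlety is that $L$ may be $-\infty$, which prevents a direct application of the $\log(1+x)$ expansion around zero and motivates the case split above.
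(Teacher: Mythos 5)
Your proof is correct and rests on the same observation as the paper's: the perturbation $b_n = \exp(-e^{n\delta}+e^{-nr})$ decays doubly exponentially and is therefore dominated by $a_n \geq e^{n(L-\epsilon)}$ for large $n$. The paper phrases this as $a_n + b_n \leq 2a_n$ eventually (hence a $\frac{\log 2}{n}$ correction), while you write $\log(a_n+b_n) = \log a_n + \log(1 + b_n/a_n)$ and let the second term vanish; these are the same argument in slightly different clothing, and your handling of $L = \pm\infty$ is extra care not required by the hypothesis (which assumes a finite limit) but not incorrect.
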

\begin{proof}
Denote $L\defeq \lim_{n\to\infty} \frac{1}{n} \log{a_n}$.
Let $\eta\in(0,\abs*{L}/2)$ be picked arbitrarily.
There exists $N(\eta)\in\mathbb{N}$ such that $L-\eta \leq \frac{1}{n} \log{a_n} \leq L+\eta$ for all $n\geq N(\eta)$.
Furthermore, there must also exist some $N'(\eta)$ and $N''$ such that $e^{n\cdot\delta}\geq -2n\cdot (L-\eta)$ and $e^{-n\cdot r} \leq \frac{1}{2}\cdot e^{n\cdot \delta}$ for $n\ge N'(\eta)$ and $n\ge N''$, respectively.
Thus, for all $n\geq\max\{N(\eta),N'(\eta), N''\}$, it holds that
\begin{align}
\frac{1}{n} \log{\left(a_n + \exp\left(-e^{n\cdot\delta}+e^{-n\cdot r}\right)\right)}
&\leq \frac{1}{n} \log{\left(a_n + \exp\left(n\cdot (L-\eta)\right)\right)}\\
&\leq \frac{1}{n} \log{\left(2\cdot a_n \right)}
= \frac{1}{n}\log{a_n} + \frac{\log{2}}{n} \\
&\leq L + \eta + \frac{\log{2}}{n}.
\end{align}
Let $n$ tend to infinity, we have
\begin{equation}
\limsup_{n\rightarrow \infty} \frac{1}{n} \log{\left(a_n + \exp\left(-e^{n\cdot\delta}+e^{-n\cdot r}\right)\right)} 
\leq L+\eta.
\end{equation}
Since the above holds for positive $\eta$ that are arbitrarily close to $0$, it must hold that 
\begin{equation}
\limsup_{n\rightarrow \infty} \frac{1}{n} \log{\left(a_n + \exp\left(-e^{n\cdot\delta}+e^{-n\cdot r}\right)\right)} \leq L.
\end{equation}
On the other hand, since $\exp(-e^{n\cdot\delta}+e^{-n\cdot r})>0$, it is immediate that 
\begin{equation}
\liminf_{n\rightarrow \infty} \frac{1}{n} \log{\left(a_n + \exp\left(-e^{n\cdot\delta}+e^{-n\cdot r}\right)\right)} \geq L.
\end{equation}
Finally,~\eqref{eq:exp:growth:pertubation} is the result of combining the above two inequalities.
\end{proof}


\end{document}